\tikzstyle directed=[postaction={decorate,decoration={markings,
    mark=at position .65 with {\arrow[arrowstyle]{latex}}}}]
\tikzstyle arrowstyle=[scale=1]
\newtheorem{theorem}{Theorem}
\newtheorem{lemma}{Lemma}
\newtheorem{proposition}{Proposition}
\theoremstyle{definition}
\newtheorem{definition}{Definition}
\newtheorem*{example}{Example} 
\newtheorem{corollary}{Corollary}
\newtheorem{remark}{Remark}
\DeclareMathOperator{\sgn}{\operatorname{sgn}}
\newcommand{\GL}{\operatorname{GL}}
\newcommand{\Wr}{\operatorname{Wr}}
\newcommand{\N}{\mathbb{N}}
\newcommand{\Z}{\mathbb{Z}}
\newcommand{\C}{\mathbb{C}}
\newcommand{\bbeta}{{\boldsymbol{\beta}}}
\newcommand{\bmu}{{\boldsymbol{\mu}}}
\newcommand{\bpi}{{\boldsymbol{\pi}}}
\newcommand{\fB}{\mathfrak{B}}
\newcommand{\fS}{\mathfrak{S}}
\newcommand{\fm}{\mathfrak{m}}
\newcommand{\fh}{\mathfrak{h}}
\newcommand{\fs}{\mathfrak{s}}
\newcommand{\fp}{\mathfrak{p}}
\newcommand{\supth}{{}^{\rm{th}}}
\newcommand{\PIV}{$\mathrm{P}_{\mathrm{IV}}\,\,$}
\newcommand{\PV}{$\mathrm{P}_{\mathrm{V}}\,\,$}
\newcommand{\cF}{\mathcal{F}}
\newcommand{\cM}{\mathcal{M}}
\newcommand{\cZ}{\mathcal{Z}}
\newcommand{\cS}{\mathcal{S}}
\newcommand{\hcZ}{\hat{\cZ}}
\newcommand{\hU}{\hat{U}}
\newcommand{\htau}{\hat{\tau}}
\newcommand{\hbmu}{{\hat{\bmu}}}
\newcommand{\ha}{\hat{a}}
\newcommand{\hh}{\hat{H}}
\newcommand{\hw}{\hat{w}}
\newcommand{\hq}{\hat{q}}
\newcommand{\hatt}{\hat{t}}
\newcommand{\MGH}{\operatorname{GH}}
\newcommand{\MO}{\operatorname{O}}
\renewcommand{\th}{\tilde{H}}
\newcommand{\h}{H}
\newcommand{\Pfour}{{\rm P_{IV}}}
\renewcommand{\boxdot}{{\ \clap{\raise0.25ex\hbox{$\bullet$}}\clap{$\square$}\ }}
\newcommand{\emptybox}{\hbox{$\square$}}
\begin{document}

\thispagestyle{empty}

\title{Rational solutions of Painlev\'e systems.}
\author{David G\'omez-Ullate}
\address{Escuela Superior de Ingenier\'ia, Universidad de C\'adiz, 11519 Puerto Real, Spain.}
 \address{Escuela Superior de Ingenier\'ia, Universidad de C\'adiz,  Avda. Universidad de C\'adiz, Campus Universitario de Puerto Real, 11519, Spain.}
\author{Yves Grandati}
\address{ Laboratoire de Physique et Chimie Th\'eoriques, Universit\'e de Lorraine, 57078 Metz, Cedex 3, France.}
\author{Robert Milson}
\address{Department of Mathematics and Statistics, Dalhousie University,
  Halifax, NS, B3H 3J5, Canada.}
\email{david.gomez-ullate@icmat.es, grandati@univ-metz.fr,   rmilson@dal.ca}

\begin{abstract}
  \noindent
  Although the solutions of Painlev\'e equations are transcendental in
  the sense that they cannot be expressed in terms of known elementary
  functions, there do exist rational solutions for specialized values
  of the equation parameters.  A very successful approach in the study
  of rational solutions to Painlev\'e equations involves the
  reformulation of these scalar equations into a symmetric system of
  coupled, Riccati-like equations known as dressing chains.  Periodic
  dressing chains are known to be equivalent to the $A_N$-Painlev\'e
  system, first described by Noumi and Yamada.  The Noumi-Yamada
  system, in turn, can be linearized as using bilinear equations and
  $\tau$-functions; the corresponding rational solutions can then be
  given as specializations of rational solutions of the KP hierarchy.

  The classification of rational solutions to Painlev\'e equations and
  systems may now be reduced to an analysis of combinatorial objects
  known as Maya diagrams.  The upshot of this analysis is a an
  explicit determinental representation for rational solutions in
  terms of classical orthogonal polynomials.  In this paper we
  illustrate this approach by describing Hermite-type rational
  solutions of Painlev\'e of the Noumi-Yamada system in terms of
  cyclic Maya diagrams.  By way of example we explicitly construct
  Hermite-type solutions for the \PIV, \PV equations and the $A_4$
  Painlev\'e system.
\end{abstract}

\maketitle

\section{Introduction}

The defining property of the six nonlinear second order Painlev\'e
equations ${\rm P_I,\dots,P_{VI}}$ is that their solutions have fixed
monodromy; that is all movable singularities are poles.  The resulting
Painlev\'e trascendents are now considered to be the nonlinear
analogues of special functions
\cite{clarkson2003painleve,fokas2006painleve}. Although these
functions are transcendental in the sense that they cannot be
expressed in terms of known elementary functions, Painlev\'e equations
also possess special families of solutions that, for special values of
the parameters, can be expressed via known special functions such
hypergeometric functions or even rational functions
\cite{airault1979rational}.

Rational solutions of ${\rm P_{II}}$ were studied by Yablonskii
\cite{yablonskii1959rational} and Vorob'ev \cite{vorob1965rational},
in terms of a special class of polynomials that are now named after
them. For ${\rm P_{III}}$, classical solutions have been considered in
\cite{murata1995classical}.  Okamoto \cite{okamoto1987studies3}
obtained special polynomials associated with some of the rational
solutions of the fourth Painlev\'e equation (\PIV) 
\begin{equation}\label{eq:P4scalar}
  y'' = \frac{1}{2y}(y')^2+\frac{3}{2} y^3 + 4t y^2
  +2(t^2-a)y+\frac{b}{y} ,\quad y=y(t),
\end{equation}
with $a$ and $b$ constants, which are analogous to the
Yablonskii--Vorob'ev polynomials.  Noumi and Yamada
\cite{noumi1999symmetries} generalized Okamoto's results and expressed
all rational solutions of \PIV\ in terms of two types of special
polynomials, now known as the \textit{generalized Hermite polynomials}
and \textit{generalized Okamoto polynomials}, both of which maybe
given as determinants of sequences of Hermite polynomials.

A very successful approach in the study of rational solutions to
Painlev\'e equations has been the set of geometric methods developed
by the Japanese school, most notably by Noumi and Yamada,
\cite{noumi2004painleve}. The core idea is to write the scalar
equations as a set of first order coupled nonlinear system of
equations. For instance, the fourth Painlev\'e \eqref{eq:P4scalar}
equation $\Pfour$ is equivalent to the following autonomous system of
three first order equations
\begin{align}\label{eq:P4system}
f_0'&= f_0(f_1-f_2) + \alpha_0, \nonumber\\
f_1'& = f_1(f_2-f_0) + \alpha_1,\\
f_2'& = f_2(f_0-f_1) + \alpha_2, \nonumber 
\end{align}
subject to the condition
\begin{equation}\label{eq:P4normalization}
(f_0+f_1+f_2)'=\alpha_0+\alpha_1+\alpha_2=1.
\end{equation}
Once this equivalence is shown, it is clear that the symmetric form of
$\Pfour$ \eqref{eq:P4system}, sometimes referred to as $\rm{s}\Pfour$
is easier to analyze. In particular, \cite{noumi1999symmetries} showed
that the system \eqref{eq:P4system} possesses a symmetry group of
B\"acklund transformations acting on the tuple of solutions and
parameters $(f_0,f_1,f_2 |\alpha_0,\alpha_1,\alpha_2)$. This symmetry
group is the affine Weyl group $A_2^{(1)}$, generated by the operators
$\{ \mathbf{\pi},\textbf{s}_0, \textbf{s}_1, \textbf{s}_2\}$ whose
action on the tuple $(f_0,f_1,f_2 |\alpha_0,\alpha_1,\alpha_2)$ is
given by:

\begin{eqnarray}\label{eq:BT}
&&{\bf s}_k (f_j)=f_j-\frac{\alpha_k\delta_{k+1,j}}{f_k}+\frac{\alpha_k\delta_{k-1,j}}{f_k}, \nonumber\\
&&{\bf s}_k(\alpha_j)=\alpha_j-2\alpha_j\delta_{k,j}+\alpha_k(\delta_{k+1,j}+\delta_{k-1,j}),\\
&&{\bf \pi}(f_j)=f_{j+1},\qquad  {\bf \pi}(\alpha_j)=\alpha_{j+1} \nonumber
\end{eqnarray}
where $\delta_{k,j}$ is the Kronecker delta and $j,k=0,1,2 \mod(3)$.
The technique to generate rational solutions is to first identify a
number of very simple rational \textit{seed solutions}, and then
successively apply the B\"acklund transformations \eqref{eq:BT} to
generate families of rational solutions.

This is a beautiful approach which makes use of the
hidden group theoretic structure of transformations of the equations,
but the solutions built by dressing seed solutions are not very explicit,
in the sense that one needs to iterate a number of B\"acklund transformations \eqref{eq:BT} on the functions and parameters in order to obtain the desired solutions. Questions such as determining the number of zeros or poles of a given solution constructed in this manner seem very difficult to address. For this reason, alternative representations of the rational solutions have also been investigated, most notably the determinental representations \cite{kajiwara1996determinant,
  kajiwara1998determinant}.

The system of first order equations \eqref{eq:P4system} admits a
natural generalization to any number of equations, and it is known as
the $A_{N}$-Painlev\'e or the Noumi-Yamada system. The higher order
Painlev\'e system, exhibited below in \eqref{eq:Aeven} and
\eqref{eq:Aodd}, is considerably simpler (for reasons that will be
explained later), and it is the one we will focus on this paper.  The
symmetry group of this higher order system is the affine Weyl group
$A_{N}^{(1)}$, acting by B\"acklund transformations as in
\eqref{eq:BT}. The system has the Painlev\'e property, and thus can be
considered a proper higher order generalization of $\rm{s}\Pfour$
\eqref{eq:P4system}, which corresponds to $N=2$.

The next higher order system belonging to this hierarchy is the
$A_3$-Painlev\'e system, which is known to be equivalent to the scalar
\PV equation.  Rational solutions of \PV were classified using direct
analysis by Kitaev, Law and McLeod \cite{kitaev1994rational}.  These
rational solutions can be described in terms of generalized Umemura
polynomials \cite{umemura1996special,masuda2002determinant}, which
admit a description in terms of Schur functions
\cite{noumi1998umemura}.  A more general determinental representations
based on universal characters was given by Tsuda
\cite{tsuda2005universal}.  In general, these determinants are
constructed Laguerre polynomials, but for some particular values these
degenerate to Hermite polynomials and fit into the framework of the
present paper.

The $A_4$-Painlev\'e system cannot be reduced to a scalar equation,
and so represents a genuine generalization of the classic Painlev\'e
equations.  Special solutions have been studied by Clarkson and
Filipuk \cite{filipuk2008symmetric}, who provide several classes of
rational solutions via an explicit Wronskian representation, and by
Matsuda \cite{matsuda2012rational}, who uses the classical approach to
identify the set of parameters that lead to rational
solutions. However, a complete classification and explicit description
of the rational solutions of $A_{2n}$-Painlev\'e for $n\geq2$ is, to
the best of our knowledge, still not available in the literature.

Of particular interest are the special polynomials associated to these
rational solutions, whose zeros and poles structure shows extremely
regular patterns in the complex plane, and have received a
considerable amount of study
\cite{okamoto1987studies3,noumi1999symmetries,umemura1997solutions,
  fukutani2000special,clarkson2006special}. Some of these polynomial
families are known as generalized Hermite, Okamoto or Umemura
polynomials, and they can be described as Wronskian determinants of
given sequences of Hermite polynomials. We will show that all these
polynomial families are only particular cases of a larger one.

Our approach for describing rational solutions to the Noumi-Yamada
system differs from the one used by the Japanese school in that it
makes no use of symmetry groups of B\"acklund
transformations. Instead, we will be influenced by the approach of
Darboux dressing chains introduced by the Russian school
\cite{adler1994nonlinear,veselov1993dressing}, which has received
comparatively less attention in connection to Painlev\'e systems, and
which makes use of the notion of trivial monodromy
\cite{oblomkov1999monodromy}.  Our interest in rational solutions of
Painlev\'e equations follows from the recent advances in the theory of
exceptional polynomials
\cite{gomez2009extended,gomez2010extension,gomez2018durfee}, and
especially exceptional Hermite polynomials \cite{gomez2013rational}.
Nonetheless we strive to maintain the connection to the theory of
integrable systems by employing the concepts of a Maya diagram and of
bilinear relations \cite{miwa2000solitons}.

The paper is organized as follows: in Section~\ref{sec:dressing} we
introduce the equations for a dressing chain of Darboux
transformations of Schr\"odinger operators and prove that they are
equivalent to the Noumi-Yamada system. These results are well known
\cite{adler1994nonlinear} but recalling them is useful to fix notation
and make the paper self contained.  In Section~\ref{sec:Maya} we
introduce the class of Hermite-type $\tau$ functions and their
representations via Maya diagrams and pseudo-Wronskian
determinants. We introduce the key notion of \textit{cyclic Maya
  diagrams} and reformulate the problem of classifying rational
solutions of the Noumi-Yamada system as that of classifying cyclic
Maya diagrams. In Section~\ref{sec:Mayacycles} we introduce the notion
of genus and interlacing for Maya diagrams which allows us to achieve
a complete classification of $p$-cyclic Maya diagrams for any period
$p$. Finally, we apply the theory to exhibit rational solutions of the
$A_2, A_3, A_4$ systems in Section~\ref{sec:A4} and we write out
explicitly these solutions using the representation developed in the
previous sections.




While the Japanese school has built a beautiful framework around
Painlev\'e equations, including reductions of the KP hierarchy in Sato
theory, for the particular task of describing rational solutions of
higher order Painlev\'e equations, we find the approach of cyclic Maya
diagrams to be more direct, simple and explicit.

\section{ Dressing chains and Painlev\'e systems}\label{sec:dressing}

A factorization chain is a sequence of Schr\"odinger operators
connected by Darboux transformations.  By replacing the second-order
Schr\"odinger equations with first order Riccati equations one obtains
a closely related called a dressing chain.  The theory of dressing
chains was developed by Adler \cite{adler1994nonlinear}, Veselov and
Shabat \cite{veselov1993dressing}. The connection to Painlev\'e
equations was already noted by the just-mentioned authors, and further
developed by others \cite{takasaki2003spectral,bermudez2012complexb}.

Let us recall the well-known connection between Riccati and
Schr\"odinger equations. An elementary calculation shows that a
function $w(z)$ that satisfies a Riccati equation
\begin{equation}
  \label{eq:wUricatti}
 w' + w^2+\lambda = U 
\end{equation}
is the log-derivative of a solution $\psi(z)$ of the corresponding
Schr\"odinger equation:
\begin{equation}
  \label{eq:Lipsii}
  -\psi'' + U\psi =\lambda \psi,\qquad w =  \frac{\psi'}{\psi}.
\end{equation}

The Riccati equation \eqref{eq:wUricatti} is equivalent to the
factorization relation
\begin{equation}
  \label{eq:Lwfac}
 -D^2 + U = (D+w)(-D+w)+\lambda 
\end{equation}
It follows that a Schrodinger operator $-D^2+U$ admits a factorization
\eqref{eq:Lwfac} if and only if $w$ is the log-derivative of a formal
eigenfunction of $L$ with eigenvalue $\lambda$.

A Darboux transformation is the transformation $U\mapsto \hU$ where
\[ -D^2+\hU = (D-w)(-D-w)+\lambda \] is a second-order operator
obtained by interchanging the factors in \eqref{eq:Lwfac}.
Equivalently, the correspondence $U\mapsto \hU$ may be engendered by
the transformation $w\mapsto -w$ in \eqref{eq:Lwfac}.

Consider a doubly infinite sequence of Schr\"odinger operators
$-D^2 + U_i , i\in \Z$ where neighbouring operators are related by a
Darboux transformation
\begin{equation}
  \label{eq:Dxform}
  \begin{aligned}
    -D^2+U_i &= (D + w_i)(-D + w_i)+\lambda_i, \\
    -D^2+U_{i+1} &= (-D + w_i)(D + w_i)+\lambda_i.
  \end{aligned}
\end{equation}
Since functions $w_i$ are solutions of the Riccati equations
\begin{equation}\label{eq:Riccati}
 w_i' + w_i^2 +\lambda_i = U_i,\quad -w_i' +w_i^2+\lambda_i = U_{i+1},
 \end{equation}
 the above potentials are related by
\begin{eqnarray}
  U_{i+1} &=& U_i - 2 w'_i, \label{eq:Uplus1}\\
  U_{i+n} &=&U_i - 2 \left( w'_i+ \cdots + w'_{i+n-1}\right),\quad
  n\geq 2 \label{eq:Uplusn}
\end{eqnarray}
If we eliminate the potentials in \eqref{eq:Riccati} and set
\begin{equation}
  \label{eq:alphaidef}
  a_i =  \lambda_{i} - \lambda_{i+1}
\end{equation}
we obtain a system of coupled differential equations called the doubly
infinite dressing chain:
\begin{equation}
  \label{eq:drchain}
  (w_i + w_{i+1})' + w_{i+1}^2 - w_i^2 = a_i ,\quad i\in \Z
\end{equation}

If we impose a cyclic condition
\begin{equation}\label{eq:shift}
  U_{i+p} = U_i+\Delta,\quad i\in \Z,
  \qquad p\in \N,\; \Delta \in \C
 \end{equation}
 on the potentials of the above chain, we obtain a finite-dimensional
 system of ordinary differential equations.  If this holds, then
 necessarily $w_{i+p}=w_i$, $\alpha_{i+p}=\alpha_i$, and
\begin{equation}
  \label{eq:Deltasumalpha}
 \Delta= -(a_0 + \cdots + a_{p-1}). 
\end{equation}
Going forward, we
impose the non-degeneracy assumption that
\[ \Delta \neq 0 \] Degenerate dressing chains with $\Delta=0$ are
more closely related to elliptic functions \cite{veselov1993dressing} and
will not be considered here.

\begin{definition}
  A solution to the $p$-cyclic dressing chain with shift $\Delta$ is a
  sequence of $p$ functions $w_0,\ldots, w_{p-1}$ and complex numbers
  $a_0,\ldots, a_{p-1}$ that satisfy the following coupled
  Riccati-like equations:
\begin{equation}
  \label{eq:wachain}
  (w_i + w_{i+1})' + w_{i+1}^2 - w_i^2 = a_i ,\qquad
  i=0,1,\ldots, p-1 \mod p  
\end{equation}
subject to the condition \eqref{eq:Deltasumalpha}.  
\end{definition}

The cyclic chain has a number of evident symmetries: the
reversal symmetry
\begin{equation}
  \label{eq:reversal2}
  \hw_i = -w_{-i},\quad \ha_i = -a_{-i};
\end{equation}
the cyclic symmetry
\begin{equation}
  \label{eq:cyclicgen}
  \hw_i = w_{i + 1},\quad \ha_i= a_{i+1};
\end{equation}
and the scaling symmetry
\begin{equation}
  \label{eq:wscale}
  \hw_i(z) = k w_i(kz),\quad \ha_i = k^2
  a_i,\quad k\neq 0.
\end{equation}
In the
classification of solutions to \eqref{eq:wachain} it will be
convenient to regard solutions related by reversal, cyclic, and
scaling symmetries as being equivalent.

The $p$-cyclic dressing chain is closely related to higher order
Painlev\'e systems of type $A_{N},\; N=p-1$ introduced by Noumi and
Yamada in \cite{noumi1998higher}. In the even case of $N=2n$, the
Noumi-Yamada system has the form
\begin{equation}
  \label{eq:Aeven}
  f_i' = \sum_{j=1}^{p-1} (-1)^{j+1} f_i f_{i+j} +\alpha_i,\quad
  i=0\ldots, 2n \mod 2n+1
\end{equation}
In the odd case of $N=2n-1$, the Noumi-Yamada system has a more
complicated form:
\begin{equation}
  \label{eq:Aodd}
  xf_i' = f_i\left(1    - 2\sum_{k=1}^{n-1}
    \alpha_{i+2k} + 2\sum_{j=1}^{n}\sum_{k=1}^{n-1} \sgn(2j-1-2k)
    f_{2j+i-1} f_{2k+i} \right) + 2\alpha_i \sum_{k=1}^{n-1} f_{i+2k},
\end{equation}
where $i=0,\ldots, 2n-1\mod 2n$ and $f_i = f_i(x)$. In both cases, the
parameters $\alpha_0,\ldots, \alpha_{N}$ are subject to the constraint
\[ \alpha_0+ \cdots \alpha_{N} = 1.\]
\begin{proposition}
  \label{prop:wtof}
  The $A_{2n}$ and $A_{2n-1}$ Noumi-Yamada systems \eqref{eq:Aeven}
  \eqref{eq:Aodd} are related to the $p$-cyclic dressing chain
  \eqref{eq:wachain} by the following change of variables:
  \begin{equation}
    \label{eq:ffromw}
    -\sqrt{\Delta} f_i(x) =w_{i}(z) + w_{i+1}(z),\quad \alpha_i =
    -\frac{a_i}{\Delta},\qquad z=\frac{x}{\sqrt{\Delta}},
  \end{equation}
  where $i=0,\dots, p-1 \mod p$ and where $p=2n+1$ in the first case,
  and $p=2n$ in the second case.
\end{proposition}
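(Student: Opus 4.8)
The plan is to substitute the change of variables \eqref{eq:ffromw} into each Noumi--Yamada system and reduce the statement to a polynomial identity among the $w_i$. The parameter and normalization parts are free: $\Delta = -(a_0+\cdots+a_{p-1})$ forces $\alpha_0+\cdots+\alpha_N = -\frac1\Delta\sum_i a_i = 1$. For the dynamics, writing $'$ for $d/dz$, the rescaling $z=x/\sqrt\Delta$ gives $df_i/dx = -\frac1\Delta(w_i+w_{i+1})'$, and \eqref{eq:wachain} gives $(w_i+w_{i+1})' = a_i - w_{i+1}^2 + w_i^2$, so in every case
\[
  \frac{df_i}{dx} \;=\; \alpha_i + \frac1\Delta\bigl(w_{i+1}^2 - w_i^2\bigr) \;=\; \alpha_i - \frac1{\sqrt\Delta}\,f_i\,(w_{i+1}-w_i).
\]
Everything now hinges on rewriting $w_{i+1}-w_i$ through the $f_j$.

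For the even case $p=2n+1$ I would prove the identity
\[
  w_{i+1}^2 - w_i^2 \;=\; \sum_{j=1}^{p-1}(-1)^{j+1}(w_i+w_{i+1})(w_{i+j}+w_{i+j+1}), \qquad w_{i+p}=w_i,
\]
by pulling out the factor $w_i+w_{i+1}$ and observing that the alternating sum $\sum_{j=1}^{p-1}(-1)^{j+1}(w_{i+j}+w_{i+j+1})$ telescopes: the terms $w_{i+2},\dots,w_{i+p-1}$ each occur twice with opposite sign and cancel, leaving the endpoints $+w_{i+1}$ (from $j=1$) and $-w_i$ (from $j=p-1$, using $(-1)^{p}=-1$ since $p$ is odd), so the sum is $w_{i+1}-w_i$. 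Inserting this into the displayed equation and using $w_i+w_{i+1}=-\sqrt\Delta f_i$ turns it into \eqref{eq:Aeven}. For the converse, since $p$ is odd the averaging map $(w_i)\mapsto(w_i+w_{i+1})$ is invertible (inverse $w_k = \frac12\sum_{i=0}^{p-1}(-1)^i(w_{i+k}+w_{i+k+1})$), so a solution of \eqref{eq:Aeven} with $\sum\alpha_i=1$ lifts, uniquely up to the scaling symmetry \eqref{eq:wscale}, to a $p$-cyclic chain, and every step reverses.

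The odd case $p=2n$ is where the real work lies, because the averaging map now has the alternating vector in both its kernel and its cokernel. The cokernel side forces $\sum_i(-1)^i f_i=0$; moreover summing \eqref{eq:wachain} gives $2(\sum_i w_i)'=\sum_i a_i$, so after a translation in $z$ one may normalize $\sum_i w_i=-\frac\Delta2 z$, equivalently $f_0+f_2+\cdots+f_{p-2}=f_1+f_3+\cdots+f_{p-1}=\frac x2$ and $\sum_i f_i = x$. The kernel side means \eqref{eq:wachain} no longer puts $w'$ in closed form: the alternating combination of \eqref{eq:wachain} reduces to the purely algebraic first integral $\sum_i(-1)^i w_i^2=\mathrm{const}$, and requiring this to persist under the flow pins down the remaining component of $w'$, a step that divides by $\sum_i w_i\propto x$. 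I would then solve for $w_{i+1}-w_i$ from the $f_j$, the even/odd partial-sum normalizations and the first integral, substitute into the displayed equation for $df_i/dx$, and match the outcome with \eqref{eq:Aodd}: the factor $x$ on its left-hand side is $\sum_i f_i$ emerging when the denominator is cleared, and its cubic terms arise because the multiplier solved for is quadratic in the $w_i$. This matching, heavier than the even case, is the main obstacle; since the equivalence is classical one may instead invoke Adler's derivation \cite{adler1994nonlinear}, and a final check that the first integral is preserved by the chain makes the reverse lift consistent.
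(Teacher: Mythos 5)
Your treatment of the even system $A_{2n}$ ($p=2n+1$ odd) is correct and is essentially the paper's own argument: rewriting the chain as $f_i' = d_i f_i + \alpha_i$ with $d_i \propto w_i - w_{i+1}$ and using the telescoping identity \eqref{eq:diodd} is exactly what the paper does; your converse remark about invertibility of the averaging map for odd $p$ is a bonus the paper does not bother with, and the parameter bookkeeping is right.

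The gap is in the odd system $A_{2n-1}$ ($p=2n$ even), and it is precisely the part you flag as ``the main obstacle'' and then do not carry out. The paper's proof of this half rests on a nontrivial polynomial identity, stated as a separate Lemma (equation \eqref{eq:Aodd1}): the quantity $(w_{i+1}-w_i)(w_1+\cdots+w_{2n}) + (w_i^2 - w_{i+1}^2 + \cdots - w_{i+2n-1}^2)$ equals an explicit signed quadratic form in the combinations $w_j+w_{j+1}$, i.e.\ in the $f_j$. Combined with the two first integrals \eqref{eq:wsumDelta} and \eqref{eq:Aodd2}, this is exactly what expresses $x\,d_i(x)$ through the $f_j$ and the $\alpha_j$ and produces \eqref{eq:Aodd}. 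Your outline correctly identifies the right structure --- $(w_{i+1}-w_i)$ alone is not a function of the $f_j$ because of the alternating kernel, but its product with $\sum_j w_j \propto x$ is, modulo the alternating first integral. However, the identity itself, whose verification occupies the multi-line computation in the paper's Lemma, is never established; ``solve \dots and match the outcome with \eqref{eq:Aodd}'' is the entire content of the hard half of the proposition, and deferring it to Adler is a citation, not a proof. A smaller quibble: you do not need to ``require the first integral to persist under the flow'' --- the alternating combination of the chain equations \eqref{eq:wachain} yields \eqref{eq:Aodd2} directly, since the derivative terms telescope to zero when $p$ is even; the only genuinely missing ingredient is the algebraic identity \eqref{eq:Aodd1}.
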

\begin{proof}
  The proof for the case $p=2n+1$ is quite direct. Set
  \[ d_i(x) = K(w_{i}(Kx) - w_{i+1} (Kx)), \quad K =
    \frac{1}{\sqrt{\Delta}},\] which allows us to rewrite relation
  \eqref{eq:drchain} as
  \begin{equation}
    \label{eq:fdchain}
    f_i'   = d_i f_i +\alpha_i.  
  \end{equation}
  Then,  observe that because $p$ is odd,
  \begin{equation}
    \label{eq:diodd} 
    d_i=\sum_{j=1}^{p-1} (-1)^{j+1} f_{i+j}.
  \end{equation}
  This transforms \eqref{eq:fdchain} into \eqref{eq:Aeven}.
\end{proof}

If $p=2n$ is even, the linear relation \eqref{eq:diodd} no longer
holds. Rather we have the following quadratic relation.

\begin{lemma}
  For each $i=1,\ldots, 2n \mod 2n$ we have
  \begin{equation}
    \label{eq:Aodd1}
    \begin{split}
      &      (w_{i+1}-w_i)(w_1+ \cdots + w_{2n})+ (w_i^2- w_{i+1}^2 + \cdots -
      w_{i+2n-1}^2) =\\
      &\quad =\sum_{j=1}^{n}\sum_{k=1}^{n-1} \sgn(2k+1-2j)
  (w_{2j+i-1}+ w_{2j+i}) (w_{2k+i}+ w_{2k+i+1}).
  \end{split}
\end{equation}
\end{lemma}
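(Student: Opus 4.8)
The plan is to treat \eqref{eq:Aodd1} as what it is: a purely algebraic identity in the functions $w_m$ and their $2n$-periodicity, making no reference to the chain equations \eqref{eq:wachain}. So I would verify it as a polynomial identity in $w_i,\dots,w_{i+2n-1}$, with indices read cyclically so that $w_{m+2n}=w_m$. First I would note that the shift $w_m\mapsto w_{m+1}$ carries the $i$-th instance of the identity to the $(i+1)$-th while fixing the total sum $S:=w_i+w_{i+1}+\cdots+w_{i+2n-1}$, so it suffices to fix one value of $i$ and work with indices modulo $2n$.

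Next I would simplify the right-hand side. Write $u_j=w_{i+2j-1}+w_{i+2j}$ for $j=1,\dots,n$ and $v_k=w_{i+2k}+w_{i+2k+1}$ for $k=1,\dots,n-1$. Since $2k+1-2j$ never vanishes, with $\sgn(2k+1-2j)=+1$ exactly when $k\ge j$, the double sum on the right of \eqref{eq:Aodd1} equals $\sum_{k\ge j}u_jv_k-\sum_{k<j}u_jv_k=\bigl(\sum_j u_j\bigr)\bigl(\sum_k v_k\bigr)-2\sum_{1\le k<j\le n}u_jv_k$. Two one-line telescoping sums give $\sum_{j=1}^n u_j=S$ and $\sum_{k=1}^{n-1}v_k=S-w_i-w_{i+1}$, so the right side of \eqref{eq:Aodd1} becomes $S(S-w_i-w_{i+1})-2\sum_{k<j}u_jv_k$. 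Substituting this back, moving the term $(w_{i+1}-w_i)S$ to the right, and expanding $S^2$ together with the alternating sum of squares $w_i^2-w_{i+1}^2+\cdots-w_{i+2n-1}^2$, the whole statement reduces to the combinatorial identity
\[
  \sum_{1\le k<j\le n} u_j v_k \;=\; \sum_{\substack{3\le m\le 2n-1\\ m\ \mathrm{odd}}} w_{i+m}^2
  \;+\;\sum_{\substack{0\le a<b\le 2n-1\\ 1\notin\{a,b\}}} w_{i+a}\,w_{i+b},
\]
i.e.\ the left-hand side must produce every square $w_{i+m}^2$ with $m$ odd and $m\neq1$, and every product $w_{i+a}w_{i+b}$ of two distinct $w$'s neither of which is $w_{i+1}$, each exactly once.

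Proving this identity is the crux, and the step I expect to need the most care. I would expand $u_jv_k=(w_{i+2j-1}+w_{i+2j})(w_{i+2k}+w_{i+2k+1})$ and classify the resulting monomials by the parities of their two offset-labels: the squares come only from the terms with $k=j-1$, where $w_{i+2j-1}$ meets itself, and contribute $\sum_{j=2}^{n}w_{i+2j-1}^2$; the odd--odd cross terms come only from $w_{i+2j-1}w_{i+2k+1}$ with $k\le j-2$; the even--even cross terms only from $w_{i+2j}w_{i+2k}$; and the even--odd cross terms from $w_{i+2j-1}w_{i+2k}$ together with $w_{i+2j}w_{i+2k+1}$. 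In each class one solves ``$a\in\{2k,2k+1\}$ and $b\in\{2j-1,2j\}$'' for the unique admissible pair $(j,k)$ with $1\le k<j\le n$ and checks that each monomial is produced exactly once; summing the four classes yields the right-hand side above, and a quick count ($4\binom n2$ monomials on the left versus $(n-1)+\binom{2n-1}{2}$ on the right) confirms nothing is missed. The only subtle points are the boundary case $k=j-1$, which supplies the squares and whose genuine cross terms must not be double counted, and the case $j=n$, where the label $2n$ wraps around to $0$ — this is precisely the mechanism producing every product that involves $w_i$. (As a sanity check, for $n=1$ both sides of \eqref{eq:Aodd1} vanish identically, and the cases $n=2,3$ can be verified by hand in a line or two.)
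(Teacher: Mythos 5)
Your proof is correct, and it organizes the computation differently from the paper's. The paper also treats \eqref{eq:Aodd1} as a direct polynomial identity, but it works entirely on the right-hand side: it splits the double sum according to the sign of $2k+1-2j$ into two triangular sums and then performs a chain of reindexings and telescopings until the expression collapses to the expanded left-hand side $w_{i+1}\sum_{a=2}^{2n-1}w_{i+a}-w_i\sum_{a=2}^{2n-1}w_{i+a}+\sum_{a=2}^{2n-1}(-1)^aw_{i+a}^2$. You instead symmetrize first, writing the signed double sum as $\bigl(\sum_j u_j\bigr)\bigl(\sum_k v_k\bigr)-2\sum_{k<j}u_jv_k$, evaluate the product of sums by the two telescopings $\sum_j u_j=S$ and $\sum_k v_k=S-w_i-w_{i+1}$, and thereby reduce the whole lemma to a single monomial-counting identity for $\sum_{k<j}u_jv_k$, which you verify by classifying monomials by parity of their offsets (with the squares arising from $k=j-1$ and the terms containing $w_i$ arising from the wrap-around $2n\equiv 0$) together with the count $4\binom{n}{2}=(n-1)+\binom{2n-1}{2}$. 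I checked the reduced identity and your classification; both are right, as is the identification $\sgn(2k+1-2j)=+1\iff k\ge j$. What your route buys is a cleaner target (a bijection between monomials) at the cost of a case analysis; the paper's route avoids the case analysis but requires a longer, more error-prone sequence of index shifts. Both are elementary verifications of the same identity, so the difference is one of bookkeeping rather than substance.
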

\begin{proof}
  The left side of \eqref{eq:Aodd1} expands to
  \begin{equation}
    \label{eq:oddproof1}
    \begin{aligned}
      \ w_{i+1}\!\! \sum_{a=2}^{2n-1} w_{i+a} - w_i
      \!\!\sum_{a=2}^{2n-1} w_{i+a}+ \sum_{a=2}^{n-1} (-1)^aw_{i+a}^2\\ 
    \end{aligned}
  \end{equation}
  The right side of \eqref{eq:Aodd1} expands to
  \begin{equation}
    \label{eq:oddproof2}
    \begin{aligned}
      &\sum_{1\leq j\leq k\leq n-1}\!\!\!\!\!
      (w_{i+2j-1}+w_{i+2j})(w_{i+2k}+w_{i+2k+1}) -\sum_{1\leq k< j\leq
        n}\!\!\!\!\!(w_{i+2j-1}+w_{i+2j})(w_{i+2k}+w_{i+2k+1})\\
      =&\sum_{1\leq j< k\leq n}\!\!\!\!\!
      (w_{i+2j-1}+w_{i+2j})(w_{i+2k-2}+w_{i+2k-1}) - \sum_{1\leq k<
        j\leq
        n}\!\!\!\!\!(w_{i+2j-1}+w_{i+2j})(w_{i+2k}+w_{i+2k+1})\\
      =&\sum_{1\leq j< k\leq n}\!\!\!\!\!
      (w_{i+2j-1}+w_{i+2j})(w_{i+2k-2}+w_{i+2k-1})
      -(w_{i+2k-1}+w_{i+2k})(w_{i+2j}+w_{i+2j+1})\\
      =& \sum_{1\leq j < k \leq n} \!\!\!\!\! (w_{i+2j-1}- w_{i+2j+1})
      w_{i+2k-1} +w_{i+2j}
      (w_{i+2k-2}-w_{i+2k})+w_{i+2j-1}w_{i+2k-2} - w_{i+2j+1}w_{i+2k}\\
      =& \sum_{k=2}^n (w_{i+1}-w_{i+2k-1}) w_{i+2k-1} +
      \sum_{j=1}^{n-1} w_{i+2j}(w_{i+2j}-w_{i+2n})+
      \sum_{k=1}^{n-1}w_{i+1}w_{i+2k} -
      \sum_{j=1}^{n-1}w_{i+2j+1}w_{i+2n}\\
      =& w_{i+1}\sum_{a=2}^{2n-1} w_{i+a} - w_i \sum_{a=2}^{2n-1}
      w_{i+a}+\sum_{a=2}^{2n-1} (-1)^a w_{i+a}^2,
    \end{aligned}
  \end{equation}
  which  matches \eqref{eq:oddproof1}.
\end{proof}
\begin{proof}[Proof of Proposition \ref{prop:wtof} continued.]
Every dressing chain has an obvious first integral, obtained by
summing \eqref{eq:wachain}:
\begin{equation}
  \label{eq:wsumDelta}
  w_1(z) + \cdots + w_{2n}(z)=  - \frac12 \Delta\, z\
\end{equation}
For $p=2n$, the even-cyclic dressing chain also has an additional
first integral --- obtained by taking an alternating sum of
\eqref{eq:wachain}:
\begin{equation}
  \label{eq:Aodd2}
  2(w_1^2-w_2^2 + \cdots - w_{2n}^2) = -a_1 + a_2 - \cdots + a_{2n}.
\end{equation}
Using  \eqref{eq:Aodd1} and  \eqref{eq:wsumDelta} we obtain
\[ x \, d_i(x) = 2\sum_{j=1}^{n}\sum_{k=1}^{n-1} \sgn(2k+1-2j)
  f_{2j+i-1}(x) f_{2k+i}(x)+ 2\sum_{j=0}^{2n-1} (-1)^j \alpha_{i+j}
\]
Relation \eqref{eq:fdchain} may now be rewritten as
\begin{equation}
  \label{eq:Aodd3}
  x f_i'(x) = 2f_i(x) \sum_{j=1}^{n}\sum_{k=1}^{n-1} \sgn(2j-1+2k)
  f_{2j+i-1}(x) f_{2k+i}(x)- 2f_i(x)\sum_{j=0}^{2n-1} (-1)^j
  \alpha_{i+j} + \alpha_i x 
\end{equation}
Since
\[\sum_{j=1}^n f_{2j-1}(x) =   \sum_{j=1}^{n} f_{2j}(x) =\frac12 x, \]
and since
\[ \sum_{j=1}^n \alpha_{2j-1} + \sum_{j=1}^n \alpha_{2j} = 1,\]
relation \eqref{eq:Aodd3} may be rewritten as \eqref{eq:Aodd}.
\end{proof}

The problem now becomes that of finding and classifying cyclic
dressing chains, sequences of Darboux transformations that reproduce
the initial potential up to an additive shift $\Delta$ after a fixed
given number of transformations.  The theory of exceptional
polynomials is intimately related with families of Schr\"odinger
operators connected by Darboux transformations
\cite{gomez2013conjecture,garcia2016bochner}.  Each of these
exceptional operators admits a bilinear formulation in terms of
$\tau$-functions which suggests a strong connection with integrable
systems theory, and which will be the basis of the development here.
Each $\tau$-function in this class can be indexed by a finite set of
integers, or equivalently by a Maya diagram, which becomes a very
useful representation to capture a notion of equivalence and relations
of the type \eqref{eq:shift}.

\section{Hermite $\tau$-functions
}\label{sec:Maya}

In this section we introduce Hermite-type $\tau$-functions, their
bilinear relations, and 3 determinental representations of these
objects: pseudo-Wronskians, Jacobi-Trudi formula, and a Boson-Fermion
correspondence formula.  We also introduce Maya diagrams, partitions,
and indicate the relation between these two types of objects.  In a
nutshell, the Hermite-type $\tau$ function is a specialization of the
more general Schur function.  Various instances of this observations
can be found in \cite{clarkson2006specialsoliton}.  Schur functions
arise in integrable systems theory as polynomial solutions of the KP
hierarchy \cite{wilson1993bispectral}. As shown by Tsuda
\cite{tsuda2005universal}, the Painlev\'e systems are actually
reductions of the KP hierarchy.  Theorem \ref{thm:tauschur} is an
indication of how this reduction manifests at the level of solutions.
The Jacobi-Trudi formula is a determinental representation of the
classical Schur functions in terms of monomials.  A generalization to
a basis of orthogonal polynomials was introduced in
\cite{veselovsergeev}. We are interested in the specialization of this
result to the case of Hermite polynomials. A far-reaching
generalization, called the quantum Jacobi-Trudi identity was
introduced in \cite{harnadlee}.

Following Noumi \cite{noumi2004painleve}, we introduce the following.
\begin{definition}
  A Maya diagram is a set of integers $M\subset \Z$ that contains a
  finite number of positive integers, and excludes a finite number of
  negative integers.  We will use $\cM$ to denote the set of all Maya
  diagrams.

  Let $k_1 > k_2> \cdots$ be the decreasing enumeration of a Maya
  diagram $M\subset \Z$.  The condition that $M$ be a Maya diagram is
  equivalent to the condition that $k_{i+1} = k_i-1$ for $i$
  sufficiently large.  Thus, there exists a unique integer
  $\sigma_M\in \Z$ such that $k_i = -i+\sigma_M$ for all $i$
  sufficiently large. We call $\sigma_M$ to the index of $M$.
\end{definition}

We visualize a Maya diagram as a horizontally extended sequence of
$\boxdot$ and $\emptybox$ symbols with the filled symbol $\boxdot$ in
position $i$ indicating membership $i\in M$. The defining assumption
now manifests as the condition that a Maya diagram begins with an
infinite filled $\boxdot$ segment and terminates with an infinite
empty $\emptybox$ segment.

\begin{definition}
  \label{def:frobsymb}
  Let $M$ be a Maya diagram, and
  \[ M_-= \{ -m-1 \colon m\notin M, m<0\},\qquad M_+ = \{ m\colon m\in
    M\,, m\geq 0 \}. \] Let $s_1>s_2>\cdots > s_r$ and
  $t_1> t_2>\dots> t_q$ be the elements of $M_-$ and $M_+$ arranged in
  descending order.  We call the double list
  $(s_1,\ldots, s_r \mid t_q,\ldots, t_1)$ the \textit{Frobenius symbol}
  of $M$ and use $M(s_1,\ldots, s_r\mid t_q,\ldots, t_1)$ to denote the
  Maya diagram with the indicated Frobenius symbol.
\end{definition}
\noindent
It is not hard to show that $\sigma_M=q-r$ is the index of $M$.  The
classical Frobenius symbol
\cite{andrews2004integer,olsson1994combinatorics,andrews1998theory}
corresponds to the zero index case where $q=r$.

If $M$ is a Maya diagram, then for any $k\in \Z$ so is
\[ M+k = \{ m+k \colon m\in M \}.\] The behaviour of the index
$\sigma_M$ under translation of $k$ is given by
\begin{equation}\label{eq:indexshift}
M'=M+k\quad \Rightarrow \quad \sigma_{M'}=\sigma_M+k.
\end{equation}
We will refer to an equivalence class of Maya diagrams related by such
shifts as an \textit{unlabelled Maya diagram}. One can visualize the
passage from an unlabelled to a labelled Maya diagram as the choice of
placement of the origin.

\begin{definition}
  A Maya diagram $M\subset \Z$ is said to be in standard form if $p=0$
  and $t_q>0$.  Equivalently, $M$ is in standard form if the index
  $\sigma_M=q$ is the number of positive elements of $M$.
  Visually, a Maya diagram in standard form has only filled boxes
  $\boxdot$ to the left of the origin and one empty box
  $\emptybox$ just to the right of the origin. Every unlabelled Maya
  diagram permits a unique placement of the origin so as to obtain a
  Maya diagram in standard form.
\end{definition}

In \cite{gomez2018durfee} it was shown that to every Maya diagram we
can associate a polynomial called a Hermite pseudo-Wronskian.
For $n\geq 0$, let
\begin{equation}
  \label{eq:Hndef}
  H_n(x) = (-1)^n e^{x^2} \left( \frac{d}{dx}\right)^n e^{-x^2}
\end{equation}
denote the degree $n$ Hermite polynomial, and
\begin{equation}
  \label{eq:thndef}
  \th_n(x)={\rm i}^{-n} \h_{n}({\rm i}x)
\end{equation}
the conjugate Hermite polynomial.  A number of  equivalent definition of $H_n$ are available.  One is that
$y=H_n$ is the  polynomial solution of the Hermite differential
equation
\begin{equation}
  \label{eq:hermiteDE}
  y''(z)-2z y'(z) + 2n y(z)=0
\end{equation}
subject to the  normalization condition
\[ y(z) \sim 2^n z^n\quad z\to \infty.\]
Setting
\[ \hh_n(z)= e^{z^2}\th_{n}(z) \] we also note that $\hh_{-n-1}$ is a
solution of \eqref{eq:hermiteDE} for negative integers $n<0$.

A third definition involves
the 3-term recurrence relation:
\[  H_{n+1}(z) = 2z H_n(z) -2n H_{n-1}(z),\qquad H_0(z) =1,\; H_1(z)= 2z.\]
A fourth definition involves the generating function
\begin{equation}
  \label{eq:hermgf}
  \sum_{n=0}^\infty H_n(z) \frac{t^n}{n!} =  e^{2zt-t^2}.
\end{equation}

\begin{definition}
  For $s_1,\dots,s_r,t_q,\dots,t_1\in \Z$ set
  \begin{equation}
    \label{eq:taudef}
    \tau(s_1,\ldots, s_r| t_q,\ldots, t_1) = e^{-rz^2}\Wr[
    \hh_{s_1}(z),\ldots, \hh_{s_r}(z), \h_{t_q}(z),\ldots \h_{t_1}(z) ]
  \end{equation}
  where $\Wr$ denotes the Wronskian determinant of the indicated
  functions.  For a Maya diagram $M(s_1,\dots,s_r\mid t_q,\dots,t_1)$
  we let
  \begin{equation}
    \label{eq:pWdef1} \tau_M(z)  =     \tau(s_1,\ldots, s_r|
    t_q,\ldots, t_1)  \end{equation} 
\end{definition}
\noindent
Note: when $r=0$ it will be convenient  to simply write
\begin{equation}
  \label{eq:tauWr}
  \tau(t_q,\ldots, t_1) = \Wr[H_{t_q},\ldots,H_{t_1}] 
\end{equation}
to indicate a Wronskian of Hermite polynomials.

The polynomial nature of $\tau_M(z)$ becomes evident once we represent
it using a slightly different determinant.

\begin{proposition} The Wronskian in \eqref{eq:taudef} admits the following
  alternative \emph{pseudo-Wronskian} representation
  \begin{equation}\label{eq:pWdef2} \tau(s_1,\ldots, s_r | t_q,\ldots,
    t_1)=
    \begin{vmatrix} \th_{s_1} & \th_{s_1+1} & \ldots &
      \th_{s_1+r+q-1}\\ \vdots & \vdots & \ddots & \vdots\\ \th_{s_r} &
      \th_{s_r+1} & \ldots & \th_{s_r+r+q-1}\\ \h_{t_q} &  \h'_{t_q} &
      \ldots & \h^{(r+q-1)}_{t_q}\\ \vdots & \vdots & \ddots & \vdots\\
      \h_{t_1} & \h'_{t_1} & \ldots & \h^{(r+q-1)}_{t_1}
    \end{vmatrix}
  \end{equation}

\end{proposition}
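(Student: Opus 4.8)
The plan is to deduce \eqref{eq:pWdef2} from the Wronskian \eqref{eq:taudef} by two elementary determinant manipulations, the only substantive ingredient being a ``raising'' identity for the functions $\hat H_n$.

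\textbf{Step 1: differentiation raises the index of $\hat H_n$.} First I would establish that
\[ \frac{d}{dz}\,\hat H_n = \hat H_{n+1}, \qquad n\ge 0. \]
From the Rodrigues formula \eqref{eq:Hndef} one has $H_n' = 2n H_{n-1}$, so the three-term recurrence reads $H_{n+1} = 2zH_n - H_n'$; transporting this identity through $\tilde H_n(z) = {\rm i}^{-n}H_n({\rm i}z)$ gives $\tilde H_{n+1} = 2z\tilde H_n + \tilde H_n'$. Since $\hat H_n = e^{z^2}\tilde H_n$, differentiating and using the last relation yields $\hat H_n' = e^{z^2}\big(2z\tilde H_n + \tilde H_n'\big) = e^{z^2}\tilde H_{n+1} = \hat H_{n+1}$. (For a Frobenius symbol arising from a Maya diagram all the $s_i$ are nonnegative, so only $\hat H_m$ with $m\ge 0$ occur; if one prefers to allow arbitrary $s_i\in\Z$, one simply takes this same relation as the definition of $\hat H_m$ for $m<0$, which is consistent with the observation following \eqref{eq:hermiteDE} that $\hat H_{-n-1}$ solves the Hermite equation.)

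\textbf{Step 2: rewrite the Wronskian and factor out $e^{z^2}$.} Write the Wronskian in \eqref{eq:taudef} as the $(r+q)\times(r+q)$ determinant whose $i$-th row for $1\le i\le r$ is $\big(\hat H_{s_i},\,\hat H_{s_i}',\,\dots,\,\hat H_{s_i}^{(r+q-1)}\big)$ and whose last $q$ rows are $\big(H_{t_j},\,H_{t_j}',\,\dots,\,H_{t_j}^{(r+q-1)}\big)$ for $j=q,q-1,\dots,1$. By Step 1 the $i$-th row with $i\le r$ equals $\big(\hat H_{s_i},\,\hat H_{s_i+1},\,\dots,\,\hat H_{s_i+r+q-1}\big) = e^{z^2}\big(\tilde H_{s_i},\,\tilde H_{s_i+1},\,\dots,\,\tilde H_{s_i+r+q-1}\big)$. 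Using multilinearity of the determinant in its rows I would then pull the scalar $e^{z^2}$ out of each of the first $r$ rows, obtaining $e^{rz^2}$ times precisely the matrix displayed in \eqref{eq:pWdef2}. Multiplying by the prefactor $e^{-rz^2}$ of \eqref{eq:taudef} produces \eqref{eq:pWdef2}. (When $r=0$ the statement is vacuous: both sides collapse to $\Wr[H_{t_q},\dots,H_{t_1}]$ as in \eqref{eq:tauWr}.)

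\textbf{Expected main obstacle.} There is no genuine difficulty here; the step requiring the most care is the index/derivative bookkeeping in the determinant --- after the reduction the ``$\tilde H$-rows'' carry successive index \emph{shifts} while the ``$H$-rows'' carry successive \emph{derivatives}, and one must check these remain aligned across all $r+q$ columns --- together with the key computational fact, which is really the heart of the argument, that differentiation acts on the family $\{\hat H_n\}$ as the shift $n\mapsto n+1$. Everything else is multilinearity of the determinant.
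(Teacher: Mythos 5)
Your proof is correct: the raising identity $\hh_n{}' = \hh_{n+1}$ (verified correctly from the three-term recurrence transported through $\th_n(z)={\rm i}^{-n}H_n({\rm i}z)$) converts the derivative columns of the first $r$ rows into index shifts, and factoring $e^{z^2}$ out of each of those rows cancels the prefactor $e^{-rz^2}$ in \eqref{eq:taudef}. The paper itself does not reprove this proposition but defers to \cite{gomez2018durfee}, where the argument is essentially the one you give, so there is nothing to add beyond noting that your treatment of the indices is sound since the $s_i$ arising from a Frobenius symbol are nonnegative.
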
 
\noindent
The proof of the above result can be found in \cite{gomez2018durfee}.
The term Hermite pseudo-Wronskian was also introduced in that paper,
because \eqref{eq:pWdef2} is a mix of a Casoratian and a Wronskian
determinant.  The just mentioned article also demonstrated that the
pseudo-Wronskians of two Maya diagrams related by a translation are
proportional.
\begin{proposition}\label{prop:equiv}
  Let  $\htau_M $ be the normalized pseudo-Wronskian
  \begin{equation}
    \label{eq:hHdef}
    \htau_M = \frac{(-1)^{rq}\tau(s_1,\ldots, s_r|t_q,\ldots, t_1)}{\prod_{1\leq i<j\leq r} (2s_j-2s_i)\prod_{1\leq
        i<j\leq q}
      (2 t_i-2t_j)}.
  \end{equation}
Then for any Maya diagram $M$ and $k\in\Z$ we have
  \begin{equation} \label{eq:HMequiv}
       \htau_M =  \htau_{M+k}.
  \end{equation}
\end{proposition}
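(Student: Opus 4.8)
The plan is to reduce the statement to a single unit translation and then to a determinant manipulation. Because $M\mapsto M-1$ is inverse to $M\mapsto M+1$ and every translation by $k\in\Z$ is a finite composition of unit shifts, it suffices to prove $\htau_{M+1}=\htau_M$ for an arbitrary Maya diagram $M$; the general case then follows by iteration. Write the Frobenius symbol of $M$ as $(s_1,\ldots,s_r\mid t_q,\ldots,t_1)$. The first thing I would do is record, directly from Definition~\ref{def:frobsymb}, how this symbol is affected by $M\mapsto M+1$; there are exactly two cases, according to whether the box at position $-1$ is filled. If $-1\in M$ (so all $s_i\ge 1$), then $r$ is unchanged, $s_i\mapsto s_i-1$, $t_k\mapsto t_k+1$, and a new smallest entry $0$ is added to the $t$-part of the symbol, so that $q\mapsto q+1$. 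If $-1\notin M$ (equivalently $s_r=0$), then the entry $s_r=0$ is deleted, so $r\mapsto r-1$, each surviving $s_i\mapsto s_i-1$, each $t_k\mapsto t_k+1$, and $q$ is unchanged. In both cases $\sigma_{M+1}=\sigma_M+1$, as it must by \eqref{eq:indexshift}.

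Next I would carry out the corresponding manipulation on the Wronskian representation \eqref{eq:taudef}. The ingredients are elementary: $\hh_n'=\hh_{n+1}$ for $n\ge 0$ (so that $\hh_n=(e^{z^2})^{(n)}$), which follows from $H_n'=2nH_{n-1}$ together with the Hermite three-term recurrence; the identities $\hh_0=e^{z^2}$, $H_0=1$, $(e^{-z^2}H_n)'=-e^{-z^2}H_{n+1}$, $\th_n'=2n\th_{n-1}$; and the standard Wronskian facts $\Wr[gf_1,\ldots,gf_N]=g^N\Wr[f_1,\ldots,f_N]$, $\Wr[1,f_2,\ldots,f_N]=\Wr[f_2',\ldots,f_N']$, with a permutation of the arguments of a Wronskian contributing the sign of the permutation. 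In the first case, \eqref{eq:taudef} gives $\tau_{M+1}=e^{-rz^2}\Wr[\hh_{s_1-1},\ldots,\hh_{s_r-1},H_0,H_{t_q+1},\ldots,H_{t_1+1}]$; moving $H_0=1$ to the front (a sign $(-1)^r$), removing it via $\Wr[1,f_2,\ldots]=\Wr[f_2',\ldots]$, and then using $\hh_{s_i-1}'=\hh_{s_i}$ and $H_{t_k+1}'=2(t_k+1)H_{t_k}$ turns this into $\tau_{M+1}=(-1)^r\bigl(\prod_{k=1}^{q}2(t_k+1)\bigr)\tau_M$. In the second case I would instead write $\tau_M=e^{-rz^2}\Wr[\hh_{s_1},\ldots,\hh_{s_{r-1}},\hh_0,H_{t_q},\ldots,H_{t_1}]$ (using $\hh_{s_r}=\hh_0$), move $\hh_0=e^{z^2}$ to the front (a sign $(-1)^{r-1}$), and factor it out using $\Wr[g,gh_2,\ldots,gh_N]=g^N\Wr[h_2',\ldots,h_N']$; since $\hh_{s_i}/\hh_0=\th_{s_i}$ and $H_{t_k}/\hh_0=e^{-z^2}H_{t_k}$, applying $\th_{s_i}'=2s_i\th_{s_i-1}$ and $(e^{-z^2}H_{t_k})'=-e^{-z^2}H_{t_k+1}$ and then multiplying all columns by $e^{z^2}$ recovers $\tau_{M+1}$ up to a scalar, namely $\tau_M=(-1)^{r-1+q}\bigl(\prod_{i=1}^{r-1}2s_i\bigr)\tau_{M+1}$.

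The last step is to verify that the proportionality constant produced above is cancelled exactly by the change in the normalizing factor of \eqref{eq:hHdef}. Writing that denominator as $\prod_{1\le i<j\le r}(2s_j-2s_i)\cdot\prod_{1\le i<j\le q}(2t_i-2t_j)$: in the first case the $s$-product is unchanged, the $t$-product acquires exactly the extra factors $\prod_{k=1}^{q}\bigl(2(t_k+1)-0\bigr)$ coming from the appended entry $0$, and the prefactor $(-1)^{rq}$ becomes $(-1)^{r(q+1)}$; in the second case the $t$-product is unchanged, the $s$-product loses exactly the factors $\prod_{i=1}^{r-1}(0-2s_i)=(-1)^{r-1}\prod_{i=1}^{r-1}2s_i$ that involved $s_r=0$, and $(-1)^{rq}$ becomes $(-1)^{(r-1)q}$. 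In each case a short sign count shows the net change is trivial, which gives $\htau_{M+1}=\htau_M$, as required.

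I expect the actual difficulty to be purely bookkeeping rather than conceptual: one has to be careful with the sign $(-1)^{rq}$, with the signs coming from reordering the arguments of the Wronskians, and — the detail most easily missed — with the change in the Gaussian prefactor $e^{-rz^2}$ when $r$ drops by one in the second case. Once the transformation rule for the Frobenius symbol under the unit shift is pinned down, the rest is a routine, if slightly fiddly, computation.
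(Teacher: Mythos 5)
Your proof is correct. Note that the paper does not actually prove this proposition in the text --- it defers to the reference [gomez2018durfee] --- so there is no in-paper argument to compare against; your reduction to the unit shift $M\mapsto M+1$, the two-case analysis on whether $-1\in M$, and the resulting proportionality constants $(-1)^r\prod_{k=1}^q 2(t_k+1)$ and $(-1)^{r-1+q}\prod_{i=1}^{r-1}2s_i$ all check out, and in each case they cancel exactly against the change in the normalizing denominator and the prefactor $(-1)^{rq}$ (in the second case one uses $(-1)^{(r+1)q}=(-1)^{(r-1)q}$). The one step worth flagging as correctly handled is the Gaussian bookkeeping in the second case, where $e^{-rz^2}\cdot e^{z^2}\cdot e^{(r-1)z^2}=1$ restores the prefactor $e^{-(r-1)z^2}$ appropriate to the new value of $r$.
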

\noindent
Observe that the identity in \eqref{eq:HMequiv} involves determinants
of different sizes, and a Wronskian of Hermite polynomials will not,
in general, be the smallest determinant in the equivalence class.  The
question of which determinant has the smallest size was solved in
\cite{gomez2018durfee}.


We define a partition to be a non-increasing sequence of natural
numbers $\lambda_1 \geq \lambda_2 \geq \cdots $ such that
\[ |\lambda| := \sum_{i=1}^\infty \lambda_i< \infty.\] Implicit in
this definition is the assumption that $\lambda_i=0$ for $i$
sufficiently large.  We define $\ell(\lambda)$, the length of
$\lambda$, to be the smallest $q\in \N$ such that $\lambda_{q+1}=0$.

To a partition $\lambda$ of length $q=\ell(\lambda)$ we associate the
Maya diagram $M_\lambda$ consisting of
\begin{equation}
  \label{eq:tqlambda}
  t_i = \lambda_i + q - i,\quad i=1,2,\ldots.
\end{equation}
By construction, we have
\[t_q>0,\quad \text{and}\quad t_{i+1}+1 = t_i<0,\qquad i> q. \]
Therefore $M_\lambda$ is a Maya diagram in standard form.  Indeed,
\eqref{eq:tqlambda} defines a bijection between the set of partitions
and the set of Maya diagrams in standard form.  Going forward, let
\begin{equation}
  \label{eq:taulambda}
  \tau_\lambda = \Wr[H_{t_{q}},\ldots, H_{t_1} ].
\end{equation}

For $n\in \Z$ and $\lambda$ a partition, let
\[ M^{(n)}_\lambda = M_\lambda + n -\ell(\lambda),\]
and let $t_1>t_2>\cdots$ be the decreasing enumeration of
$M^{(n)}_\lambda$. Equivalently,
\begin{equation}
  \label{eq:tnlambda}
   t_i = \lambda_i + n-i,\quad i=1,2,\ldots. 
\end{equation}
Note that the condition
$n \geq \ell(\lambda)$ holds if  and only if $M_\lambda^{(n)}$
contains all negative integers and exactly $n$ non-negative integers, that is if
\[ M_\lambda^{(n)} = M(\mid t_n,\ldots, t_1).\]


Given univariate polynomials $p_1(z),\ldots, p_n(z)$, define the
multivariate functions
\begin{align}
  &\Delta[p_1,\ldots, p_n](z_1,\ldots, z_n)  =
    \begin{vmatrix}
      p_1(z_1) & p_1(z_2) & \ldots & p_1(z_n)\\
      p_2(z_1) & p_2(z_2) & \ldots & p_2(z_n) \\
      \vdots & \vdots & \ddots & \vdots \\
      p_n(z_1) & p_n(z_2) & \ldots & p_n(z_n) 
    \end{vmatrix}\\
  \label{eq:Spdef}
  &S[p_1,\ldots,p_n] =
    \frac{\Delta[p_1,\ldots, p_n] }{\Delta[\fm_{n-1},\ldots, \fm_1,
    \fm_{0}]}\intertext{where}
               &\fm_k(z) = z^k
\end{align}
is the $k\supth$ degree monomial function.  Thus,
\[ \Delta[\fm_{n-1},\ldots,    \fm_{0}](z_1,\ldots, z_n) =
  \begin{vmatrix}
    z_1^{n-1} &  \ldots & z_n^{n-1} \\
    \vdots &  \ddots & \vdots \\
    1  & \ldots & 1
  \end{vmatrix} = \prod_{1\leq i<j\leq n} (z_i-z_j) \] is the usual
Vandermonde determinant, while $S[p_1,\ldots, p_n]$ is a symmetric
polynomial in $z_1,\ldots, z_n$.

Let $\lambda$ be a partition. For $n\geq \ell(\lambda)$, let
$t_1> t_2> \cdots$ be the decreasing enumeration of $M^{(n)}_\lambda$
as per \eqref{eq:tnlambda}. The $n$-variate Schur polynomial is the
symmetric polynomial
\[ \fs^{(n)}_\lambda = S[\fm_{t_1}, \ldots, \fm_{t_n}].\] The Schur
polynomial $\fs^{(n)}_\lambda$ is the character of the irreducible
representation of the general linear group $\GL_n$ corresponding to
partition $\lambda$.  Moreover, the Weyl dimension formula asserts
that
\begin{equation}
  \label{eq:wdf}
  \fs^{(n)}_\lambda(1,\ldots, 1) =  \prod_{1\leq i<j\leq n}\frac{
    \lambda_i-\lambda_j   +j-i}{j-i}
  = \left(\prod_{j=1}^{n-1} j!\right)^{-1}\!\!\!\!\prod_{1\leq i<j\leq
    n}    (t_i-t_j) 
\end{equation}
is the dimension of the representation in question.  

For $n\geq 1$, let
\[ \fh^{(n)}_k(z_1,\ldots, z_n) = \sum_{1\leq i_1 \leq i_2\leq \cdots
    \leq i_k \leq n} \!\!\!\!\!\!z_{i_1}z_{i_2} \cdots z_{i_k},\quad
  k=1,2,\ldots \] denote the complete symmetric polynomial of degree
$k$ in $n$ variables.  These polynomials may also be defined by means
of the generating function
\begin{equation}
  \label{eq:hgf}
  \sum_{k=0}^\infty \fh^{(n)}_k(z_1,\ldots, z_n)u^k = \prod_{i=1}^n
  \frac{1}{1-z_i u}.
\end{equation}
The classical Jacobi-Trudi identity is a determinental representation
of the Schur polynomials in terms of complete symmetric polynomials.
\begin{proposition}
  Let $\lambda$ a partition and $n\geq \ell(\lambda)$ we have
  \begin{equation}
    \label{eq:fslambdadet}
    \fs^{(n)}_\lambda = \det \left( \fh^{(n)}_{\lambda_i+j-i}
    \right)_{i,j=1}^{\ell(\lambda)}.  
  \end{equation}
\end{proposition}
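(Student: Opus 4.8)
The plan is to work directly from the alternant definition $\fs^{(n)}_\lambda=S[\fm_{t_1},\dots,\fm_{t_n}]$, with $t_i=\lambda_i+n-i$ the decreasing enumeration of $M^{(n)}_\lambda$ from \eqref{eq:tnlambda}, and to factor the numerator determinant $\Delta[\fm_{t_1},\dots,\fm_{t_n}]$ as a product of the Vandermonde determinant $\Delta[\fm_{n-1},\dots,\fm_0]$ and the Jacobi--Trudi determinant. The whole argument rests on one power-series identity extracted from the generating function \eqref{eq:hgf}: multiply both sides of $\prod_{l=1}^n(1-z_lu)^{-1}=\sum_{m\ge0}\fh^{(n)}_m(z)\,u^m$ by $\prod_{l\ne i}(1-z_lu)=\sum_{r=0}^{n-1}(-1)^r e_r\,u^r$, where $e_r$ denotes the $r$-th elementary symmetric polynomial in the $n-1$ variables obtained from $z_1,\dots,z_n$ by deleting $z_i$. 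The left-hand side collapses to $(1-z_iu)^{-1}=\sum_{m\ge0}z_i^m u^m$, so comparison of coefficients of $u^m$ gives, for all $m\ge0$,
\[ z_i^m=\sum_{r=0}^{n-1}(-1)^r e_r(z_1,\dots,\widehat{z_i},\dots,z_n)\,\fh^{(n)}_{m-r}(z_1,\dots,z_n),\]
with the convention $\fh^{(n)}_k=0$ for $k<0$.

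Applying this identity with $m=t_j$ (legitimate since $t_n=\lambda_n\ge0$, hence all $t_j\ge0$) and substituting $k=n-r$ shows that the $n\times n$ matrix $\bigl(z_i^{t_j}\bigr)_{i,j=1}^n$ is the product $BA$, where $B_{ik}=(-1)^{n-k}e_{n-k}(z_1,\dots,\widehat{z_i},\dots,z_n)$ and $A_{kj}=\fh^{(n)}_{t_j-n+k}=\fh^{(n)}_{\lambda_j-j+k}$; hence $\det\bigl(z_i^{t_j}\bigr)=\det B\cdot\det A$. To pin down $\det B$, I would specialize to the empty partition $\lambda=\varnothing$, so that $t_j=n-j$ and $A_{kj}=\fh^{(n)}_{k-j}$ becomes lower unitriangular with $\det A=1$; then $\bigl(z_i^{t_j}\bigr)_{i,j}=\bigl(z_i^{n-j}\bigr)_{i,j}$ is (the transpose of) the Vandermonde matrix, so $\det B=\prod_{1\le i<j\le n}(z_i-z_j)=\Delta[\fm_{n-1},\dots,\fm_0]$, identically in $z$.

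Combining these two facts, $\Delta[\fm_{t_1},\dots,\fm_{t_n}]=\det\bigl(z_i^{t_j}\bigr)=\Bigl(\prod_{i<j}(z_i-z_j)\Bigr)\det\bigl(\fh^{(n)}_{\lambda_j-j+k}\bigr)_{k,j=1}^n$, and therefore $\fs^{(n)}_\lambda=\det\bigl(\fh^{(n)}_{\lambda_j-j+k}\bigr)_{k,j=1}^n=\det\bigl(\fh^{(n)}_{\lambda_i+j-i}\bigr)_{i,j=1}^n$ after transposing and relabelling. It then remains to drop from size $n$ to size $\ell(\lambda)$: for each row index $i>\ell(\lambda)$ one has $\lambda_i=0$, so the entries $\fh^{(n)}_{j-i}$ in that row vanish whenever $j\le\ell(\lambda)$ and equal $1$ when $j=i$; the matrix is thus block triangular with a lower-right unipotent block, and its determinant equals that of the $\ell(\lambda)\times\ell(\lambda)$ leading principal minor, which is precisely \eqref{eq:fslambdadet}.

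I do not expect any essential difficulty here — the displayed identity of the first paragraph does all the work and is an immediate consequence of \eqref{eq:hgf}. The only points requiring attention are bookkeeping ones: reconciling the paper's index conventions (the $t_i$ taken in decreasing order, the transpose built into the definition of $\Delta[\,\cdot\,]$ and $S[\,\cdot\,]$, and the reindexing $k=n-r$), correctly tracking the signs $(-1)^r$ coming from $\prod_{l\ne i}(1-z_lu)$, and the reduction from $n$ to $\ell(\lambda)$ variables. An alternative, entirely combinatorial route would be to invoke the Lindström--Gessel--Viennot lemma and interpret both sides as weighted counts of families of non-intersecting lattice paths, but the generating-function computation above is shorter and stays within the framework already developed in this section.
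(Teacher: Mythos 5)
Your argument is correct, but there is nothing in the paper to compare it against: the paper states this proposition as the classical Jacobi--Trudi identity and offers no proof, implicitly deferring to the standard literature (Macdonald, and the cited generalizations of Sergeev--Veselov and Harnad--Lee). What you have written is a complete and correctly executed rendition of the standard matrix-factorization proof: the coefficient identity $z_i^m=\sum_{r=0}^{n-1}(-1)^r e_r(z_1,\dots,\widehat{z_i},\dots,z_n)\,\fh^{(n)}_{m-r}$ extracted from \eqref{eq:hgf} is valid as a formal power series computation, the factorization $\bigl(z_i^{t_j}\bigr)=BA$ follows, and your device of evaluating $\det B$ by specializing to $\lambda=\varnothing$ is legitimate precisely because $B$ depends only on $z_1,\dots,z_n$ and not on the partition. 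The final reduction from the $n\times n$ determinant to the $\ell(\lambda)\times\ell(\lambda)$ leading minor via the block-triangular structure (zero lower-left block, unitriangular lower-right block) is also correct, and is a step worth spelling out since the paper's statement is at size $\ell(\lambda)$ while the alternant naturally produces a size-$n$ determinant. The only cosmetic remark is that the paper's $\Delta[p_1,\dots,p_n]$ is defined with entries $p_i(z_j)$, the transpose of the matrix you manipulate; since transposition does not change the determinant this is harmless, as you note. In short: the proposal supplies a sound proof where the paper supplies none.
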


We now describe a closely related identity based on symmetric power
functions.  Define the ordinary Bell polynomials
$\fB_k(t_1,\ldots, t_k),\; k=0,1,2,\ldots$ by means of the power
generating function
\begin{equation}
  \label{eq:Bgf}
  \exp\left(\sum_{k=0}^\infty t_k u^k\right) = \sum_{k=0}^\infty
  \fB_k(t_1,\ldots, t_k) u^k.  
\end{equation}
Since
\[ \exp\left(\sum_{j=0}^\infty t_k u^k\right) = \sum_{j=0}^\infty
  \frac{1}{j!}\left(\sum_{k=0}^\infty t_k u^k\right)^j ,\] the multinomial
formula implies that,
\[
  \begin{aligned}
    \fB_k(t_1,\ldots, t_k) &= \sum_{j_1,\ldots, j_\ell\geq 0 \atop
      \Vert j\Vert=n} \frac{t^{j_1}_1}{j_1!} \frac{t^{j_1}_2}{j_2!}
    \cdots \frac{t^{j_\ell}_\ell}{j_\ell!},\qquad \Vert j \Vert = j_1
    + 2j_2
    + \cdots + \ell j_\ell\\
    &= \frac{t_1^k}{k!} + \frac{t_1^{k-2}t_2}{(k-2)!} + \cdots +
    t_{k-1} t_1 + t_k
  \end{aligned}
\]
The Bell polynomials are instrumental in describing the relation
between complete homogeneous polynomials and symmetric power
polynomials.  For a given $n\geq 1$, let
\[ \fp^{(n)}_k(z_1,\ldots, z_n) = \sum_{j=1}^n z_j^k \] denote the
symmetric $k\supth$ power polynomial in $n$ variables. These
polynomials admit the following generating function
\begin{equation}
  \label{eq:pgf}
  - \log  \prod_{j=1}^n (1-z_iu) = \sum_{j=1}^\infty \fp^{(n)}_j(z_1,\ldots,
  z_n) \frac{u^j}{j}.
\end{equation}
Comparing the generating functions \eqref{eq:hgf} \eqref{eq:Bgf} \eqref{eq:pgf}
yields the following identity
\begin{equation}
  \label{eq:hkBkp}
  \fh^{(n)}_k = \fB_k\left(\fp^{(n)}_1, \frac12 \fp^{(n)}_2,\ldots,
    \frac1k \fp^{(n)}_k\right). 
\end{equation}

With these preliminaries out of the way we can present the following
alternative version of the Jacobi-Trudi formula
\cite{veselovsergeev,harnadlee}.  For a partition $\lambda$, the
define the Schur function $\fS_\lambda$ to be the multivariate
polynomial
\begin{equation}
  \label{eq:fSlambdadef}
  \fS_\lambda =  \det(
  \fB_{\lambda_i+j-i})_{i,j=1}^{\ell(\lambda)}. 
\end{equation}
Relation \eqref{eq:fslambdadet} may now be restated as
\begin{equation}
  \label{eq:fsfSfp}
  \fs^{(n)}_\lambda = \fS_\lambda(\fp^{(n)}_1, \frac12 \fp^{(n)}_2 ,
  \ldots , \frac{1}{k} \fp^{(n)}_k,\ldots).
\end{equation}
Relation \eqref{eq:fsfSfp} will be instrumental in the proof of the
following.
\begin{theorem}
  \label{thm:tauschur}
  Let $\lambda$ be a partition.  Then,
  \[ \tau_\lambda(z) = C_\lambda\fS_\lambda(2z,-1),\]
  where
  \[ C_\lambda = 2^{n(n-1)/2} \prod_{j=1}^n (\lambda_i +
    n-i)!,\qquad n= \ell(\lambda).\]
\end{theorem}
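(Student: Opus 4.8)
The plan is to connect the Wronskian $\tau_\lambda(z)=\Wr[H_{t_q},\ldots,H_{t_1}]$ to the Schur function $\fS_\lambda$ via the symmetric-function identity \eqref{eq:fsfSfp}, using the specialization of the power sums that makes $\fp^{(n)}_1=2z$, $\fp^{(n)}_2=-1$, and $\fp^{(n)}_k=0$ for $k\geq 3$. First I would recall the generating function \eqref{eq:hermgf} for Hermite polynomials, which I rewrite as $\sum_n \frac{H_n(z)}{n!}t^n=\exp(2zt-t^2)=\exp\!\big(\fp_1 t+\tfrac12\fp_2 t^2\big)$ with $\fp_1=2z$, $\fp_2=-1$. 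Comparing with the Bell-polynomial generating function \eqref{eq:Bgf}, this shows that $\frac{H_k(z)}{k!}=\fB_k\big(2z,-\tfrac12,0,0,\ldots\big)$, i.e.\ the Hermite polynomials are precisely the specialized ordinary Bell polynomials up to the factorial normalization. This is the key dictionary entry linking the two worlds.

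Next I would pass from the Wronskian to a Jacobi–Trudi-type determinant. Since the $H_n$ satisfy $H_n'=2nH_{n-1}$, the Wronskian $\Wr[H_{t_q},\ldots,H_{t_1}]$ of the Hermite polynomials at the shifted indices $t_i=\lambda_i+n-i$ can be reduced — by column operations expressing derivatives in terms of lower-index Hermite polynomials, or equivalently by invoking the standard fact that a Wronskian of polynomials with this recursive derivative structure equals (up to an explicit constant) the determinant $\det\big(H_{t_i-j+1}/(t_i-j+1)!\big)$ or, after reindexing, $\det\big(H_{\lambda_i+j-i}(z)\big)$ with appropriate normalizing factorials. Combined with the previous step, $H_{\lambda_i+j-i}(z)=(\lambda_i+j-i)!\,\fB_{\lambda_i+j-i}(2z,-\tfrac12,0,\ldots)$, the determinant becomes $\det\big(\fB_{\lambda_i+j-i}(2z,-\tfrac12,0,\ldots)\big)$ times a product of factorials, which by the definition \eqref{eq:fSlambdadef} of $\fS_\lambda$ is exactly $\fS_\lambda$ evaluated at $(\fp_1,\tfrac12\fp_2,\ldots)=(2z,-\tfrac12\cdot\text{(appropriate sign/coefficient)},0,\ldots)$. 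I would be careful to match the argument convention of $\fS_\lambda$ in \eqref{eq:fSlambdadef}–\eqref{eq:fsfSfp}, which takes $\tfrac1k\fp_k$ as its $k$th slot; with $\fp_1=2z$ and $\fp_2=-1$ this gives the arguments $2z$ and $-\tfrac12$ (and $0$ thereafter), which the theorem abbreviates as $\fS_\lambda(2z,-1)$.

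Finally I would pin down the constant $C_\lambda$. The factorials $\prod_{j}(\lambda_i+n-i)!$ arise from converting each $H_{t_i}$ to its Bell-polynomial form, and the power of $2$ comes from the column operations on derivatives (each application of $H_n'=2nH_{n-1}$ contributes a factor of $2$, and the bookkeeping over an $n\times n$ determinant yields $2^{n(n-1)/2}$); one can cross-check the total constant by evaluating both sides at a convenient point or by comparing leading coefficients in $z$, using that $H_{t_i}(z)\sim(2z)^{t_i}$ and the Vandermonde structure of the leading term of the Wronskian. The main obstacle I anticipate is the bookkeeping in the second step: carefully justifying the passage from the Wronskian determinant to the Jacobi–Trudi determinant $\det(\fB_{\lambda_i+j-i})$ with the correct indices, correct factorial weights, and correct power of $2$, rather than the identity itself, which is conceptually just "Hermite polynomials are specialized Schur functions." A clean way to organize this is to prove the intermediate identity $\Wr[H_{t_q},\ldots,H_{t_1}] = (\text{const})\cdot\det\big(H_{\lambda_i+j-i}\big)_{i,j=1}^{q}$ as a lemma first (a purely Hermite statement), and then apply the Bell-polynomial dictionary to each entry.
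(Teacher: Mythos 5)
Your proposal follows essentially the same route as the paper's proof: identify $H_k(z)/k!$ with a specialized ordinary Bell polynomial by comparing the generating functions \eqref{eq:hermgf} and \eqref{eq:Bgf}, convert the Wronskian entry-by-entry into the Jacobi--Trudi determinant $\det\bigl(H_{\lambda_i+j-i}\bigr)$ using $H_n'=2nH_{n-1}$, and collect the factorials and powers of $2$ into $C_\lambda$. The only correction needed is a factor of $2$ in your specialization: matching $e^{2zt-t^2}$ with $\exp(t_1u+t_2u^2)$ forces $t_2=-1$ (equivalently $\fp_2=-2$, not $-1$), so the dictionary entry is $H_k(z)/k!=\fB_k(2z,-1)$ and the theorem's $\fS_\lambda(2z,-1)$ is literal rather than an abbreviation of the arguments $(2z,-\tfrac12)$.
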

\begin{proof}
  Specializing \eqref{eq:Bgf} and using \eqref{eq:hermgf}, we observe
  that
  \begin{align*}
    \sum_{k=0}^\infty  \fB(2z,-1)u^k
    &= \exp\left(2z u -  u^2/2\right) =  \sum_{k=0}^\infty H_n(z)
      \frac{u^k}{n!}       
  \end{align*}
  Hence,
  \[   \fB_n(2z,-1) = \frac{H_n(z)}{n!} ,\quad n=0,1,2,\ldots, \]
  Hence, by \eqref{eq:fSlambdadef},
  \begin{align*}
    \fS_\lambda(2z,-1) &= \det \left( \frac{H_{\lambda_i +
                         j-i}(z)}{(\lambda_i + j-i)!}
                         \right)_{i,j=1}^{\ell(\lambda)}. 
  \end{align*}
  Hence, by the identity
  \[ H_n' = 2n H_{n-1},\quad n=1,2,\ldots \]
  we have
  \[ H_{t_i}^{(n-j)}=\frac{t_i!}{(\lambda_i+j-i)!} 2^{n-j} H_{\lambda_i+j-i}
    ,\quad i,j=1,\ldots, n, \]
  where as above,
  \[ t_i = \lambda_i + n - i,\qquad n= \ell(\lambda).\] Hence, from
  the definition \eqref{eq:taulambda}, we have
  \[ \tau_\lambda = \det \left(  2^{n-j} t_i! \frac{H_{\lambda_i +
          j-i}(z)}{(\lambda_i + j-i)!}\right)
    = C_\lambda
    \fS_\lambda(2z,-1) \]
\end{proof}

The following results will lead to yet another description of the
Hermite-type $\tau$-function, one that is related to the Boson-Fermion
correspondence \cite{grandati2014schur} (although we do not discuss
this here).
\begin{proposition}
  Let $p_1(z),\ldots, p_n(z)$ be polynomials. Then,
  \begin{equation}
    W[p_1,\ldots, p_n](z) = \left(\prod_{j=1}^{n-1} j!\right)
    S[p_1,\ldots,p_n](z,\ldots, z)
  \end{equation}
\end{proposition}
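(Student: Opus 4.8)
The plan is to realize the right-hand side as a confluent (coincidence) limit of the quotient $\Delta[p_1,\ldots,p_n]/\Delta[\fm_{n-1},\ldots,\fm_0]$ and to extract the Wronskian as the leading coefficient of a Taylor expansion. Fix $z\in\C$ and set $z_j = z + t_j$, $j=1,\ldots,n$, with $t_1,\ldots,t_n$ fresh indeterminates. Since $S[p_1,\ldots,p_n]$ is a genuine polynomial, the relation
\[ \Delta[p_1,\ldots,p_n](z+t_1,\ldots,z+t_n) = S[p_1,\ldots,p_n](z+t_1,\ldots,z+t_n)\cdot \prod_{1\le i<j\le n}(t_i-t_j) \]
is an identity in $\C[z,t_1,\ldots,t_n]$, and evaluating $S$ on the diagonal $z_1=\cdots=z_n=z$ amounts to reading off, from the left-hand side, the homogeneous component of degree $\binom{n}{2}$ in $(t_1,\ldots,t_n)$, which is necessarily $S[p_1,\ldots,p_n](z,\ldots,z)$ times $\prod_{i<j}(t_i-t_j)$.

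Second, I would Taylor-expand each matrix entry, $p_i(z+t_j)=\sum_{m\ge0}\frac{p_i^{(m)}(z)}{m!}\,t_j^{\,m}$, and expand the determinant by multilinearity in its $n$ rows. This writes $\Delta[p_1,\ldots,p_n](z+t)$ as a sum, over $n$-tuples $(m_1,\ldots,m_n)$ of pairwise distinct nonnegative integers, of terms $\bigl(\prod_i \frac{p_i^{(m_i)}(z)}{m_i!}\bigr)\,t_1^{m_1}\cdots t_n^{m_n}$, antisymmetrized over the $n!$ reorderings of the exponents. The combinatorial heart of the matter is that $m_1+\cdots+m_n$ is minimized over such tuples, uniquely, by the exponent set $\{0,1,\ldots,n-1\}$, the minimal value being $\binom{n}{2}$; for every other exponent set the corresponding contribution is divisible by $\prod_{i<j}(t_i-t_j)$ with a quotient of strictly positive degree (a Schur polynomial in the $t$'s), hence contributes nothing to the degree-$\binom{n}{2}$ component. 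The surviving piece is $\prod_{i<j}(t_i-t_j)$ (up to orientation) multiplied by $\sum_{\sigma\in\fS_n}\sgn(\sigma)\prod_i \frac{p_i^{(\sigma(i)-1)}(z)}{(\sigma(i)-1)!}$; by the Leibniz formula this last sum equals $\det\!\bigl(p_i^{(j-1)}(z)/(j-1)!\bigr)_{i,j=1}^n = \bigl(\prod_{j=1}^{n-1}j!\bigr)^{-1}W[p_1,\ldots,p_n](z)$.

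Comparing the degree-$\binom{n}{2}$ components on both sides, and noting $\prod_{i<j}(t_i-t_j)=\Delta[\fm_{n-1},\ldots,\fm_0](z+t)$, then gives $S[p_1,\ldots,p_n](z,\ldots,z)=\bigl(\prod_{j=1}^{n-1}j!\bigr)^{-1}W[p_1,\ldots,p_n](z)$, which rearranges to the claim. The step requiring the most care is the orientation bookkeeping of the two determinants: one must check that the Vandermonde produced by the antisymmetrization appears with exactly the orientation of $\Delta[\fm_{n-1},\ldots,\fm_0]$, and that the reversal of derivative orders inside the leading minor is compensated by the Leibniz sign — I would verify this explicitly for $n=1,2$ to be sure that no stray factor $(-1)^{n(n-1)/2}$ survives.

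An alternative route, avoiding the Taylor bookkeeping, is to observe that both sides of the proposition are alternating and multilinear in $(p_1,\ldots,p_n)$, so it is enough to verify the identity for monomials $p_i=\fm_{a_i}$ with $a_1>\cdots>a_n\ge0$. For such a choice the left side is a Wronskian of monomials, readily evaluated as a power of $z$ times the Vandermonde of the exponents $a_i$, while the right side, by the homogeneity of the associated Schur polynomial together with the Weyl dimension formula \eqref{eq:wdf}, is the same power of $z$ times $\bigl(\prod_{j=1}^{n-1}j!\bigr)^{-1}$ times the same Vandermonde; the two then coincide.
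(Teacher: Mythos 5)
Your first argument is a genuinely different route from the paper's, and your closing ``alternative route'' is essentially the paper's own proof. The paper expands each $p_i$ in the monomial basis, uses multilinearity and antisymmetry to reduce both $\Wr[p_1,\ldots,p_n]$ and $S[p_1,\ldots,p_n]$ to sums over monomial tuples with distinct exponents, evaluates $\Wr[\fm_{t_1},\ldots,\fm_{t_n}]$ in closed form, and matches the two expansions through the Weyl dimension formula \eqref{eq:wdf} and the homogeneity of $\fs_\lambda$. Your confluent argument instead reads off $S[p_1,\ldots,p_n](z,\ldots,z)$ as the coefficient of the Vandermonde in the lowest homogeneous component (degree $\binom{n}{2}$ in $t$) of $\Delta[p_1,\ldots,p_n](z+t_1,\ldots,z+t_n)$, and identifies that coefficient with the Wronskian by Taylor expansion. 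This is self-contained --- it needs neither Schur polynomials nor \eqref{eq:wdf} --- and the vanishing of the higher strata is immediate from degree counting alone (every other distinct exponent set has total degree strictly greater than $\binom{n}{2}$, so your divisibility remark is not even needed).

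However, the orientation check you defer is not a formality: carried out, it shows that the factor $(-1)^{n(n-1)/2}$ does survive. With the paper's conventions, take $n=2$, $p_1=\fm_1$, $p_2=\fm_0$: then $\Wr[p_1,p_2]=p_1p_2'-p_2p_1'=-1$, while $\Delta[p_1,p_2]=\Delta[\fm_1,\fm_0]$ so that $S[p_1,p_2]\equiv 1$, and the stated identity reads $-1=1$. The source is exactly the mismatch you isolate: the antisymmetrization in your Taylor expansion produces $\det(t_j^{\,i-1})=\prod_{i<j}(t_j-t_i)$, whereas $\Delta[\fm_{n-1},\ldots,\fm_0](t_1,\ldots,t_n)=\prod_{i<j}(t_i-t_j)$. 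The correct statement is therefore
\[ \Wr[p_1,\ldots,p_n](z)=(-1)^{n(n-1)/2}\Bigl(\prod_{j=1}^{n-1}j!\Bigr)S[p_1,\ldots,p_n](z,\ldots,z), \]
and your method, completed honestly, proves this. The same slip occurs in your alternative route when you call the Vandermonde of the exponents appearing in $\Wr[\fm_{t_1},\ldots,\fm_{t_n}]$ ``the same Vandermonde'' as the one in \eqref{eq:wdf}: they are oppositely oriented. The paper's own proof drops this sign in passing from $\Wr[\fm_{t_1},\ldots,\fm_{t_n}]=\prod_{i<j}(t_j-t_i)\,\fm_{|\lambda|}$ to \eqref{eq:Wp1pell}. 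The discrepancy is harmless downstream, since the $\tau$-functions are only ever used up to proportionality, but you should record the sign rather than assume it cancels.
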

\begin{proof}
  Express the given
  polynomials as
  \[ p_i = \sum_{j=0}^{\infty} p_{ij} \fm_j,\quad p_{ij} \in
    \C. \] Note: the above sum is actually finite, because $p_{ij} =0$
  for $j$ sufficiently large.  Hence,
  \begin{align*}
    \Wr[p_1,\ldots,p_n]
    &=\sum_{t_1,\ldots, t_n = 0}^\infty \left(\prod_{i=1}^n p_{it_i}\right)
      \Wr[\fm_{t_1} ,\ldots, \fm_{t_n}]\\
    &= \sum_{t_1>\cdots > t_n\ge0} \sum_{\pi\in \cS_n}
     \sgn(\pi) \left(\prod_{i=1}^n p_{it_{\pi_i}}\right)
            \Wr[\fm_{t_1} ,\ldots, \fm_{t_n}]
  \end{align*}
  where $\cS_n$ is the group of permutations of $\{ 1,\ldots, n\}$.

  By an elementary calculation,
  \[\Wr[\fm_{t_1} ,\ldots,    \fm_{t_n}] =\prod_{1\leq i<j\leq
      n}\!\!\!\!  (t_j-t_i)\; \fm_{|\lambda|}\] Thus, by
  introducing the abbreviation
  \[ p_\lambda =  \sum_{\pi\in \cS_n}
    \sgn(\pi)\prod_{i=1}^n  
    p_{it_{\pi_i}} \] and making use of \eqref{eq:wdf} we may write
  \begin{equation}
    \label{eq:Wp1pell}
    \Wr[p_1,\ldots, p_n] = \prod_{j=1}^{n-1} j! \sum_{\ell(\lambda)\le n}
    \fs_\lambda(1,\ldots, 1 )p_\lambda \fm_{|\lambda|} 
  \end{equation}
  Of course, the sum is actually finite because $p_\lambda=0$ if
  $\lambda_1 > \max \{ \deg p_1,\ldots, \deg p_n\}$.

  By the multi-linearity and skew-symmetry of the determinant
  \eqref{eq:Spdef},
  \begin{align*}
    S[p_1,\ldots, p_n]
    &= \sum_{t_1,\ldots, t_n=0}^\infty 
      \left(\prod_{i=1}^n p_{it_i}\right)  S[\fm_{t_1},\ldots,
      \fm_{t_n}]\\
    &= \sum_{t_1>\cdots > t_n\ge0} \sum_{\pi\in \cS_n}
     \sgn(\pi) \left(\prod_{i=1}^n p_{it_{\pi_i}}\right)S[\fm_{t_1},\ldots,
      \fm_{t_n}] \\
    &= \sum_{\ell(\lambda)\leq n} p_\lambda \fs_\lambda.
  \end{align*}
  Since $\fs_\lambda$ is a homogeneous polynomial whose total degree is
  equal to $|\lambda|$, we have
  \begin{equation}
    \fs_\lambda(z,\ldots, z) =   \fs_\lambda(1,\ldots, 1) z^{|\lambda|}.
  \end{equation}
  Hence,
  \[ S[p_1,\ldots, p_n](z,\ldots, z) = \sum_\lambda p_\lambda
    \fs_\lambda(1,\ldots, 1) z^{|\lambda|}.\]
  The desired
  conclusion now follows directly by \eqref{eq:Wp1pell}.
\end{proof}
\begin{corollary}
  For  $0\leq t_q < \cdots < t_1$ we have
  \[ \tau(t_q,\ldots, t_1)(z) = \left(\prod_{j=1}^{n-1}
      j!\right)S[H_{t_q},\ldots, H_{t_1}](z,\ldots, z). \]
\end{corollary}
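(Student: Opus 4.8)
The statement to prove is the Corollary: for $0 \le t_q < \cdots < t_1$,
\[ \tau(t_q,\ldots, t_1)(z) = \left(\prod_{j=1}^{n-1} j!\right) S[H_{t_q},\ldots, H_{t_1}](z,\ldots, z). \]
The plan is to simply specialize the Proposition immediately preceding it (the one asserting $W[p_1,\ldots,p_n] = (\prod_{j=1}^{n-1} j!)\, S[p_1,\ldots,p_n](z,\ldots,z)$ for arbitrary polynomials $p_1,\ldots,p_n$) to the case where the polynomials are Hermite polynomials. Concretely, I would take $p_i = H_{t_{q+1-i}}$ for $i=1,\ldots,n$, so that $\{p_1,\ldots,p_n\} = \{H_{t_q},\ldots,H_{t_1}\}$ as an ordered list, and read off the conclusion.

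The only things to check are bookkeeping matters. First, one must recall the convention \eqref{eq:tauWr}: when $r=0$, $\tau(t_q,\ldots,t_1) = \Wr[H_{t_q},\ldots,H_{t_1}]$, which is exactly the Wronskian appearing on the left-hand side of the Proposition with the chosen $p_i$'s. Second, one must note that $n$ is being used here for the number of indices in the list, i.e. $n$ is the number of entries $t_q,\ldots,t_1$ (the same $q$); since all $t_i \ge 0$ are distinct, the $H_{t_i}$ are polynomials of distinct degrees and hence linearly independent, so the Wronskian is not identically zero and the bracket notation $S[\cdots]$ is well-defined (its denominator Vandermonde is nonzero). Third, the Wronskian and the $S$-bracket are both antisymmetric under permutation of their arguments in the same way, so the result is independent of the order in which the $H_{t_i}$ are listed; this means the ordering chosen above causes no sign ambiguity.

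There is essentially no obstacle here: the Corollary is a one-line consequence of the Proposition once the $r=0$ notational convention \eqref{eq:tauWr} is invoked. If anything, the "hard part" is purely a matter of matching notation — confirming that the symbol $n$ in the Corollary's product $\prod_{j=1}^{n-1} j!$ is the length $q$ of the index list, consistent with the Proposition's statement for $p_1,\ldots,p_n$. The proof I would write is therefore: "Apply the preceding Proposition with $p_i = H_{t_{q+1-i}}$, $i = 1,\ldots,n$ (where $n = q$), and use the convention \eqref{eq:tauWr}."
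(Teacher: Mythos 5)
Your proposal is correct and matches the paper exactly: the corollary is stated without a separate proof precisely because it is the immediate specialization of the preceding Proposition to $p_i = H_{t_i}$, combined with the convention \eqref{eq:tauWr} that $\tau(t_q,\ldots,t_1)=\Wr[H_{t_q},\ldots,H_{t_1}]$. Your bookkeeping observations (that $n$ here denotes the length $q$ of the index list, and that distinct degrees guarantee a nonzero Wronskian) are accurate and complete.
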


\section{Hermite-type rational solutions}
In this Section we develop the relationship between cyclic dressing
chains and Hermite-type $\tau$-functions. Every element of the chain
will be represented by a Maya diagram with successive elements related
by flip operations. With this representation, the construction of
rational solutions to a cyclic dressing chains is reduced to a
combinatorial question regarding cyclic Maya diagrams.

In what follows we make use of the Hirota bilinear notation:
\begin{align}
  D f \cdot g &=  f' g - g'f \\
  D^2 f\cdot g &= f'' g -  2 f' g' + g'' f
\end{align}
A direct calculation then establishes the following.
\begin{proposition}
  \label{prop:tauwU}
  Let $f=f(z),g=g(z)$ be rational functions, and let
  \begin{align}
    \label{eq:wzfg}
     w &= -z- \frac{f'}{f} +
            \frac{g'}{g}, \\
    U &= z^2 - 2 (\log f)'',\\
    V &= z^2 - 2 (\log g)'',
  \end{align}
  where $w=w(z), U=U(z), V=V(z)$.
  Then,
  \begin{equation}
    \label{eq:taubilin}
  \begin{aligned}
    (D^2-2zD) f\cdot g
    &= (w^2 + w' +1-U)fg\\
    &= (w^2 - w' -1-V)fg
  \end{aligned}
  \end{equation}
\end{proposition}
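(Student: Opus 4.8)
The result is an identity between rational functions, so I would prove it by a direct computation carried out after clearing the common factor $fg$. The efficient bookkeeping is in terms of logarithmic derivatives: set $\phi=f'/f=(\log f)'$ and $\psi=g'/g=(\log g)'$, and record the elementary identities $f''/f=\phi'+\phi^2$ and $g''/g=\psi'+\psi^2$. In this notation the hypotheses become $w=-z-\phi+\psi$, $U=z^2-2\phi'$, $V=z^2-2\psi'$, and, on differentiating $w$, $w'=-1-\phi'+\psi'$; in particular $\psi'-\phi'=w'+1$ and $U-V=2(\psi'-\phi')=2(w'+1)$.

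First I would expand the left-hand side with the Hirota rules and divide by $fg$:
\[
  \frac{(D^2-2zD)\,f\cdot g}{fg}
  =\frac{f''}{f}-2\,\frac{f'}{f}\,\frac{g'}{g}+\frac{g''}{g}
   -2z\left(\frac{f'}{f}-\frac{g'}{g}\right).
\]
Substituting the identities above rewrites this purely in terms of $(\phi-\psi)^2$, $z(\phi-\psi)$ and $\phi'+\psi'$. Next I would eliminate $\phi$ and $\psi$: since $\phi-\psi=-(z+w)$, every occurrence of $\phi-\psi$ becomes a polynomial in $z$ and $w$, while $\phi'+\psi'=2\phi'+(\psi'-\phi')=(z^2-U)+(w'+1)$ converts the remaining second-order part into $U$ and $w'$. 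Collecting terms yields $(w^2+w'+1-U)fg$. The companion identity then costs nothing extra: the relation $U-V=2(w'+1)$ recorded above is precisely the equality $w^2+w'+1-U=w^2-w'-1-V$, so $(D^2-2zD)f\cdot g$ equals $(w^2-w'-1-V)fg$ as well; alternatively one re-derives it directly by using $\phi'+\psi'=(z^2-V)-(w'+1)$ instead.

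I do not expect a conceptual obstacle here --- the proposition is in effect a packaging lemma whose content lies in the algebra --- and the one place that will demand attention is the signs: expressing $f''/f$ and $g''/g$ correctly in logarithmic-derivative form, and keeping track of the cross term $z(\phi-\psi)$ under the substitution $\phi-\psi=-(z+w)$. I would sanity-check the final expression on a trivial instance such as $f=z$, $g=1$ --- where $w=-z-1/z$ and $U=z^2+2/z^2$ --- before relying on the general formula.
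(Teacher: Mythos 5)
Your strategy is exactly the ``direct calculation'' that the paper leaves to the reader, and the second half of your argument (deriving the second equality from $U-V=2(w'+1)$) is correct. The problem is the step you do not actually carry out, namely ``collecting terms yields $(w^2+w'+1-U)fg$.'' With $\phi=f'/f$, $\psi=g'/g$, your displayed expansion gives
\[
  \frac{(D^2-2zD)\,f\cdot g}{fg}=(\phi-\psi)^2-2z(\phi-\psi)+\phi'+\psi',
\]
while substituting $\phi-\psi=-(z+w)$, $w'=-1-\phi'+\psi'$ and $U=z^2-2\phi'$ into the target yields
\[
  w^2+w'+1-U=(\phi-\psi)^2+2z(\phi-\psi)+\phi'+\psi'.
\]
These differ by $4z(\phi-\psi)=-4z(z+w)$, which is not identically zero, so the collection does not close. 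Your own proposed sanity check $f=z$, $g=1$ exposes this immediately: $(D^2-2zD)\,z\cdot 1=-2z$, so the left side divided by $fg$ is $-2$, whereas $w^2+w'+1-U=+2$.

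The discrepancy is a sign-convention clash rather than a conceptual failure: the identity holds if the first-order Hirota term is read as $Df\cdot g=g'f-f'g$ (equivalently, if the left side is $(D^2-2zD)\,g\cdot f$ in the paper's displayed convention), which turns the cross term into $+2z(\phi-\psi)$. That orientation is also the one the rest of the paper actually relies on --- for instance, the base case of Theorem~\ref{thm:Mbilin} reduces $(D^2-2zD+2m)\,1\cdot H_m=0$ to the Hermite equation $H_m''-2zH_m'+2mH_m=0$ only if $D\,1\cdot H_m=+H_m'$. So your proof needs one explicit repair: either flip the sign of the $-2zD$ contribution in your expansion and justify it by fixing the orientation of $D$, or prove the identity for $(D^2-2zD)\,g\cdot f$. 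With that single correction, and with the $f=z$, $g=1$ check actually executed rather than merely promised, the argument is complete.
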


We are now able to exhibit a bilinear formulation for the dressing
chain \eqref{eq:wachain}.

\begin{proposition}
  \label{prop:tauchain}
  Suppose that
  $\tau_i=\tau_i(z),\epsilon_i,\sigma_i\in \{ -1,1\},\; i=0,1,\ldots,
  p-1$ is a sequence of functions and constants that satisfies
  \begin{equation}
    \label{eq:tauchain}
    (D^2+2\sigma_i zD+\epsilon_i) \tau_i \cdot \tau_{i+1}=0,\qquad
    i=0,1,\ldots, p-1      \mod p
  \end{equation}
  Then,
  \begin{align}
    \label{eq:witaui}
    w_i &= \sigma_iz- \frac{\tau_{i}'}{\tau_{i}} +
      \frac{\tau_{i+1}'}{\tau_{i+1}}, \qquad
    i=0,1,\ldots, p-1  \mod p\\ \nonumber
    a_i &= \epsilon_i- \epsilon_{i+1}+\sigma_i+\sigma_{i+1}
  \end{align}
  satisfy the $p$-cyclic dressing chain \eqref{eq:wachain} with
  \begin{equation}
    \label{eq:Deltasigma}
     \Delta = -2 \sum_{j=0}^{p-1} \sigma_i.
  \end{equation}
\end{proposition}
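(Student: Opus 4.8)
The plan is to convert each bilinear equation \eqref{eq:tauchain} into an equivalent scalar relation expressing $z^2-2(\log\tau_i)''$ in terms of $w_i$ and $w_i'$, and then to eliminate the $\tau$'s by playing consecutive such relations off against one another. The $p$-cyclic dressing chain \eqref{eq:wachain} then drops out at once, and the value of $\Delta$ in \eqref{eq:Deltasigma} follows from \eqref{eq:Deltasumalpha} by a telescoping sum.

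The first and only substantial step is to generalize the identity behind Proposition \ref{prop:tauwU} so as to allow both $\sigma=\pm1$ and to carry along the additive constant $\epsilon$. Using the elementary identities $(Df\cdot g)/(fg)=f'/f-g'/g$ and $(D^2f\cdot g)/(fg)=(\log f)''+(\log g)''+(f'/f-g'/g)^2$, substituting $w=\sigma z-f'/f+g'/g$ exactly as in \eqref{eq:witaui}, and using $\sigma^2=1$, a direct calculation shows that for rational $f,g$ the bilinear equation $(D^2+2\sigma zD+\epsilon)f\cdot g=0$ is equivalent to the pair of scalar identities
\begin{equation*}
  z^2-2(\log f)'' = w^2+w'-\sigma+\epsilon,\qquad
  z^2-2(\log g)'' = w^2-w'+\sigma+\epsilon .
\end{equation*}
(Only the structure matters below: the two right sides differ by $2(w'-\sigma)$, which is forced by $w'=\sigma+(\log g)''-(\log f)''$, and the $\pm\sigma$ correction flips sign when $f$ and $g$ are swapped; the precise signs are settled by the computation, the $\epsilon$ contributing trivially through $D^0f\cdot g=fg$.) Taking $(f,g)=(\tau_i,\tau_{i+1})$ with the constants $(\sigma_i,\epsilon_i)$, this gives, for each $i$,
\begin{equation*}
  z^2-2(\log\tau_i)'' = w_i^2+w_i'-\sigma_i+\epsilon_i ,\qquad
  z^2-2(\log\tau_{i+1})'' = w_i^2-w_i'+\sigma_i+\epsilon_i .
\end{equation*}

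Now I would chain these. The quantity $z^2-2(\log\tau_{i+1})''$ is determined both by the second relation above at index $i$ (through $w_i$) and by the first relation at index $i+1$ (through $w_{i+1}$). Equating the two and cancelling $z^2-2(\log\tau_{i+1})''$ leaves
\begin{equation*}
  w_i^2-w_i'+\sigma_i+\epsilon_i = w_{i+1}^2+w_{i+1}'-\sigma_{i+1}+\epsilon_{i+1},
\end{equation*}
that is, $(w_i+w_{i+1})'+w_{i+1}^2-w_i^2=\epsilon_i-\epsilon_{i+1}+\sigma_i+\sigma_{i+1}=a_i$, which is precisely \eqref{eq:wachain}. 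Since every index is read modulo $p$ (so $\tau_p=\tau_0$, $\sigma_p=\sigma_0$, $\epsilon_p=\epsilon_0$), the relation holds for $i=0,1,\ldots,p-1$, and the full $p$-cyclic chain is satisfied.

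Finally, by \eqref{eq:Deltasumalpha} the shift is $\Delta=-(a_0+\cdots+a_{p-1})$; summing $a_i=\epsilon_i-\epsilon_{i+1}+\sigma_i+\sigma_{i+1}$ over one period, the $\epsilon$-terms telescope to zero and each $\sigma_j$ is counted exactly twice, so $\sum_{i=0}^{p-1}a_i=2\sum_{i=0}^{p-1}\sigma_i$ and therefore $\Delta=-2\sum_{i=0}^{p-1}\sigma_i$, as in \eqref{eq:Deltasigma}. (One may cross-check against \eqref{eq:Uplus1}: the shifted potentials $U_i:=z^2-2(\log\tau_i)''-2(\sigma_0+\cdots+\sigma_{i-1})$ satisfy $U_{i+1}=U_i-2w_i'$ and hence close up after a period to $U_{i+p}=U_i+\Delta$ with the same $\Delta$.) I expect the one genuinely delicate point to be the first step — carrying out the direct calculation carefully enough to fix every sign, and verifying it for $\sigma=+1$ as well as $\sigma=-1$, since Proposition \ref{prop:tauwU} is stated only for a single value of $\sigma$; once that is in hand, the remainder is bookkeeping with indices modulo $p$.
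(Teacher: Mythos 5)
Your proposal is correct and follows essentially the same route as the paper: the paper likewise converts each bilinear relation into the same pair of scalar identities $z^2-2(\log\tau_i)''=w_i^2+w_i'+\epsilon_i-\sigma_i$ and $z^2-2(\log\tau_{i+1})''=w_i^2-w_i'+\epsilon_i+\sigma_i$ (via Proposition \ref{prop:tauwU} and the observation that swapping the factors flips the sign of the $zD$ term), and then obtains \eqref{eq:wachain} by eliminating the potentials between consecutive indices. The only cosmetic difference is that the paper packages the elimination through the auxiliary potentials $U_i=z^2-2(\log\tau_i)''-2\sum_{j<i}\sigma_j$ and the factorization-chain relations \eqref{eq:Riccati}--\eqref{eq:alphaidef}, reading off $\Delta=U_p-U_0$ directly, which is precisely the cross-check you mention at the end; your caution about pinning down the signs for both values of $\sigma$ is well placed, since that is where all the content of the reduction lies.
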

\begin{proof}
  Set
  \begin{align*}
    U_i(z) &= z^2 - 2 (\log \tau_i)''(z) -2
             \sum_{j=0}^{i-1}\sigma_j,\quad i=0,1,2,\ldots, p \\
    \lambda_i &=\epsilon_i-2\sum_{j=0}^{i-1} \sigma_j -\sigma_i.
  \end{align*}
  Observe that
  \[ (D^2-2\sigma_izD + \lambda) \tau_{i+1} \cdot \tau_i =
    (D^2+2\sigma_i zD +\lambda) \tau_i \cdot \tau_{i+1}.\]
  Hence, by \eqref{eq:taubilin},
  \begin{align*}
    w_i^2+w_i'+ \epsilon_i
    &= z^2 - 2 (\log \tau_i)'' + \sigma_i\\
    w_{i}^2-w_{i}'+ \epsilon_i
    &= z^2 - 2 (\log      \tau_{i+1})'' - \sigma_i 
  \end{align*}
  The above relations are equivalent to \eqref{eq:Riccati}, and hence
  the $U_i(z),\lambda_i$ constitute a periodic factorization chain
  \eqref{eq:Dxform} with $\Delta=U_p-U_0$ given by
  \eqref{eq:Deltasigma}.  Applying \eqref{eq:alphaidef}, we obtain
  \[ a_i = \lambda_i-\lambda_{i+1} = \epsilon_i-\epsilon_{i+1} +
    \sigma_i + \sigma_{i+1}.\] Hence, with \eqref{eq:witaui} as
  definition of $w_i(z)$ and $a_i$, relation \eqref{eq:wachain} is
  satisfied.
\end{proof}
\noindent
Note: this proposition should not be taken as a claim that \emph{all}
rational solutions of \eqref{eq:wachain} may be obtained in this
fashion.

In order to obtain polynomial solutions of \eqref{eq:tauchain} we now
introduce the following.

\begin{definition}
  A flip at $m\in \Z$ is the involution $\phi_m:\cM\to \cM$ defined by
\begin{equation}\label{eq:flipdef}
 \phi_m : M \mapsto
\begin{cases}
   M \cup \{ m \} & \text{ if } m\notin M \\
   M \setminus \{ m \} & \text{ if } m\in M 
\end{cases},\qquad M\in \cM.
\end{equation}
In the first case, we say that $\phi_m$ acts on $M$ by a
state-deleting transformation ($\emptybox\to \boxdot$).  In the second
case, we say that $\phi_m$ acts by a state-adding transformation
($\boxdot\to\emptybox$).  We define the \emph{flip group} $\cF$ to be
the group of transformations of $\cM$ generated by flips
$\phi_m,\; m\in \Z$.  A \emph{multi-flip} is an element of $\cF$.
\end{definition}

\begin{theorem}
  \label{thm:Mbilin}
  Let $M_1\subset \Z$ be a Maya diagram, and
  $M_2 = M_1 \cup \{ m\},\; m\notin M_1$ another Maya diagram obtained
  by a state-deleting transformation.  Then, the corresponding
  pseudo-Wronskians satisfy the bilinear relation
  \begin{equation}
    \label{eq:bilinM12}
    (D^2-2zD + \epsilon) \tau_{M_1} \cdot \tau_{M_2} = 0,\quad \tau_M=\tau_M(z)
   \end{equation}
   where
   \begin{equation}
     \label{eq:epsilonM12}
     \epsilon = 2 (\deg \tau_{M_2} - \deg \tau_{M_1}).
   \end{equation}
   Conversely, suppose that \eqref{eq:bilinM12} holds for some Maya
   diagrams $M_1, M_2\subset \Z$ and some $\epsilon\in \C$.  Then,
   necessarily $M_2$ is obtained from $M_1$ by a state-deleting
   transformation and $\epsilon$ takes the value shown above.
\end{theorem}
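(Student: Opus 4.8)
The plan is to reduce the statement to Proposition~\ref{prop:tauwU} by matching the bilinear operator $D^2-2zD+\epsilon$ against the identity \eqref{eq:taubilin}. First I would recall from Section~\ref{sec:Maya} that a flip $\phi_m$ connecting $M_1$ and $M_2 = M_1 \cup\{m\}$ corresponds, at the level of $\tau$-functions, to a Darboux (state-deleting) transformation between the Schrödinger operators $-D^2 + U_{M_1}$ and $-D^2 + U_{M_2}$, where $U_M = z^2 - 2(\log\tau_M)''$. Concretely, the eigenfunction realizing this Darboux step is built from the Wronskian quotient $\psi = \tau_{M_2}/\tau_{M_1}$ times a Gaussian factor, and one checks that $w := -z - \tau_{M_1}'/\tau_{M_1} + \tau_{M_2}'/\tau_{M_2}$ is precisely the log-derivative appearing in \eqref{eq:wzfg} with $f=\tau_{M_1}$, $g=\tau_{M_2}$. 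Applying Proposition~\ref{prop:tauwU} directly then gives $(D^2-2zD)\tau_{M_1}\cdot\tau_{M_2} = (w^2+w'+1-U_{M_1})\tau_{M_1}\tau_{M_2}$, so the bilinear relation \eqref{eq:bilinM12} holds with $\epsilon = U_{M_1} - w^2 - w' - 1$ if and only if this combination is a constant.

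The key computation is to identify that constant. Here I would use that the Darboux transformation $U_{M_1}\mapsto U_{M_2}$ has a spectral parameter $\lambda$, and that for Hermite-type potentials $U_M = z^2 - 2(\log\tau_M)''$ the relevant eigenvalue is determined by the degree of $\tau_M$: if $\deg \tau_M = d_M$, then $U_M = z^2 + 2d_M + O(1/z^2)$ near infinity isn't quite right — rather the point is that $\psi = e^{-z^2/2}\tau_{M_2}/\tau_{M_1}$ (up to the Gaussian normalization dictated by whether $m$ is added on the positive or negative side) solves $-\psi'' + U_{M_1}\psi = \lambda\psi$ with $\lambda$ an odd integer controlled by $m$. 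Matching the $w^2+w'$ expansion against \eqref{eq:taubilin} forces $\epsilon = \lambda + \text{const}$, and a careful bookkeeping of the degree of the pseudo-Wronskian under a single flip (using Theorem~\ref{thm:tauschur} or the explicit pseudo-Wronskian determinant \eqref{eq:pWdef2}) shows this equals $2(\deg\tau_{M_2} - \deg\tau_{M_1})$. The cleanest route is probably to compare leading behavior: $(D^2-2zD+\epsilon)\tau_{M_1}\cdot\tau_{M_2}$ is a polynomial, and since the right-hand side of \eqref{eq:taubilin} is $(\text{const})\tau_{M_1}\tau_{M_2}$, one reads off $\epsilon$ by looking at the top-degree coefficient of $D^2 f\cdot g - 2zD f\cdot g$, which involves exactly $\deg f - \deg g$.

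For the converse, suppose \eqref{eq:bilinM12} holds for arbitrary Maya diagrams $M_1, M_2$ and some $\epsilon$. Setting $w = -z - \tau_{M_1}'/\tau_{M_1} + \tau_{M_2}'/\tau_{M_2}$ and $U = z^2 - 2(\log\tau_{M_1})''$, $V = z^2-2(\log\tau_{M_2})''$, Proposition~\ref{prop:tauwU} shows \eqref{eq:bilinM12} is equivalent to $w^2+w'+1-U = \epsilon = -(w^2-w'-1-V)$, i.e.\ $w$ satisfies a Riccati pair \eqref{eq:Riccati} linking $U$ and $V$ with $\lambda = \epsilon - 1$ (up to sign conventions), so $-D^2+U$ and $-D^2+V$ are Darboux-connected. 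But $U$ and $V$ are rational potentials of the exceptional-Hermite type, and the classification of rational Darboux transformations connecting two such potentials (trivial monodromy, cf.\ the discussion around \eqref{eq:shift} and \cite{oblomkov1999monodromy}) forces the corresponding Maya diagrams to differ by a single flip; since the bilinear relation as written has $\tau_{M_1}$ in the "larger potential" slot, it must be a state-deleting flip $M_2 = M_1\cup\{m\}$. The value of $\epsilon$ is then pinned down by the forward direction. The main obstacle I anticipate is the sign and Gaussian-factor bookkeeping — distinguishing flips on the positive side (which involve $H_n$) from flips on the negative side (which involve $\hat H_{-n-1} = e^{z^2}\tilde H_n$) — and making sure the degree formula $\epsilon = 2(\deg\tau_{M_2}-\deg\tau_{M_1})$ comes out with the correct sign in both cases; this is where one should lean on the translation-invariance of the normalized pseudo-Wronskian (Proposition~\ref{prop:equiv}) to reduce, without loss of generality, to the case where $M_1$ is in standard form and $m$ lies to the right of the origin.
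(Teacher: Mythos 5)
Your reduction of \eqref{eq:bilinM12} to the statement that $w^2+w'+1-U_{M_1}$ is constant, where $w=-z-\tau_{M_1}'/\tau_{M_1}+\tau_{M_2}'/\tau_{M_2}$, is exactly right, and so is the observation that once constancy is known the value of $\epsilon$ can be read off from leading coefficients, consistent with \eqref{eq:epsilonM12}. The gap is that the constancy is precisely the content of the theorem: by Proposition \ref{prop:tauwU} and \eqref{eq:Lipsii}, saying that $w^2+w'+1-U_{M_1}$ is constant is the same as saying that $e^{-z^2/2}\tau_{M_2}/\tau_{M_1}$ is a formal eigenfunction of $-D^2+U_{M_1}$, and your justification of this --- ``the eigenfunction realizing this Darboux step is built from the Wronskian quotient, and one checks that\dots'' --- simply restates it. To turn the plan into a proof you must either import Crum's iterated-Darboux formula (which the paper nowhere states, and which in its classical form covers only pure Wronskians of genuine eigenfunctions, so the pseudo-Wronskian factors $e^{z^2}\th_n$ still require the standard-form reduction you mention at the end), or prove the fact directly. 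The paper does the latter by induction on the number $q$ of positive elements of $M_1$: the base case $q=0$ is the Hermite differential equation, and the inductive step removes the smallest positive element $t_q$, applies the inductive hypothesis to obtain the two ``smaller'' bilinear relations \eqref{eq:tau01} and \eqref{eq:tau03}, and then invokes Lemma \ref{lem:tau0123} together with the Wronskian identity \eqref{eq:tau3def} to produce \eqref{eq:bilinM12}. That lemma is the engine of the whole argument and is absent from your plan.

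The converse has a similar problem. You appeal to a ``classification of rational Darboux transformations connecting two exceptional-Hermite potentials'' to force $M_2=M_1\cup\{m\}$, but no such classification is stated in the paper, and \cite{oblomkov1999monodromy} does not supply it in the form you need; establishing it would be at least as hard as the theorem itself. The paper's converse is far more elementary: after normalizing so that both $\tau$-functions are pure Hermite Wronskians, it runs the same $q$-induction in reverse to reduce to the case where $M_1$ is trivial, at which point \eqref{eq:bilinM12} becomes the Hermite equation and the conclusion follows from the fact that the Hermite polynomials are its only polynomial solutions.
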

\noindent The proof of this theorem requires a number of intermediate
results.
\begin{lemma}
  \label{lem:tau0123}
  Let $\tau_0(z),\tau_1(z), \tau_2(z), \tau_3(z)$ be rational
  functions such that
  \begin{equation}
    \label{eq:tau3def}
    \tau_0 \tau_2 = \Wr[\tau_1,\tau_3].
  \end{equation}
  Associate the edges of the following diagram to the bilinear
  relations displayed below:
  \begin{center}
    \begin{tikzpicture}
      \node (t3) at (0:2) {$\tau_2$};
      \node (t1) at (90:1) {$\tau_1$}
      edge node[above,sloped] {\tiny$\epsilon_2-2$}   (t3);
      \node (t2) at (270:1) {$\tau_3$}
      edge node[below,sloped] {\tiny$\epsilon_1-2$} (t3);
      \node (t0) at      (180:2) {$\tau_0$}
      edge  node[above] {\tiny$\epsilon_1$}  (t1)
      edge  node[below] {\tiny$\epsilon_2$}  (t2);
    \end{tikzpicture}
  \end{center}
  \begin{align}
    \label{eq:tau01}
    &(D^2-2zD+\epsilon_1) \tau_0 \cdot \tau_1 = 0,\\
    \label{eq:tau03}
    &(D^2-2zD+\epsilon_2) \tau_0 \cdot \tau_3 = 0,\\
    \label{eq:tau12}
    &(D^2-2zD+\epsilon_2-2) \tau_1 \cdot \tau_2 = 0, \\
    \label{eq:tau32}
    &(D^2-2zD+\epsilon_1-2) \tau_3 \cdot \tau_2 = 0,
  \end{align}
  where $\epsilon_1,\epsilon_2\in \C$ are constants.
  Then, necessarily any two relations corresponding to connected edges
  entail the other two relations.
\end{lemma}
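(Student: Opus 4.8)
The plan is to recognize the hypothesis $\tau_0\tau_2=\Wr[\tau_1,\tau_3]$ as the Crum (two-step Darboux) formula and to deduce the statement from the permutability of Darboux transformations. Attach to each $\tau_i$ the potential $U_i=z^2-2(\log\tau_i)''$. By Proposition~\ref{prop:tauwU}, relation \eqref{eq:tau01} together with $w:=-z-\tau_0'/\tau_0+\tau_1'/\tau_1$ is equivalent to the Riccati pair $w'+w^2+(1+\epsilon_1)=U_0$ and $-w'+w^2+(1+\epsilon_1)=U_1+2$; equivalently, $-D^2+(U_1+2)$ is the Darboux transform of $-D^2+U_0$ along the formal eigenfunction $\psi_1:=e^{-z^2/2}\tau_1/\tau_0$ at eigenvalue $1+\epsilon_1$. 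Reading \eqref{eq:tau03} the same way gives $\psi_3:=e^{-z^2/2}\tau_3/\tau_0$ as an eigenfunction of $-D^2+U_0$ at eigenvalue $1+\epsilon_2$, while reading \eqref{eq:tau12} and \eqref{eq:tau32} through the second line of \eqref{eq:taubilin} exhibits $-D^2+(U_2+4)$ as a one-step Darboux transform of $-D^2+(U_1+2)$ and of $-D^2+(U_3+2)$ respectively, at eigenvalues $1+\epsilon_2$ and $1+\epsilon_1$.

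Now I bring in the hypothesis. Using $\Wr[hf,hg]=h^2\Wr[f,g]$ and $\Wr[f/k,g/k]=\Wr[f,g]/k^2$ one gets $\Wr[\psi_1,\psi_3]=e^{-z^2}\Wr[\tau_1,\tau_3]/\tau_0^2=e^{-z^2}\tau_2/\tau_0$, and then a short computation with logarithmic derivatives yields $U_0-2(\log\Wr[\psi_1,\psi_3])''=U_2+4$; so $-D^2+(U_2+4)$ is exactly the two-step Crum transform of $-D^2+U_0$ built from $\psi_1,\psi_3$. Suppose first the two given edges meet at $\tau_0$, i.e.\ \eqref{eq:tau01} and \eqref{eq:tau03} hold. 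Permutability factors this two-step transform as $U_0\to U_1+2\to U_2+4$, the second step being along $\hat\psi_3:=\Wr[\psi_1,\psi_3]/\psi_1=e^{-z^2/2}\tau_2/\tau_1$, an eigenfunction of $-D^2+(U_1+2)$ still at eigenvalue $1+\epsilon_2$. Feeding $(\log\hat\psi_3)'=-z-\tau_1'/\tau_1+\tau_2'/\tau_2$ back into Proposition~\ref{prop:tauwU} with $(f,g)=(\tau_1,\tau_2)$ gives $(D^2-2zD)\tau_1\cdot\tau_2=(2-\epsilon_2)\tau_1\tau_2$, which is \eqref{eq:tau12}; the symmetric factorization $U_0\to U_3+2\to U_2+4$ gives \eqref{eq:tau32}. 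When the two given edges instead meet at $\tau_1$, i.e.\ \eqref{eq:tau01} and \eqref{eq:tau12} hold, the same bookkeeping identifies $-D^2+(U_1+2)$ both as a one-step transform of $-D^2+U_0$ at $1+\epsilon_1$ and as the source of a one-step transform onto $-D^2+(U_2+4)$ at $1+\epsilon_2$; composing, $-D^2+(U_2+4)$ is again a two-step transform of $-D^2+U_0$, and pulling the second eigenfunction back through the first step recovers $\psi_3=e^{-z^2/2}\tau_3/\tau_0$, i.e.\ \eqref{eq:tau03}, after which \eqref{eq:tau32} follows as before. The cases where the common vertex is $\tau_2$ or $\tau_3$ are the mirror images, obtained by reading each Darboux step in the opposite direction together with the $1\leftrightarrow3$ symmetry of the diagram.

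An equivalent purely computational route is available: set $u_i=\tau_i'/\tau_i$, use \eqref{eq:taubilin} to turn each edge relation into a scalar identity among $u_i-u_j$, $u_i'+u_j'$, $z$ and $\epsilon$, note that the hypothesis becomes $u_0+u_2-u_1-u_3=R'/R$ with $R:=u_3-u_1=\tau_0\tau_2/(\tau_1\tau_3)$, eliminate $u_0'$ between the two given relations to solve for the single remaining free combination $u_0-u_2$ as an explicit rational function of $z$ and $R$, and verify that the other two relations then collapse to identities. I expect the real obstacle to be not the outline but the sign and normalization bookkeeping — the order of the arguments in each Hirota bracket, the $e^{\pm z^2/2}$ conjugation relating $\tau$-functions to eigenfunctions, and the additive shifts $U_i\mapsto U_i+2k$ — since it is precisely this bookkeeping that forces the asymmetric edge labels $\epsilon_1,\epsilon_2,\epsilon_1-2,\epsilon_2-2$, and an error there would make the square fail to close.
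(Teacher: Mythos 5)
Your argument takes a genuinely different route from the paper's: where the paper works directly with the log-derivative combinations $w_{ij}$ and grinds out the nonlinear superposition formula $(w_{01}+w_{12})(w_{03}-w_{01})=\epsilon_1-\epsilon_2$, you repackage the same content as Crum's theorem plus permutability of Darboux transformations, with the hypothesis \eqref{eq:tau3def} entering exactly where the paper uses $\Wr[\tau_1,\tau_3]=\tau_1\tau_3(w_{03}-w_{01})$. For the pair of edges meeting at $\tau_0$ --- the only case the paper actually writes out --- your argument is correct and arguably cleaner, granting the standard facts that $\Wr[\psi_1,\psi_3]/\psi_1$ is a formal eigenfunction of the once-transformed operator at the unchanged eigenvalue and that the two-step Crum potential is $U_0-2(\log\Wr[\psi_1,\psi_3])''$; your identification of that potential with $U_2+4$ via the hypothesis is right, as is the mirror case of edges meeting at $\tau_2$, where the hypothesis pins down $\tau_0=\Wr[\tau_1,\tau_3]/\tau_2$ uniquely.

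The gap is in the case where the two given edges meet at $\tau_1$ (equivalently $\tau_3$). There, ``pulling the second eigenfunction back through the first step'' does not recover $\psi_3=e^{-z^2/2}\tau_3/\tau_0$: the inverse Darboux map determines the preimage only up to adding multiples of $\psi_1$, so the data \eqref{eq:tau01}, \eqref{eq:tau12} together with \eqref{eq:tau3def} constrain $\tau_3$ only modulo $\C\,\tau_1$, and \eqref{eq:tau03} holds for exactly one representative of that coset. This is not sign bookkeeping; that case of the lemma is in fact false as stated. Take $\tau_0=1$, $\tau_1=2z$, $\tau_2=2$, $\tau_3=2cz-1$ with $c\neq 0$: then $\Wr[\tau_1,\tau_3]=2=\tau_0\tau_2$, relation \eqref{eq:tau01} holds with $\epsilon_1=2$ and \eqref{eq:tau12} holds with $\epsilon_2=0$, yet by Proposition \ref{prop:tauwU} one computes $(D^2-2zD+\epsilon_2)\,\tau_0\cdot\tau_3=-4cz\neq 0$, so \eqref{eq:tau03} (and likewise \eqref{eq:tau32}) fails. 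The paper sidesteps this by proving only the $\tau_0$ implication and declaring the rest ``analogous''; since only that implication is used in the proof of Theorem \ref{thm:Mbilin}, nothing downstream breaks, but your proof (and the lemma itself) should be restricted to the two equivalences anchored at $\tau_0$ and at $\tau_2$, the vertices at which the Wronskian hypothesis determines the fourth function uniquely.
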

\begin{proof}
  The lemma asserts two claims.  First that \eqref{eq:tau01}
  \eqref{eq:tau03} together, are logically equivalent to
  \eqref{eq:tau12} \eqref{eq:tau32} together.  The other assertion is
  that \eqref{eq:tau01} \eqref{eq:tau12} together are logically
  equivalent to \eqref{eq:tau03} \eqref{eq:tau32} together.  We will
  demonstrate how \eqref{eq:tau01} \eqref{eq:tau03} together imply
  \eqref{eq:tau12}.  All the other demonstrations can be argued
  analogously, and so we omit them.

  Begin by setting
  \begin{align}
    \label{eq:wijdef}
    w_{ij}(z) &=  - z - \frac{\tau_i'(z)}{\tau_i(z)} +
                \frac{\tau_j'(z)}{\tau_j(z)} ,\quad i\in \{
                0,1,2,3\},\; i\neq j,\\
    U_i(z) &= z^2 - 2 (\log \tau_i)''(z),\quad i\in \{ 0,1,2,3\}.    
  \end{align}
  By \eqref{eq:taubilin} of Proposition \ref{prop:tauwU},
  \[ U_0 = w_{01}^2 + w_{01}' + \epsilon_1+1 = w_{03}^2+ w_{03}' +
    \epsilon_2+1,\]  
  which we rewrite as
  \begin{equation}
    \label{eq:w0201}
     \frac{w'_{03} - w'_{01}}{w_{03}-w_{01}} =  -w_{03}-w_{01} +
    \frac{\alpha}{w_{03} - w_{01}},\quad \alpha = \epsilon_1 - \epsilon_2.    
  \end{equation}
  Write
  \[ \Wr[\tau_1, \tau_3 ] = \tau_1 \tau_3 \left(
      \frac{\tau_3'}{\tau_3} - \frac{\tau_1'}{\tau_1}\right) = \tau_1
    \tau_3 (w_{03}- w_{01}).\]
  Hence, by  \eqref{eq:tau3def},
  \begin{gather*}
    \tau_0 \tau_2 = \tau_1 \tau_3 (w_{03}- w_{01})\\ 
     \frac{\tau_2'}{\tau_2} =  - \frac{\tau_0'}{\tau_0} +
    \frac{\tau_1'}{\tau_1}+
    \frac{\tau_3'}{\tau_3}  +\frac{ w_{03}' - w_{01}'}{w_{03}-w_{01}} 
  \end{gather*}
  which by \eqref{eq:wijdef} \eqref{eq:w0201} maybe rewritten as
  \begin{equation}
    \label{eq:w13}
    \begin{aligned}
      &w_{12} = w_{03} + \frac{w_{03}' - w_{01}'}{w_{03}-w_{01}} =
      -w_{01} + \frac{\alpha}{w_{03}- w_{01}},\\
      &(w_{01}+w_{12})(w_{03}-w_{01}) = \alpha.      
    \end{aligned}
  \end{equation}
  It follows that
  \[ \frac{w_{12}'+w_{01}'}{w_{12} + w_{01}} + \frac{w_{03}' -
      w_{01}'}{w_{03}-w_{01}} =0\]
  Hence,
  \begin{align*}
    &w_{12} - w_{01}  + \frac{w_{12}' + w_{01}'}{w_{12} + w_{01}} -
    \frac{\alpha}{w_{12} + w_{01} } =\\
    &\qquad = w_{03}+ \frac{w_{03}' -
    w_{01}'}{w_{03}-w_{01}}-w_{01} + \frac{w_{12}' + w_{01}'}{w_{12} + w_{01}}
    - \frac{\alpha}{w_{12} + w_{01} }    = 0
  \end{align*}
  Equivalently,
  \begin{equation}
    \label{eq:w13w01}
    w_{12}^2 - w_{01}^2 + w_{12}' + w_{01}' - \alpha = 0.
  \end{equation}
  Hence, by  \eqref{eq:taubilin} 
  \[ U_1= w_{01}^2 - w_{01}' + \epsilon_1-1  = w_{12}^2 + w_{12}' + \epsilon_2-1.\]
  Relation \eqref{eq:tau12} now follows by Proposition  \ref{prop:tauwU}.  
\end{proof}
\begin{proof}[Proof of Theorem \ref{thm:Mbilin}]
  Suppose that $M_2= M_1\cup \{m\},\; m\notin M_1$.  We claim that
  \eqref{eq:bilinM12} holds. By Proposition \ref{prop:equiv} no
  generality is lost if we assume that $M_1$ is in standard form, and
  hence that
  \[ \tau_{M_1} = \tau(t_q,\ldots t_1),\quad t_q< \cdots < t_1 \] is a
  pure Wronskian.  We proceed by induction on $q$, the number of
  positive elements of $M_1$.  If $q=0$, then $\tau_{M_1}=1$ and
  $\tau_{M_2} = H_m$.  The bilinear relation \eqref{eq:bilinM12} is
  then nothing but the classical Hermite differential equation
  \[ H_m''(z) - 2z H_m'(z) + 2m H_m(z) = 0.\] Suppose that $q>0$ and
  that the claim has been shown to be true for all $M_1$ with fewer
  positive elements.  Set
  \[ M_0 = M_1 \setminus \{ t_q\},\qquad M_2 = M_0 \cup \{ m \}.\]
  By an elementary calculation,
  \begin{align*}
    \deg \tau_{M_1} &=  \sum_{i=1}^q m_i - \frac12 q(q-1)\\
    \deg \tau_{M_2} &=  \sum_{i=1}^q m_i + m -  \frac12 q(q+1)\\
    \deg \tau_{M_0} &=  \sum_{i=1}^{q-1} m_i - \frac12 (q-1)(q-2)\\
    \deg \tau_{M_3} &=  \sum_{i=1}^{q-1} m_i + m - \frac12 q(q-1)
  \end{align*}
  Hence, by the inductive hypothesis, \eqref{eq:tau01}
  \eqref{eq:tau03} hold with
  \begin{align*}
    \epsilon_1 &= 2(\deg \tau_{M_1} - \deg \tau_{M_0}) = 2m_q-2q+2,\\
    \epsilon_2 &= 2(\deg \tau_{M_3} - \deg \tau_{M_0}) = 2m-2q+2   
  \end{align*}
  Observe that
  \[ 2(\deg \tau_{M_2} - \deg \tau_{M_1}) = 2m-2q\]
  Hence, by Lemma \ref{lem:tau0123} \eqref{eq:bilinM12} holds also.

  Conversely, suppose that \eqref{eq:bilinM12} holds for some
  $\lambda\in \C$.  We claim that
  $M_2 = M_1 \cup \{m\},\; m\notin M_1$.
  Without loss of generality, suppose that
  \[ M_1 = (\;\mid t_q, \ldots, t_1    ),\quad t_1>\cdots > t_q>0 \]
  is in standard form and hence that
  \[ \tau_{M_1} = \tau(t_q,\ldots, t_1).\] Using Proposition
  \ref{prop:equiv}, assume without loss of generality that
  \[ \tau_{M_2} = \tau(\hatt_{\hq},\ldots, \hatt_1),\] is also
  a pure Wronskian with $\hq \geq q$.  Using the reduction argument
  above it then suffices to demonstrate this claim for the case where
  $M_1$ is the trivial Maya diagram; $q=0$.  In this case,
  \eqref{eq:bilinM12} reduces to the Hermite differential equation.
  The Hermite polynomials are the unique polynomial solutions and so
  the claim follows.
\end{proof}

We see thus that Hermite-type $\tau$-functions are indexed by Maya
diagrams, and that flip operations on Maya diagrams correspond to
bilinear relations between these $\tau$-functions.  Since $p$-cyclic
chains of bilinear relations \eqref{eq:tauchain} correspond to
rational solutions of the $A_{p-1}$ Painlev\'e system, it now becomes
feasible to construct rational solutions in terms of cycles of Maya
diagrams.

\begin{definition}
  \label{def:multiflip}
  For $p=1,2,\ldots$ and $\bmu=(\mu_0,\ldots, \mu_{p-1})\in \Z^p$, let
   \begin{equation}
     \label{eq:phimudef}
     \phi_\bmu= \phi_{\mu_0} \circ \cdots \circ
     \phi_{\mu_{p-1}},\quad
   \end{equation}
   denote the indicated multi-flip.
   We will call $\bmu\in \Z^p$ non-degenerate if the set
   $\{ \mu_0,\ldots, \mu_{p-1} \}$ has cardinality $p$, that is if
   $\mu_i\neq \mu_j$ for $i\neq j$.  If $\hbmu\subset \Z$ is a set of
   cardinality $q$, we let $\phi_\hbmu = \phi_\bmu$, where
   $\bmu\in \Z^q$ is any non-degenerate enumeration of $\hbmu$.
   Finally, given a finite set $\hbmu\subset \Z$ we will call a
   sequence $\bmu=(\mu_0,\ldots, \mu_{p-1})$ an \emph{odd enumeration}
   of $\hbmu$ if
   \[ \hbmu = \{ a\in \Z \colon m_\bmu(a) \equiv 1 \mod 2 \},\] where
   $m_\bmu(a)\geq 0$ is the number of times that $a\in \Z$ occurs
   in $\bmu$.
\end{definition}

\begin{proposition}
  \label{prop:oddenum}
  For $\bmu\in \Z^p$ and a finite $\hbmu\subset \Z$ we have
  $\phi_\bmu = \phi_\hbmu$ if and only if $\bmu$ is an odd enumeration
  of $\hbmu$.
\end{proposition}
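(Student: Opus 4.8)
The plan is to reduce the statement to a purely combinatorial fact about how the flip maps compose, exploiting that each $\phi_m$ is an involution and that distinct flips commute. First I would observe that the flip group $\cF$ is abelian: for $m\neq m'$, the maps $\phi_m$ and $\phi_{m'}$ act on disjoint coordinates of the Maya diagram (membership of $m$ versus membership of $m'$), hence $\phi_m\circ\phi_{m'}=\phi_{m'}\circ\phi_m$, while $\phi_m\circ\phi_m=\mathrm{id}$. Consequently, in the composition $\phi_\bmu=\phi_{\mu_0}\circ\cdots\circ\phi_{\mu_{p-1}}$ we may freely regroup the factors by value: if $a\in\Z$ appears $m_\bmu(a)$ times among $\mu_0,\ldots,\mu_{p-1}$, then the net contribution of all those factors is $\phi_a^{m_\bmu(a)}$, which equals $\phi_a$ when $m_\bmu(a)$ is odd and $\mathrm{id}$ when $m_\bmu(a)$ is even. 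Therefore
\[
\phi_\bmu = \prod_{a\colon m_\bmu(a)\ \mathrm{odd}} \phi_a = \phi_{\{a\colon m_\bmu(a)\equiv 1\bmod 2\}},
\]
where the product is taken in any order (legitimate by commutativity), and the right-hand side is exactly $\phi_\hbmu$ precisely when $\hbmu=\{a\colon m_\bmu(a)\equiv 1\bmod 2\}$, i.e.\ when $\bmu$ is an odd enumeration of $\hbmu$. This proves the "if" direction.

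For the converse I would argue that the assignment $\hbmu\mapsto\phi_\hbmu$, ranging over finite subsets of $\Z$, is injective, so that $\phi_\bmu=\phi_\hbmu$ forces $\hbmu$ to be the specific set $\{a\colon m_\bmu(a)\equiv 1\bmod 2\}$ identified above. Injectivity is easy to see by evaluating on a test diagram: given finite sets $\hbmu\neq\hbmu'$, pick $a$ in the symmetric difference, and choose a Maya diagram $M$ with $a\notin M$ if $a\in\hbmu\triangle\hbmu'$ lies in $\hbmu$ (resp.\ $a\in M$ otherwise), together with the property that no other element of $\hbmu\cup\hbmu'$ equals $a$; then $a\in\phi_\hbmu(M)\iff a\notin\phi_{\hbmu'}(M)$, so the two multi-flips differ. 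Combining injectivity with the displayed identity $\phi_\bmu=\phi_{\{a\colon m_\bmu(a)\equiv 1\bmod 2\}}$ gives $\hbmu=\{a\colon m_\bmu(a)\equiv 1\bmod 2\}$, which is the definition of an odd enumeration.

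I expect the only real subtlety—and thus the main thing to state carefully rather than a genuine obstacle—is the bookkeeping in the regrouping step: one must justify that reordering the composition by value is valid even when some values are repeated, which is exactly where commutativity of distinct flips and $\phi_a^2=\mathrm{id}$ are both used. Everything else is routine. The proof does not need any of the analytic content about $\tau$-functions; it lives entirely in the group $\cF$ acting on $\cM$.
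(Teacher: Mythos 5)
Your proof is correct. The paper actually states Proposition~\ref{prop:oddenum} without proof, treating it as immediate from the definitions, and your argument (commutativity of distinct flips, $\phi_a^2=\mathrm{id}$, hence $\phi_{\hbmu}(M)=M\,\triangle\,\hbmu$ and injectivity of $\hbmu\mapsto\phi_{\hbmu}$) is exactly the justification the authors are implicitly relying on; your injectivity step could be streamlined by noting that any test diagram $M$ works, since exactly one of the two multi-flips toggles membership at a site $a$ in the symmetric difference.
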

We are now ready to introduce the basic concept of this section.

\begin{definition}
  We say that $M$ is $p$-cyclic with shift $k$, or $(p,k)$ cyclic, if
  there exists a $\bmu \in \Z^p$ such that
  \begin{equation}
    \label{eq:cyclicMdef}
    \phi_\bmu(M) = M+k.
  \end{equation}
  We will say that $M$ is $p$-cyclic if it is $(p,k)$ cyclic for some
  $k\in \Z$.
\end{definition}

\begin{proposition}
  \label{prop:muM1M2}
  For Maya diagrams $M,M'\in \cM$, define the set
  \begin{equation}
    \label{eq:muM1M2}
    \Upsilon(M,M') = (M \setminus M') \cup (M'\setminus M)
  \end{equation}
  Then $\phi_\hbmu$ where $\hbmu=\Upsilon(M,M')$ is the unique
  multi-flip such that $ M' = \phi_{\hbmu}(M)$ and
  $M = \phi_{\hbmu}(M')$.
\end{proposition}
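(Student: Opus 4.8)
The plan is to reduce the statement to the elementary fact that a multi-flip acts on a Maya diagram by forming a symmetric difference. First I would check that $\hbmu:=\Upsilon(M,M')$ is a \emph{finite} subset of $\Z$: by Definition of a Maya diagram, $M$ contains all sufficiently negative integers and excludes all sufficiently positive integers, and likewise for $M'$, so $M$ and $M'$ disagree in only finitely many positions; hence $\Upsilon(M,M')$ is finite and $\phi_{\hbmu}$ is a genuine multi-flip. The only point requiring care in Definition~\ref{def:multiflip} is that $\phi_{\hbmu}$ is independent of the chosen non-degenerate enumeration of $\hbmu$; this holds because the individual flips $\phi_m$ commute for distinct $m$ (equivalently, it is a consequence of Proposition~\ref{prop:oddenum}).

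Next I would isolate the key observation as a one-line lemma: for any finite $S\subset\Z$ and any Maya diagram $N$, an integer $a$ lies in $\phi_S(N)$ if and only if exactly one of $a\in N$, $a\in S$ holds; in other words $\phi_S(N)$ is the symmetric difference of $N$ and $S$. This is immediate, since $\phi_S$ is the composite of the commuting involutions $\phi_m$, $m\in S$, and each $\phi_m$ toggles membership of $m$ and of no other integer. Granting this, taking $S=\hbmu=\Upsilon(M,M')$ and $N=M$ gives $\phi_{\hbmu}(M)=M'$ at once: if $a\notin\Upsilon(M,M')$ then $a\in M\Leftrightarrow a\in M'$ and $\phi_{\hbmu}$ fixes its membership; if $a\in\Upsilon(M,M')$ then exactly one of $a\in M$, $a\in M'$ holds and $\phi_{\hbmu}$ toggles its membership, which again produces $M'$. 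Since $\phi_{\hbmu}$ is an involution (a composite of commuting involutions), the second identity $\phi_{\hbmu}(M')=\phi_{\hbmu}^2(M)=M$ follows automatically; alternatively one notes that $\Upsilon$ is symmetric in its two arguments and repeats the computation.

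For uniqueness I would argue as follows. By Proposition~\ref{prop:oddenum}, every multi-flip equals $\phi_S$ for some finite $S\subset\Z$ (take $S$ to be the set of integers occurring an odd number of times in any tuple representing it). If such a multi-flip satisfies $\phi_S(M)=M'$, then by the symmetric-difference description $M'$ is the symmetric difference of $M$ and $S$, hence $S$ is the symmetric difference of $M$ and $M'$, i.e.\ $S=\Upsilon(M,M')=\hbmu$. Thus $\phi_S=\phi_{\hbmu}$, and the second requirement $\phi_{\hbmu}(M')=M$ imposes no further constraint by the involution property. I do not anticipate a real obstacle: the entire content is the bookkeeping identifying a composite of commuting toggles with the symmetric-difference operation, and the one genuine point to verify is the finiteness of $\Upsilon(M,M')$, which is precisely what keeps the argument inside the class of multi-flips.
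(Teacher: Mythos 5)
Your proof is correct. The paper states this proposition without proof, treating it as immediate from the definitions; your symmetric-difference argument is precisely the intended justification, and you rightly make explicit the two points that actually need checking — that $\Upsilon(M,M')$ is finite because two Maya diagrams can disagree in only finitely many positions, and that every multi-flip reduces to $\phi_S$ for a finite set $S$ (via Proposition~\ref{prop:oddenum} and the commutativity of the involutions $\phi_m$), which is what makes the uniqueness claim meaningful.
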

As an immediate corollary, we have the following.
\begin{proposition}
  Let $k$ be an integer, $M\in \cM$ a Maya diagram, and let $p$ be the
  cardinality of $\Upsilon(M,M+k)$. Then $M$ is $(p+2j,k)$-cyclic for
  every $j=0,1,2,\ldots$.
\end{proposition}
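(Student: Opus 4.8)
The plan is to obtain this as a direct corollary of Propositions \ref{prop:muM1M2} and \ref{prop:oddenum}, the key mechanism being that padding a multi-flip by repeated pairs of flips does not change the underlying transformation. First I would set $\hbmu = \Upsilon(M, M+k)$, which by hypothesis is a set of cardinality $p$. Proposition \ref{prop:muM1M2}, applied with $M' = M+k$, gives $\phi_{\hbmu}(M) = M+k$; choosing any non-degenerate enumeration $\bmu \in \Z^p$ of $\hbmu$, we have $\phi_{\bmu} = \phi_{\hbmu}$ by definition, so $\phi_{\bmu}(M) = M+k$ and $M$ is $(p,k)$-cyclic. This disposes of the case $j=0$.

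For general $j \geq 1$, I would exhibit a length-$(p+2j)$ odd enumeration of $\hbmu$ by the simplest available device: fix some integer $c$ and form $\bmu' \in \Z^{p+2j}$ by appending $j$ copies of the pair $(c,c)$ to $\bmu$. A quick parity check then shows that for every $a\in\Z$ the multiplicity $m_{\bmu'}(a)$ has the same parity as $m_{\bmu}(a)$ --- it is unchanged for $a\neq c$ and increases by $2j$ for $a=c$ --- so the set of integers occurring an odd number of times in $\bmu'$ is again $\hbmu$. Thus $\bmu'$ is an odd enumeration of $\hbmu$, and Proposition \ref{prop:oddenum} yields $\phi_{\bmu'} = \phi_{\hbmu}$. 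Consequently $\phi_{\bmu'}(M) = \phi_{\hbmu}(M) = M+k$ with $\bmu' \in \Z^{p+2j}$, which is precisely the assertion that $M$ is $(p+2j,k)$-cyclic.

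I do not anticipate a genuine obstacle here: the statement is formal once Proposition \ref{prop:oddenum} is in hand, and the only thing requiring (minor) care is the parity bookkeeping above --- equivalently, the observation that each $\phi_c$ is an involution, so that inserting a pair $\phi_c \circ \phi_c$ into a composition of flips leaves it unchanged. One could just as well append $j$ distinct pairs $(c_1,c_1),\ldots,(c_j,c_j)$ with arbitrary $c_i$, which makes transparent the structural point that the set of admissible periods of a cyclic Maya diagram (for a fixed shift $k$) is closed under adding any even number.
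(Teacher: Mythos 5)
Your proposal is correct and follows essentially the same route as the paper: both identify $\hbmu=\Upsilon(M,M+k)$, invoke the uniqueness proposition to get $\phi_{\hbmu}(M)=M+k$, and then pass to a length-$(p+2j)$ odd enumeration of $\hbmu$ obtained by adjoining $j$ repeated pairs, concluding via the odd-enumeration criterion. The only difference is that you spell out the parity bookkeeping explicitly, which the paper leaves implicit.
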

\begin{proof}
  Let $\hbmu = \Upsilon(M,M+k)$.  Then $\phi_{\hbmu}(M) = M+k$, by the
  preceding Proposition, and hence $M$ is $(p,k)$ cyclic .  Let
  $\bmu\in \Z^p$ be an odd enumeration of $\hbmu$; i.e., $\bmu$ is
  obtained by adjoining $j$ pairs of repeated indices to the elements
  of $\hbmu$. By Proposition \ref{prop:oddenum},
  \[\phi_\bmu(M) = \phi_\hbmu(M) = M+k,\] and therefore $M$ is also
  $(p+2j,k)$-cyclic.
\end{proof}

The following result should be regarded as a refinement of Proposition
\ref{prop:tauchain}.
\begin{proposition}
  \label{prop:Mtauw}
  Let $M\in \cM$ be a Maya diagram, $k$ a non-zero integer, and
  $(\mu_0,\ldots, \mu_{p-1})\in \Z^p$ an odd enumeration of
  $\Upsilon(M,M+k)$. Extend $\bmu$ to an infinite $p$-quasiperiodic
  sequence by letting
  \[ \mu_{i+p} = \mu_i+k,\quad, k=0,1,2,\ldots \]
  and recursively define
  \begin{equation}
    \label{eq:Mchain}
    \begin{aligned}
      M_0 &=M,\\
      M_{i+1} &= \phi_{\mu_i}(M_{i}),\quad    i=0,1,2,\ldots
    \end{aligned}
  \end{equation}
  so that $M_{i+p} = M_i+k$ by construction.  Next, for
  $i=0,1,\ldots$, let
  \begin{equation}
    \label{eq:HM2w} 
    \begin{aligned}
      \tau_i &= \tau_{M_i},\\
      a_i&=2(\mu_i-\mu_{i+1}),\\
      \sigma_i&=\begin{cases}
        +1 & \textit{ if } \, \mu_i\in M_{i} \\
        -1 & \textit{ if } \, \mu_i\in M_{i+1} 
      \end{cases},\\
      w_i(z)&= \sigma_{i} \,z+ \frac{\tau_{i+1}'(z)}{\tau_{i+1}(z)}-
      \frac{\tau_i'(z)}{\tau_i(z)},
    \end{aligned}
  \end{equation}
  which are all $p$-periodic by construction.  The just-defined
  $w_i(z), a_i,\; i=0,1,\ldots, p-1\mod p$ constitute a rational
  solution to the $p$-cyclic dressing chain \eqref{eq:wachain} with
  shift $\Delta=2k$.
\end{proposition}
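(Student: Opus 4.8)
The plan is to realize Proposition~\ref{prop:Mtauw} as a concrete instance of Proposition~\ref{prop:tauchain}, with Theorem~\ref{thm:Mbilin} producing the bilinear relations along the chain and a short degree count fixing the parameters. First I would note that, by construction, $M_{i+1}=\phi_{\mu_i}(M_i)$ differs from $M_i$ in the single box at $\mu_i$: if $\mu_i\notin M_i$ (so $\sigma_i=-1$) then $M_{i+1}=M_i\cup\{\mu_i\}\supsetneq M_i$, whereas if $\mu_i\in M_i$ (so $\sigma_i=+1$) then $M_i=M_{i+1}\cup\{\mu_i\}\supsetneq M_{i+1}$. In either case Theorem~\ref{thm:Mbilin}, applied to the pair ordered from the smaller Maya diagram to the larger, gives a relation $(D^2-2zD+\epsilon)\,\tau\cdot\tau=0$ with $\epsilon=2(\deg\tau_{\mathrm{large}}-\deg\tau_{\mathrm{small}})$; using that the Hirota operator is antisymmetric in the $zD$ term (so that $(D^2-2zD+\epsilon)\,f\cdot g$ and $(D^2+2zD+\epsilon)\,g\cdot f$ coincide) this can be rewritten as $(D^2+2\sigma_i zD+\epsilon_i)\,\tau_i\cdot\tau_{i+1}=0$ with $\tau_i=\tau_{M_i}$, $\sigma_i$ as in \eqref{eq:HM2w}, and $\epsilon_i=2\sigma_i(\deg\tau_i-\deg\tau_{i+1})$. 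These are exactly the hypotheses of Proposition~\ref{prop:tauchain} in $p$-cyclic form --- the relation at $i=p-1$ nominally involves $\tau_{M_p}=\tau_{M_0+k}$, but by Proposition~\ref{prop:equiv} this is a nonzero constant multiple of $\tau_{M_0}$, and both the bilinear relation and the logarithmic derivative defining $w_i$ are insensitive to rescaling a $\tau$ by a constant. Proposition~\ref{prop:tauchain} then delivers a solution of the $p$-cyclic dressing chain \eqref{eq:wachain} with exactly the $w_i$ of \eqref{eq:HM2w}, with $a_i=\epsilon_i-\epsilon_{i+1}+\sigma_i+\sigma_{i+1}$, and with shift $\Delta=-2\sum_{i=0}^{p-1}\sigma_i$.

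It then remains to match these with the claimed values $a_i=2(\mu_i-\mu_{i+1})$ and $\Delta=2k$, and here I would isolate two elementary facts about a single-box flip $M\mapsto M\cup\{m\}$, $m\notin M$: the index identity $\sigma_{M\cup\{m\}}=\sigma_M+1$, immediate from the characterization of the index through the tail of the decreasing enumeration; and the degree identity $\deg\tau_{M\cup\{m\}}-\deg\tau_M=m-\sigma_M$. The latter I would prove by translating $M$ to standard form (which by Proposition~\ref{prop:equiv} alters neither side) and counting: a pure Hermite Wronskian $\Wr[H_{t_q},\dots,H_{t_1}]$ with $0\le t_q<\cdots<t_1$ has degree $\sum_i t_i-\binom{q}{2}$, so adjoining a box at $m>0$ changes it by $m-q=m-\sigma_M$; the single degenerate case ($m$ falling on the origin, where the enlarged diagram is no longer in standard form) is handled directly via $\Wr[1,H_{t_q},\dots]=\Wr[H_{t_q}',\dots]$ and $H_n'=2nH_{n-1}$. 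Writing $n_i:=\sigma_{M_i}$, these two facts give $n_{i+1}=n_i-\sigma_i$ and, uniformly over $\sigma_i=\pm1$, the single formula $\epsilon_i=2\mu_i-n_i-n_{i+1}+1$; a one-line telescoping, using $n_{i+2}-n_i=-\sigma_i-\sigma_{i+1}$, then yields $a_i=\epsilon_i-\epsilon_{i+1}+\sigma_i+\sigma_{i+1}=2(\mu_i-\mu_{i+1})$, while $\sum_{i=0}^{p-1}\sigma_i=\sum_{i=0}^{p-1}(n_i-n_{i+1})=n_0-n_p=\sigma_{M_0}-\sigma_{M_0+k}=-k$ by \eqref{eq:indexshift}, so $\Delta=2k$ (nonzero, consistent with the standing non-degeneracy hypothesis). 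Genuine $p$-periodicity of $w_i,a_i,\sigma_i,\epsilon_i$ follows from $M_{i+p}=M_i+k$ together with the translation invariance of $\deg\tau_M$, of $\sigma_M$ (by \eqref{eq:indexshift}) and of $(\log\tau_M)'$ (by Proposition~\ref{prop:equiv}).

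I expect the degree identity $\deg\tau_{M\cup\{m\}}-\deg\tau_M=m-\sigma_M$ --- and the attendant observation that $\epsilon_i$ comes out in the \emph{same} form $2\mu_i-n_i-n_{i+1}+1$ whether the flip deletes or adds a bound state --- to be the only delicate point; once that uniform expression is in hand, the identification of $a_i$ and $\Delta$ is pure bookkeeping and everything else is a direct appeal to Proposition~\ref{prop:tauchain} and Theorem~\ref{thm:Mbilin}.
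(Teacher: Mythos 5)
Your proposal is correct and follows essentially the same route as the paper's proof: invoke Theorem~\ref{thm:Mbilin} to produce the bilinear relations $(D^2+2\sigma_i zD+\epsilon_i)\,\tau_i\cdot\tau_{i+1}=0$, feed them into Proposition~\ref{prop:tauchain}, and identify $a_i=2(\mu_i-\mu_{i+1})$ and $\Delta=2k$ by a degree count. The only (harmless) differences are cosmetic: the paper normalizes the whole chain to standard form by one large translation and counts positive elements $q_i$, whereas you encode the same information in the translation-invariant identity $\deg\tau_{M\cup\{m\}}-\deg\tau_M=m-\sigma_M$ and telescope the indices $\sigma_{M_i}$ to get $\Delta=2k$ (the paper instead telescopes the $\mu_i$); you also make explicit the wrap-around step $\tau_{M_p}\propto\tau_{M_0}$, which the paper leaves implicit.
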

\begin{proof}
  Set
  \[ \epsilon_i = 2\sigma_i (\deg \tau_{i} - \tau_{i+1}),\quad
    i=0,1,\ldots p-1\mod p.\] Theorem \ref{thm:Mbilin} then implies
  that
  \[ (D^2+2\sigma_iz D + \epsilon_i) \tau_i \cdot \tau_{i+1} = 0.\]
  Proposition \ref{prop:equiv} allows us to assume, without loss of
  generality, that
  \[ M_i = M(\mid t_{i,q_i},\ldots, t_{i,1}) \] and that
  $\mu_i \geq 0$ for all $i$.  Such an outcome can can imposed by
  applying a sufficiently positive translation to the Maya diagrams in
  question, without altering the log-derivatives of the
  $\tau$-functions.  Since each $\tau_i$ is a Wronskian of polynomials
  of degrees $t_{i,1},\ldots, t_{i,q_i}$, we have
  \[ \deg \tau_i = \sum_{j=1}^{q_i} t_{i,j} - \frac12 q_i(q_i-1).\]
  If $\sigma_i=-1$, then
  \[ \{ t_{i+1,1},\ldots, t_{i+1,q_{i+1}} \} = \{ t_{i,1},\ldots ,
    t_{i,q_i} \} \cup \{ \mu_i \},\qquad q_{i+1} = q_i+1.\]
  If $\sigma_i=+1$, then
  \[ \{ t_{i,1},\ldots, t_{i+1,q_{i}} \} = \{ t_{i+1,1},\ldots ,
    t_{i+,q_{i+1}} \} \cup \{ \mu_i \},\quad q_{i+1} = q_i-1.
 \]
  It follows that
  \begin{align*}
    \epsilon_i&=2\sigma_i (\deg \tau_{i} - \deg \tau_{i+1}) = 2\mu_i -
    2q_i + 1+\sigma_i,\\
    \epsilon_i- \epsilon_{i+1} + \sigma_i + \sigma_{i+1} &= 2\mu_i-2\mu_{i+1}+2(q_{i+1}-q_i)+\sigma_i-\sigma_{i+1}+\sigma_i+\sigma_{i+1}\\
     &=  2(\mu_i-\mu_{i+1})
  \end{align*}
  Hence, the definition of $a_i$ in \eqref{eq:HM2w} agrees with the
  definition in \eqref{eq:witaui}.  Therefore,
  $w_i(z), a_i,\; i=0,1,\ldots, p-1$ satisfy \eqref{eq:wachain} by
  Proposition \ref{prop:tauchain}. Finally, by \eqref{eq:Deltasumalpha},
  \[ \Delta = -\sum_{i=0}^{p-1} a_i = 2\sum_{i=0}^{p-1}
    (\mu_{i+1}-\mu_{i}) = 2(\mu_p - \mu_0) = 2k. \]
\end{proof}

The remaining part of the construction is to classify cyclic Maya
diagrams for a given period, which we tackle next.  Under the
correspondence described by Proposition \ref{prop:Mtauw}, the reversal
symmetry \eqref{eq:reversal2} manifests as the transformation
\[ (M_0,\ldots, M_p) \mapsto (M_p,\ldots, M_0),\quad (\mu_1,\ldots,
\mu_{p}) \mapsto (\mu_{p},\ldots, \mu_1),\quad k\mapsto -k.\]
In light of the above remark, there is no loss of generality if we
restrict our attention to cyclic Maya diagrams with a positive shift
$k>0$.

\section{ Cyclic Maya diagrams}\label{sec:Mayacycles}

In this section we introduce the key concepts of \textit{genus} and
\textit{interlacing} to achieve a full classification of cyclic Maya
diagrams.


%
\begin{definition}
  For $p=1,2,\ldots$ let $\cZ_p\subset\Z^p$ denote the set of
  non-decreasing integer sequences\footnote{Equivalently, we may
    regard $\cZ_p$ as the set of $p$-element integer multi-sets. A
    multi-set is generalization of the concept of a set that allows
    for multiple instances for each of its elements. }
  $\beta_0\leq \beta_1\leq\cdots \leq \beta_{p-1}$.  of integers.  For
  $\bbeta\in \cZ_{2g+1}$ define the Maya diagram
  \begin{equation}
    \label{eq:MBi}
    \Xi(\bbeta)= (-\infty,\beta_0) \cup [\beta_1,\beta_2) \cup
    \ \cdots \cup [\beta_{2g-1},\beta_{2g})
  \end{equation}
  where
  \[ [m,n) = \{ j\in \Z \colon m\leq j < n\}\] Let
  $\hcZ_p\subset \Z^p$ denote the set of strictly increasing integer
  sequences $\beta_0<\beta_1<\cdots < \beta_{p-1}$.  Equivalently, we
  may regard a $\bbeta\in \hcZ_p$ as a $p$-element subset of $\Z$.
\end{definition}

\begin{proposition}
  Every Maya diagram $M\in \cM$ has a unique representation of the
  form $M=\Xi(\bbeta)$ where $\bbeta\in \hcZ_{2g+1}$. Moreover, $M$ is
  in standard form if and only if $\min \bbeta = 0$.
\end{proposition}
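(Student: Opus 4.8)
The plan is to construct an explicit inverse to the map $\bbeta\mapsto\Xi(\bbeta)$ by recording the finitely many positions at which a given Maya diagram switches between filled and empty. I would call $m\in\Z$ a \emph{step} of $M$ if exactly one of $m-1$ and $m$ belongs to $M$. Since $M$ contains all sufficiently negative integers and excludes all sufficiently positive ones, the set of steps is finite; I enumerate it as $\gamma_1<\gamma_2<\cdots<\gamma_k$. Between two consecutive steps the membership status of $M$ is constant, so on the $k+1$ intervals $(-\infty,\gamma_1),[\gamma_1,\gamma_2),\ldots,[\gamma_{k-1},\gamma_k),[\gamma_k,+\infty)$ this status alternates. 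The first interval lies in $M$ and the last is disjoint from $M$, which forces $k$ to be odd, say $k=2g+1$, and forces $M$ to be the union of the first interval with every alternate interval after it.

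For existence, I would set $\beta_i=\gamma_{i+1}$ for $0\le i\le 2g$, so that $\bbeta=(\beta_0,\ldots,\beta_{2g})\in\hcZ_{2g+1}$ and the union just described is precisely
\[ (-\infty,\beta_0)\cup[\beta_1,\beta_2)\cup\cdots\cup[\beta_{2g-1},\beta_{2g})=\Xi(\bbeta), \]
matching \eqref{eq:MBi} term by term. For uniqueness I would run the argument backwards: given any strictly increasing $\bbeta=(\beta_0,\ldots,\beta_{2g})\in\hcZ_{2g+1}$, the strict inequalities make each interval in \eqref{eq:MBi} nonempty and make consecutive intervals have opposite membership status, so the step set of $\Xi(\bbeta)$ equals $\{\beta_0<\beta_1<\cdots<\beta_{2g}\}$. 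Hence the step set recovers $\bbeta$ from $\Xi(\bbeta)$, and $\Xi(\bbeta)=\Xi(\bbeta')$ implies $g=g'$ and $\bbeta=\bbeta'$.

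For the standard-form statement, I would observe that the smallest step $\gamma_1=\min\bbeta$ is exactly $\min(\Z\setminus M)$, because every integer below it lies in $M$ while it itself does not. By the definition of standard form, $M$ is in standard form precisely when $\Z_{<0}\subseteq M$ and $0\notin M$, i.e.\ when $\min(\Z\setminus M)=0$, and this holds if and only if $\min\bbeta=0$.

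The only delicate point, and the main (if mild) obstacle, is the parity and indexing bookkeeping of the first paragraph: one must check that the alternation of membership status together with the prescribed behaviour at $\pm\infty$ really forces the number of steps to be odd, and that under the shift $\beta_i=\gamma_{i+1}$ the intervals contained in $M$ are exactly those appearing in \eqref{eq:MBi}. An off-by-one here would break the identification; once the notion of step is in place, everything else is a routine verification.
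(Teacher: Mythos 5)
Your proof is correct and follows essentially the same route as the paper: your ``steps'' are exactly the starting positions of the alternating filled/empty segments that the paper uses to define the block coordinates, and the parity argument forcing an odd number of steps is the same observation that the segments pair up into $g$ empty/filled pairs. You have merely made explicit the bookkeeping (injectivity via recovering the step set, and the identification $\min\bbeta=\min(\Z\setminus M)$ for the standard-form claim) that the paper leaves to the reader and to Figure~\ref{fig:genusM}.
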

\begin{proof}
  After removal of the initial infinite $\boxdot$ segment and the
  trailing infinite $\emptybox$ segment, a given Maya diagram $M$
  consists of $2g$ alternating empty $\emptybox$ and filled $\boxdot$
  segments of variable length.  The genus $g$ counts the number of
  such pairs.
  The even block coordinates $\beta_{2i}$ indicate the starting
  positions of the empty segments, and the odd block coordinates
  $\beta_{2i+1}$ indicated the starting positions of the filled
  segments.  See Figure~\ref{fig:genusM} for a visual illustration of
  this construction.
\end{proof}

\begin{definition}\label{def:genus}
  We call the integer $g\geq 0$ the genus of
  $M= \Xi(\bbeta),\; \bbeta\in \hcZ_p$ and
  $(\beta_0,\beta_1,\ldots, \beta_{2g})$ the block coordinates of $M$.
\end{definition}
\noindent

\begin{proposition}
  \label{prop:1cyclic}
  Let $M = \Xi(\bbeta),\; \bbeta\in \hcZ_p$ be a Maya diagram
  specified by its block coordinates.  We then have
  \[ \bbeta = \Upsilon(M,M+1).\]
\end{proposition}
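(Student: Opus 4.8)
The plan is to identify $\Upsilon(M,M+1)$ with the set of positions at which the membership pattern of $M$ changes, and then to read that set off from the block decomposition of $M$. First I would note that, by definition of the translate, $j\in M+1$ if and only if $j-1\in M$; consequently $j$ lies in $\Upsilon(M,M+1)=(M\setminus(M+1))\cup((M+1)\setminus M)$ exactly when precisely one of $j-1$ and $j$ belongs to $M$, that is, when $j$ is a position at which $M$ switches between $\boxdot$ and $\emptybox$.

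Next I would locate these switch positions from $M=\Xi(\bbeta)=(-\infty,\beta_0)\cup[\beta_1,\beta_2)\cup\cdots\cup[\beta_{2g-1},\beta_{2g})$, where $\beta_0<\beta_1<\cdots<\beta_{2g}$ and $p=2g+1$. The integers $\beta_i$ are precisely the left endpoints of the blocks in the partition of $\Z$ into alternating filled and empty segments: each $\beta_{2i}$ begins an empty block (the block $[\beta_0,\beta_1)$ when $i=0$, the infinite tail $[\beta_{2g},\infty)$ when $i=g$, and $[\beta_{2i},\beta_{2i+1})$ otherwise), while each $\beta_{2i+1}$ begins a filled block $[\beta_{2i+1},\beta_{2i+2})$. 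For any $\beta_i$, strict monotonicity of $\bbeta$ forces $\beta_i-1$ into the block immediately to its left, which has the opposite membership; hence $\beta_i\in\Upsilon(M,M+1)$. Conversely, if $j\notin\{\beta_0,\ldots,\beta_{2g}\}$ then $j$ is not a left endpoint of any block, so $j-1$ and $j$ lie in a common block and have the same membership in $M$, giving $j\notin\Upsilon(M,M+1)$.

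Combining the two inclusions yields $\Upsilon(M,M+1)=\{\beta_0,\ldots,\beta_{2g}\}$, which, regarding an element of $\hcZ_{2g+1}$ as a finite subset of $\Z$, is exactly $\bbeta$. I do not anticipate any real obstacle; the only point deserving a little care is the bookkeeping at the extreme indices $i=0$ and $i=g$, where the neighbouring block is one of the two semi-infinite blocks, but these reduce at once to the generic case by noting $\beta_0-1\in(-\infty,\beta_0)\subset M$ and that $\beta_{2g}$ is the first entry of the empty tail $[\beta_{2g},\infty)$.
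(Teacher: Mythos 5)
Your proof is correct and follows essentially the same route as the paper's: a direct verification that the symmetric difference $\Upsilon(M,M+1)$ consists exactly of the block left-endpoints $\beta_0,\ldots,\beta_{2g}$. The only cosmetic difference is that the paper writes out $M+1$ explicitly as a union of half-open intervals and reads off $(M+1)\setminus M$ and $M\setminus(M+1)$ separately (even- versus odd-indexed $\beta_i$), whereas you package the same bookkeeping via the ``exactly one of $j-1$, $j$ lies in $M$'' characterization; both yield the same conclusion with the same amount of work.
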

\begin{proof}
  Observe that
  \[ M+1 = (-\infty, \beta_0] \cup (\beta_1,\beta_2] \cup \cdots \cup
  (\beta_{2g-1}, \beta_{2g}],\]
  where
  \[ (m,n] = \{ j\in \Z \colon m<j\leq n \}.\]
  It follows that
  \begin{align*}
    (M+1)\setminus M &= \{ \beta_0, \ldots, \beta_{2g} \}\\
    M\setminus (M+1) &= \{ \beta_1,\ldots, \beta_{2g-1} \}.
  \end{align*}
  The desired conclusion follows immediately.
\end{proof}


Let $\cM_g$ denote the set of Maya diagrams of genus $g$.  The above
discussion may be summarized by saying that the mapping \eqref{eq:MBi}
defines a bijection $\Xi:\hcZ_{2g+1} \to \cM_g$, and that the block
coordinates are precisely the flip sites required for a translation
$M\mapsto M+1$.


\begin{figure}[h]
\begin{tikzpicture}[scale=0.6]

\draw  (1,1) grid +(15 ,1);

\path [fill] (0.5,1.5) node {\huge ...} 
++(1,0) circle (5pt) ++(1,0) circle (5pt)  ++(1,0) circle (5pt) 
++(1,0) circle (5pt) ++(1,0) circle (5pt) 
++(2,0) circle (5pt) ++(1,0) circle (5pt) 
++ (3,0) circle (5pt)  ++(1,0) circle (5pt)   ++ (1,0) circle (5pt) 
++ (3,0) node {\huge ...} +(1,0) node[anchor=west] { $M =  (-\infty,\beta_0)\cup [ \beta_1,\beta_2) \cup [ \beta_3,\beta_4)$};

\draw[line width=1pt] (4,1) -- ++ (0,1.5);

\foreach \x in {-3,...,11} 	\draw (\x+4.5,2.5)  node {$\x$};
\path (6.5,0.5) node {$\beta_0$} ++ (1,0) node {$\beta_1$}
++ (2,0) node {$\beta_2$}++ (2,0) node {$\beta_3$}++ (3,0) node {$\beta_4$}
;
\end{tikzpicture}

\caption{Block coordinates $(\beta_0,\ldots, \beta_4) = (2,3,5,7,10)$
  of a genus $2$ Maya diagrams.  Note that the genus is both the
  number of finite-size empty blocks and the number of finite-size
  filled blocks.}\label{fig:genusM}
\end{figure}
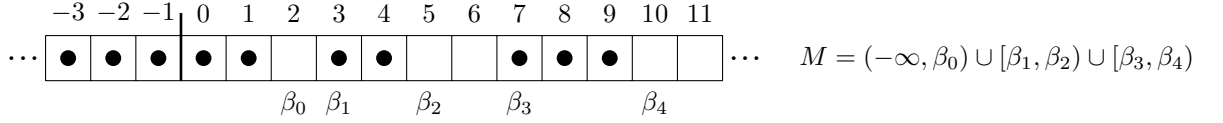

The next concept we need to introduce is the interlacing and modular
decomposition. 
\begin{definition}\label{def:interlacing}
  Fix a $k\in \N$ and let $M^{(0)}, M^{(1)},\ldots M^{(k-1)}\subset \Z$ be sets
  of integers.  We define the interlacing of these to be the set
  \begin{equation}\label{eq:interlacing} \Theta\left(M^{(0)}, M^{(1)},\ldots M^{(k-1)}\right)
    = \bigcup_{i=0}^{k-1} (k M^{(i)} +i),
 \end{equation}
 where
 \[ kM +j = \{ km + j \colon m\in M \},\quad M\subset \Z.\]
 Dually, given a set of integers $M\subset \Z$ and a $k\in \N$ define
 the sets
 \[ M^{(i)} = \{ m\in \Z \colon km+i \in M\},\quad i=0,1,\ldots, k-1.\]
 We will call the $k$-tuple of sets $\left(M^{(0)}, M^{(1)},\ldots M^{(k-1)}\right)$ the
 $k$-modular decomposition of $M$.
\end{definition}


The following result follows directly from the above definitions.
\begin{proposition}
  We have $M=\Theta\left(M^{(0)}, M^{(1)},\ldots M^{(k-1)}\right)$ if and only if
  $\left(M^{(0)}, M^{(1)},\ldots M^{(k-1)}\right)$ is the $k$-modular decomposition of $M$.
\end{proposition}

Even though the above operations of interlacing and modular
decomposition apply to general sets, they have a well defined
restriction to Maya diagrams.  Indeed, it is not hard to check that if
$M=\Theta\left(M^{(0)}, M^{(1)},\ldots M^{(k-1)}\right)$ and $M$ is a Maya diagram, then
 $M^{(0)}, M^{(1)},\ldots M^{(k-1)}$ are also Maya diagrams.  Conversely, if the latter are all Maya
diagrams, then so is $M$.  Another important case concerns the
interlacing of finite sets.  The definition \eqref{eq:interlacing}
implies directly that if $\bmu^{(i)} \in \cZ_{p_i},\; i=0,1,\ldots, k-1$
then
\[ \bmu = \Theta\left(\bmu^{(0)}, \ldots , \bmu^{(k-1)}\right) \]
is a finite set of cardinality $p=p_0+\cdots + p_{k-1}$.

Visually, each of the $k$ Maya diagrams is dilated by a factor of $k$,
shifted by one unit with respect to the previous one and superimposed,
so the interlaced Maya diagram incorporates the information from
$M^{(0)}, \ldots M^{(k-1)}$ in $k$ different modular classes. An example can
be seen in Figure~\ref{fig:interlacing}. In other words, the
interlaced Maya diagram is built by copying sequentially a filled or
empty box as determined by each of the $k$ Maya diagrams.

\begin{figure}[h]
\begin{tikzpicture}[scale=0.6]

\draw  (1,3) grid +(11 ,1);

\path [fill,color=black] (0.5,3.5) node {\huge ...} 
++(1,0) circle (5pt) ++(1,0) circle (5pt)  ++(1,0) circle (5pt) 
++(1,0) circle (5pt)
++(2,0) circle (5pt) ++(1,0) circle (5pt)  ++(1,0) circle (5pt) 
++ (4,0) node {\huge ...} +(1,0) node[anchor=west,color=black] { $M_0 = \Xi(0,1,4),\;\qquad\quad\,\,\,\, g_0 = 1$}; 

\draw[line width=1pt] (5,3) -- ++ (0,2);

\foreach \x in {-4,...,6} 	\draw (\x+5.5,4.5)  node {$\x$};

\draw  (1,1) grid +(11 ,1);

\path [fill,color=blue] (0.5,1.5) node {\huge ...} 
++(1,0) circle (5pt) ++(1,0) circle (5pt)  ++(1,0) circle (5pt) 
++(3,0) circle (5pt) ++(1,0) circle (5pt) 
++ (3,0) circle (5pt)  ++(2,0) node {\huge ...} +(1,0) node[anchor=west,color=black] { $M_1 = \Xi(-1,1,3,5,6),\;\quad g_1 = 2$}; 

\draw[line width=1pt] (5,1) -- ++ (0,2);

\draw  (1,-1) grid +(11 ,1);

\path [fill,color=red] (0.5,-0.5) node {\huge ...} 
++(1,0) circle (5pt) ++(1,0) circle (5pt)  ++(1,0) circle (5pt) 
++(1,0) circle (5pt) ++(1,0) circle (5pt)  ++(1,0) circle (5pt) 
++(1,0) circle (5pt)  ++(1,0) circle (5pt) 
++(4,0) node {\huge ...} +(1,0) node[anchor=west,color=black] { $M_2 = \Xi(4),\qquad\qquad\qquad g_2 = 0$}; 

\draw[line width=1pt] (5,-1) -- ++ (0,2);

\draw  (0,-4) grid +(23 ,1);
\foreach \x in {-5,...,17} 	\draw (\x+5.5,2.5-5)  node {$\x$};
\draw[line width=1pt] (5,-4) -- ++ (0,2);

\path [fill,color=black] (2.5,-3.5)   
circle (5pt) ++(6,0) circle (5pt)  
++(3,0) circle (5pt) ++(3,0) circle (5pt) ;

\path [fill,color=blue] 
(0.5,-3.5) circle (5pt) ++(9,0) circle (5pt)  
++(3,0) circle (5pt) ++(9,0) circle (5pt) ;

\path [fill,color=red] 
(1.5,-3.5) circle (5pt) ++(3,0) circle (5pt)  
++(3,0) circle (5pt) ++(3,0) circle (5pt)
++(3,0) circle (5pt) ++(3,0) circle (5pt) ;

\draw (3.5,-5) node[right] {$M= \Theta(M_0,M_1,M_2)=\Xi(0,1,4|-1,1,3,5,6|4) = \Xi(-2,-1,0,2,10,11,12,16,17)$}; 
\end{tikzpicture}

\caption{Interlacing of three Maya diagrams with genus $1,2$ and $0$ with block coordinates and \mbox{$3$-block} coordinates for the interlaced Maya diagram.}\label{fig:interlacing}
\end{figure}

Equipped with these notions of genus and interlacing, we are now ready
to state the main result for the classification of cyclic Maya
diagrams.


\begin{theorem}
  \label{thm:Mp}
  Let $M=\Theta\left(M^{(0)}, M^{(1)},\ldots M^{(k-1)}\right)$ be the $k$-modular decomposition of a given Maya diagram $M$.  Let $g_i$ be the genus
  of $M^{(i)},\; i=0,1,\ldots, k-1$.  Then, $M$ is $(p,k)$-cyclic where
  \begin{equation}
    \label{eq:pgi}
    p = p_0+p_1+\cdots + p_{k-1},\qquad p_i = 2g_i + 1.
  \end{equation}
\end{theorem}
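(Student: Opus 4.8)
The plan is to reduce the claim about the interlaced diagram $M$ to the genus-one facts already established about each component $M^{(i)}$. The key input is Proposition~\ref{prop:1cyclic} (more precisely Proposition~\ref{prop:muM1M2} together with Proposition~\ref{prop:1cyclic}), which says that for a single Maya diagram $N$ of genus $g$, the set of flip sites realizing the translation $N\mapsto N+1$ is exactly $\Upsilon(N,N+1)=\bbeta$, the block coordinates of $N$, a set of cardinality $2g+1$. So the first step is: for each $i=0,1,\ldots,k-1$, write $M^{(i)}=\Xi(\bbeta^{(i)})$ with $\bbeta^{(i)}\in\hcZ_{2g_i+1}$, so that $\phi_{\bbeta^{(i)}}(M^{(i)})=M^{(i)}+1$ and $|\bbeta^{(i)}|=p_i=2g_i+1$.

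The second step is to translate these component statements through the interlacing operator $\Theta$. Here I would use the defining identity $kM^{(i)}+i\subset M=\Theta(M^{(0)},\ldots,M^{(k-1)})$ together with the obvious compatibility: a flip of $M$ at the site $km+i$ (for $m\in\Z$) corresponds, under $k$-modular decomposition, to a flip of $M^{(i)}$ at the site $m$ and leaves all other components untouched. In symbols, if $\bmu^{(i)}$ is any integer sequence, then
\[
\phi_{k\bmu^{(i)}+i}(M)=\Theta\!\left(M^{(0)},\ldots,M^{(i-1)},\phi_{\bmu^{(i)}}(M^{(i)}),M^{(i+1)},\ldots,M^{(k-1)}\right),
\]
and flips in distinct modular classes commute. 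The third step then assembles the full multi-flip. Let $\bbeta=\Theta(\bbeta^{(0)},\ldots,\bbeta^{(k-1)})$ be the interlacing of the component block-coordinate sets; by the remark in the text about interlacing of finite sets, $\bbeta$ is a finite set of cardinality $p=p_0+\cdots+p_{k-1}$, and by construction $\bbeta=\bigcup_i(k\bbeta^{(i)}+i)$. Applying the commuting flips class by class,
\[
\phi_{\bbeta}(M)=\Theta\!\left(\phi_{\bbeta^{(0)}}(M^{(0)}),\ldots,\phi_{\bbeta^{(k-1)}}(M^{(k-1)})\right)
=\Theta\!\left(M^{(0)}+1,\ldots,M^{(k-1)}+1\right)=M+k,
\]
where the last equality is the identity $\Theta(M^{(0)}+1,\ldots,M^{(k-1)}+1)=\Theta(M^{(0)},\ldots,M^{(k-1)})+k$, immediate from $k(M^{(i)}+1)+i=(kM^{(i)}+i)+k$. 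Choosing any non-degenerate enumeration $\bmu\in\Z^p$ of the $p$-element set $\bbeta$ gives $\phi_\bmu(M)=\phi_{\bbeta}(M)=M+k$, which is exactly the assertion that $M$ is $(p,k)$-cyclic with $p=\sum_i(2g_i+1)$.

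The only genuine subtlety — and the step I would be most careful about — is the commutativity and well-definedness of the flips across modular classes, i.e.\ verifying that $\phi_{k\bmu^{(i)}+i}$ interacts with the decomposition exactly as claimed and that flips indexed by integers lying in different residue classes mod $k$ commute (they do, trivially, since distinct $\phi_m$'s commute and here the sites are even distinct). One should also note that $\bbeta^{(i)}$ consists of distinct integers within class $i$, so $k\bbeta^{(i)}+i$ are pairwise distinct and their union over $i$ has cardinality exactly $\sum_i|\bbeta^{(i)}|=p$; hence $\bmu$ is a genuine non-degenerate enumeration and no parity/odd-enumeration bookkeeping (Proposition~\ref{prop:oddenum}) is needed. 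Everything else is a direct unwinding of the definitions of $\Xi$, $\Theta$, and the flip group.
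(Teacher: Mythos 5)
Your proposal is correct and follows essentially the same route as the paper: both take the block coordinates $\bbeta^{(i)}$ of each component, invoke Proposition~\ref{prop:1cyclic} to get $\phi_{\bbeta^{(i)}}(M^{(i)})=M^{(i)}+1$, and then push the flips through the interlacing $\Theta$ to obtain $\phi_{\Theta(\bbeta^{(0)},\ldots,\bbeta^{(k-1)})}(M)=M+k$. In fact you supply slightly more detail than the paper does on the one step it leaves implicit, namely that a flip of $M$ at $km+i$ acts only on the $i$-th component of the modular decomposition and that flips at distinct sites commute.
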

\begin{proof}
  Let $\bbeta^{(i)}=\Upsilon\left(M^{(i)},M^{(i+1)}\right) \in \cZ_{p_i}$ be the block
  coordinates of $M^{(i)},\; i=0,1,\ldots, k-1$.  Consider the interlacing
  $\bmu = \Theta\left(\bbeta^{(0)}, \ldots , \bbeta^{(k-1)}\right)$.  From
  Proposition \ref{prop:1cyclic} we have that,
  \[ \phi_{\bbeta^{(i)}}\left( M^{(i)}\right)= M^{(i)} +1.\]
  so it follows that
  \begin{align*}
    \phi_{\bmu}(M)&= \phi_{\Theta\left(\bbeta^{(0)}, \ldots , \bbeta^{(k-1)}\right)}\Theta\left( M^{(0)} , \ldots ,
                  M^{(k-1)}\right) \\   
    &= \Theta\left( \phi_{\bbeta^{(0)}}(M^{(0)}) , \ldots ,
                    \phi_{\bbeta^{(k-1)}}(M^{(k-1)})\right) \\
                  &= \Theta\left(M^{(0)}+1 , \ldots , M^{(k-1)} + 1\right) \\
                  &= \Theta\left(M^{(0)}, \ldots , M^{(k-1)}\right) + k \\
                  &= M+k.
  \end{align*}
  Therefore, $M$ is $(p,k)$ cyclic where the value of $p$ agrees with
  \eqref{eq:pgi}.
\end{proof}

Theorem~\ref{thm:Mp} sets the way to classify cyclic Maya diagrams for
any given period $p$.
\begin{corollary}\label{cor:k}
  For a fixed period $p\in \N$, there exist $p$-cyclic Maya diagrams with
  shifts $k=p,p-2,\dots,\lfloor p/2\rfloor$, and no other positive shifts are  possible.
\end{corollary}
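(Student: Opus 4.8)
The corollary is an immediate consequence of Theorem~\ref{thm:Mp} together with the range of values that can be taken by sums of the form $p_0+\cdots+p_{k-1}$ with each $p_i$ an odd positive integer. First I would record that by Theorem~\ref{thm:Mp} a Maya diagram with $k$-modular decomposition of genera $g_0,\dots,g_{k-1}$ is $(p,k)$-cyclic with $p=\sum_{i=0}^{k-1}(2g_i+1)$, and conversely that \emph{every} $(p,k)$-cyclic Maya diagram arises this way: given $\phi_{\bmu}(M)=M+k$, the set $\Upsilon(M,M+k)$ has some cardinality $p'$, and by Proposition~\ref{prop:muM1M2} and Proposition~\ref{prop:oddenum}, $\bmu$ must be an odd enumeration of $\Upsilon(M,M+k)$, forcing $p\ge p'$ and $p\equiv p'\pmod 2$; so the minimal period for shift $k$ is exactly $p' = |\Upsilon(M,M+k)| = \sum_i |\Upsilon(M^{(i)},M^{(i)}+1)| = \sum_i (2g_i+1)$, using the $k$-modular decomposition and Proposition~\ref{prop:1cyclic}.

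Next I would analyze the arithmetic. For shift $k$ the achievable minimal periods are precisely $\{\sum_{i=0}^{k-1}(2g_i+1) : g_i\ge 0\} = \{k, k+2, k+4,\dots\}$, i.e. all integers $\ge k$ congruent to $k$ modulo $2$; and by the converse direction every larger period of the correct parity is also achievable for that same $k$ (adjoin pairs of repeated flip indices, cf.\ the Proposition following Proposition~\ref{prop:muM1M2}). Hence a given $p$ is realizable with shift $k$ if and only if $1\le k\le p$ and $k\equiv p\pmod 2$. The positive integers $k$ satisfying this are exactly $k = p, p-2, p-4,\dots$, terminating at $2$ if $p$ is even and at $1$ if $p$ is odd; in either case the smallest such $k$ is $\lceil \cdot\rceil$-close to $p/2$ — more precisely the list is $p, p-2,\dots$ down to $1$ or $2$, which one checks matches the stated endpoint $\lfloor p/2\rfloor$ only when $p\le 4$, so I would double-check the intended reading of the endpoint (the mathematically correct statement is: all $k$ of the form $p-2j$ with $1\le p-2j\le p$).

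The one genuinely substantive point — and the step I expect to need the most care — is the \emph{converse}: that no shift $k$ outside this list occurs, equivalently that every $(p,k)$-cyclic Maya diagram has $k\le p$ and $k\equiv p\pmod 2$. The parity constraint $k\equiv p\pmod 2$ follows because $k = \mu_p-\mu_0 = \sum_{i=0}^{p-1}(\mu_{i+1}-\mu_i)$ and, from $\phi_{\bmu}(M)=M+k$ with $\bmu$ a (necessarily) odd enumeration of $\Upsilon(M,M+k)$, the cardinality $p$ and the size $|\Upsilon(M,M+k)|$ have the same parity; combined with $|\Upsilon(M,M+k)|\ge k$ — which is where I would need an argument: each unit of shift must be "paid for" by at least one element of the symmetric difference in each residue class, so $|\Upsilon(M,M+k)| = \sum_{i=0}^{k-1}|\Upsilon(M^{(i)},M^{(i)}+1)| \ge \sum_{i=0}^{k-1} 1 = k$ since each $M^{(i)}$ has genus $g_i\ge 0$ hence $|\Upsilon(M^{(i)},M^{(i)}+1)| = 2g_i+1\ge 1$ by Proposition~\ref{prop:1cyclic}. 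Assembling these inequalities gives $p\ge |\Upsilon(M,M+k)|\ge k$ and $p\equiv|\Upsilon(M,M+k)|\equiv k\pmod 2$, which is exactly the claimed restriction; the existence half is then witnessed by Theorem~\ref{thm:Mp} with an explicit choice of genera (e.g.\ all $g_i=0$ except possibly one) realizing each admissible $(p,k)$.
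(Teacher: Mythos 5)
Your proof is correct and follows the same route the paper intends: the corollary is stated without proof as an immediate consequence of Theorem~\ref{thm:Mp}, and your argument --- existence by interlacing $k$ diagrams of prescribed genera, and the converse via $p\ge|\Upsilon(M,M+k)|=\sum_{i}(2g_i+1)\ge k$ together with the parity match $p\equiv\sum_i(2g_i+1)\equiv k \pmod 2$ forced by odd enumerations --- is exactly the reasoning needed to justify both halves. Your flag on the endpoint is also right: the literal bound $\lfloor p/2\rfloor$ is incorrect for $p\ge 5$ (the admissible positive shifts are all $k\le p$ with $k\equiv p \pmod 2$, down to $1$ or $2$), as the paper itself confirms in the $A_4$ section, where the signature $(5)$ yields a $5$-cyclic Maya diagram with shift $k=1<\lfloor 5/2\rfloor$.
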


\begin{remark}
  The highest shift $k=p$ corresponds to the interlacing of $p$ trivial (genus
  0) Maya diagrams.
\end{remark}

We now introduce a combinatorial system for describing rational
solutions of $p$-cyclic factorization chains.  First, we require a
suitably generalized notion of block coordinates suitable for
describing $p$-cyclic Maya diagrams.
\begin{definition}
  For $p_0,\ldots, p_{k-1} \in \N$ set
  \[ \cZ_{p_0,\ldots, p_{k-1}} := \cZ_{p_0} \times \cdots \times
    \cZ_{p_{k-1}}\subset \Z^{p_0+\cdots +p_{k-1}}. \] Thus, an element
  of $\cZ_{p_0,\ldots, p_{k-1}}$ is a concatenation
  $(\bbeta^{(0)} | \bbeta^{(1)} | \ldots | \bbeta^{(k-1)})$ of $k$
  non-decreasing subsequences,
  $\bbeta^{(i)} \in \cZ_{p_i},\; i=0,1,\ldots, k-1$.  Let
  $\Xi\colon \cZ_{p_0,\ldots, p_{k-1}} \to \cM$ be the mapping with
  action
  \[ \Xi \colon (\bbeta^{(0)} | \bbeta^{(1)} | \ldots |
    \bbeta^{(k-1)})\mapsto \Theta\left(\Xi(\bbeta^{(0)}), \ldots ,
      \Xi(\bbeta^{(k-1)})\right) \] Let $M\in \cM$ be a $(p,k)$ cyclic
  Maya diagram, and let
  \[M=\Theta\left(M^{(0)},\ldots M^{(k-1)}\right)\] be the
  corresponding $k$-modular decomposition.  Let
  $p_i = 2g_i+1,\; i=0,1,\ldots, k-1$ where $g_i$ is the genus of
  $M^{(i)}$ and let $\bbeta^{(i)}\in \hcZ_{2g_i+1}$ be the block
  coordinates of $M^{(i)}\in \cM_{g_i}$.  In light of the fact that
  \[ M = \Theta\left(\Xi(\bbeta^{(0)}), \ldots , \Xi(\bbeta^{(k-1)})\right),\]
  we will refer to the  concatenated sequence
  
  \begin{align*}
    (\bbeta^{(0)} | \bbeta^{(1)} | \ldots |
    \bbeta^{(k-1)}) =  \left( \beta^{(0)}_{0}, \ldots, \beta^{(0)}_{p_0-1} |
    \beta^{(1)}_{0}, \ldots, \beta^{(1)}_{p_1-1} 
    | \ldots | \beta^{(k-1)}_{0},\ldots, \beta^{(k-1)}_{p_{k-1}-1}\right) 
  \end{align*}
  as the $k$-block coordinates of $M$.
\end{definition}

\begin{definition}
  Fix a $k\in \N$. For $m\in \Z$ let $[m]_k\in \{ 0,1,\ldots, k-1 \}$
  denote the residue class of $m$ modulo division by $k$.  For
  $m,n \in \Z$ say that $m \preccurlyeq_k n$ if and only if
  \[ [m]_k < [n]_k,\quad \text{ or } \quad [m]_k = [n]_k\; \text{ and
  } m\leq n.\]
  In this way, the transitive, reflexive relation $\preccurlyeq_k$
  forms a total order on $\Z$.
\end{definition}

\begin{proposition}
  \label{prop:mufrombeta}
  Let $M$ be a $(p,k)$ cyclic Maya diagram.  There exists a unique
  $p$-tuple $\bmu\in \Z^p$ ordered relative to $\preccurlyeq_k$ such
  that
  \begin{equation}
    \label{eq:bmuMk}
    \phi_\bmu(M) = M+k 
  \end{equation}
\end{proposition}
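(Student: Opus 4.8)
The plan is to build $\bmu$ explicitly out of the $k$-block coordinates of $M$, using Theorem~\ref{thm:Mp} and Proposition~\ref{prop:1cyclic} as the underlying mechanism, and then to argue that the ordering $\preccurlyeq_k$ pins down this choice uniquely among all multi-flips with the required property. First I would invoke the $k$-modular decomposition $M = \Theta(M^{(0)},\ldots,M^{(k-1)})$ and, for each $i$, write $\bbeta^{(i)}\in\hcZ_{2g_i+1}$ for the block coordinates of $M^{(i)}$, so that $\bbeta^{(i)} = \Upsilon(M^{(i)},M^{(i)}+1)$ by Proposition~\ref{prop:1cyclic}. Forming the interlacing $\hbmu := \Theta(\bbeta^{(0)},\ldots,\bbeta^{(k-1)})$, the computation in the proof of Theorem~\ref{thm:Mp} shows $\phi_{\hbmu}(M) = M+k$; note that $\hbmu$ is a genuine set of cardinality $p = \sum_i(2g_i+1)$ since the summands $kM^{(i)}+i$ live in distinct residue classes mod $k$. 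The tuple $\bmu$ is then obtained by listing the elements of $\hbmu$ in $\preccurlyeq_k$-increasing order; since $\hbmu$ has $p$ distinct elements and $\preccurlyeq_k$ is a total order on $\Z$, this listing exists and is unique. By Proposition~\ref{prop:oddenum}, $\phi_\bmu = \phi_{\hbmu}$ as maps on $\cM$ (any enumeration of a set is a trivial odd enumeration), so \eqref{eq:bmuMk} holds.

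For uniqueness, suppose $\bmu'\in\Z^p$ is $\preccurlyeq_k$-ordered and also satisfies $\phi_{\bmu'}(M) = M+k$. I would first argue that $\bmu'$ must be non-degenerate: if some value were repeated, then by Proposition~\ref{prop:oddenum} the underlying ``odd'' set $\hbmu'$ would have cardinality strictly less than $p$, and by Proposition~\ref{prop:muM1M2} the unique multi-flip carrying $M$ to $M+k$ is $\phi_{\Upsilon(M,M+k)}$, with $\Upsilon(M,M+k)$ a fixed set whose cardinality is $p$ — here is where I would need to check that $|\Upsilon(M,M+k)| = p$ precisely, which follows from the interlacing picture: $\Upsilon(M,M+k) = \Theta(\Upsilon(M^{(0)},M^{(1)}+?),\ldots)$... more carefully, $M+k = \Theta(M^{(1)}, M^{(2)},\ldots,M^{(k-1)}, M^{(0)}+1)$, so $\Upsilon(M,M+k)$ decomposes modularly into $\Upsilon(M^{(i)},M^{(i+1)})$ for $i<k-1$ and $\Upsilon(M^{(k-1)},M^{(0)}+1)$, each of which has odd cardinality $2g_i+1$ by Proposition~\ref{prop:1cyclic} applied appropriately. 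Hence $|\Upsilon(M,M+k)| = p$. Since $\bmu'$ is an enumeration (possibly with repeats) whose odd-set equals $\Upsilon(M,M+k)$, and this set already has cardinality $p$, there is no room for repeated indices: $\bmu'$ is a non-degenerate enumeration of $\Upsilon(M,M+k)$. A $\preccurlyeq_k$-ordered enumeration of a fixed $p$-element set is unique, so $\bmu' = \bmu$.

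**Main obstacle.** The delicate point is identifying $\Upsilon(M,M+k)$ correctly in terms of the modular pieces and confirming its cardinality is exactly $p = \sum_i(2g_i+1)$, so that the constructed $\bmu$ is forced to be non-degenerate and equal to the $\preccurlyeq_k$-sort of $\Upsilon(M,M+k)$. This requires the observation that translating $M$ by $k$ cyclically permutes the modular components (with a $+1$ shift on the component that wraps around), together with the genus-invariance of this operation. Once that bookkeeping is in place — and it is essentially the content already exploited in the proof of Theorem~\ref{thm:Mp} — both existence and uniqueness follow formally from Propositions~\ref{prop:oddenum} and~\ref{prop:muM1M2} and the total-order property of $\preccurlyeq_k$. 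Everything else is routine.
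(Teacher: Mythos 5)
Your existence argument is the same as the paper's: take the $k$-modular decomposition $M=\Theta(M^{(0)},\ldots,M^{(k-1)})$, let $\bbeta^{(i)}=\Upsilon(M^{(i)},M^{(i)}+1)$ be the block coordinates supplied by Proposition~\ref{prop:1cyclic}, interlace them, and list the resulting $p$-element set in $\preccurlyeq_k$-increasing order, citing the computation in the proof of Theorem~\ref{thm:Mp} for $\phi_\bmu(M)=M+k$. You go further than the paper, which stops at the construction, by actually arguing uniqueness through Propositions~\ref{prop:oddenum} and~\ref{prop:muM1M2}; the logic there (any $\bmu'$ with $\phi_{\bmu'}(M)=M+k$ is an odd enumeration of the fixed set $\Upsilon(M,M+k)$, which has cardinality $p$, so $\bmu'$ is forced to be non-degenerate and is then pinned down by the total order) is sound and worth making explicit.

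The one concrete error is your identification of the modular decomposition of $M+k$. You write $M+k=\Theta(M^{(1)},\ldots,M^{(k-1)},M^{(0)}+1)$, i.e.\ a cyclic permutation of the components with a $+1$ on the wrap-around piece, and correspondingly decompose $\Upsilon(M,M+k)$ into pieces $\Upsilon(M^{(i)},M^{(i+1)})$. That is not right: since $km+i+k=k(m+1)+i$, translation by $k$ preserves every residue class and simply shifts each component, so $(M+k)^{(i)}=M^{(i)}+1$ for all $i$ (it is translation by $1$ that cyclically permutes the components). Hence $\Upsilon(M,M+k)=\Theta\bigl(\Upsilon(M^{(0)},M^{(0)}+1),\ldots,\Upsilon(M^{(k-1)},M^{(k-1)}+1)\bigr)=\Theta(\bbeta^{(0)},\ldots,\bbeta^{(k-1)})$, a disjoint union of sets of cardinalities $2g_i+1$, which gives the count $|\Upsilon(M,M+k)|=p$ that your argument needs; the symmetric differences $\Upsilon(M^{(i)},M^{(i+1)})$ of unrelated components would have no controlled cardinality at all. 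With this repair the uniqueness argument goes through. One further caveat you should record: non-degeneracy of $\bmu'$, and with it uniqueness, holds only when $p$ equals $\sum_i(2g_i+1)$, the value produced by Theorem~\ref{thm:Mp}; a diagram that is $(p,k)$-cyclic is also $(p+2j,k)$-cyclic, and for those larger periods $\preccurlyeq_k$-ordered degenerate enumerations are no longer unique. Both you and the paper implicitly read the proposition with this minimal $p$.
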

\begin{proof}
  Let
  $(\beta_0,\ldots, \beta_{p-1}) =  (\bbeta^{(0)} | \bbeta^{(1)} | \ldots |   \bbeta^{(k-1)})$ be the $k$-block coordinates of $M$.
  Set \[ \bmu = \Theta\left(\bbeta^{(0)}, \ldots , \bbeta^{(k-1)}\right) \]
  so that \eqref{eq:bmuMk} holds by the proof to Theorem \ref{thm:Mp}.
  The desired enumeration of $\bmu$ is given by
  \[ (k \beta_0,\ldots, k \beta_{p-1}) + (0^{p_0}, 1^{p_1}, \ldots,
  (k-1)^{p_{k-1}}) \]
  where the exponents indicate repetition.  Explicitly,
  $(\mu_0,\ldots, \mu_{p-1})$ is given by
  \[ \left(k\beta^{(0)}_{0},\ldots, k\beta^{(0)}_{p_0-1}, k\beta^{(1)}_{0} + 1 ,\ldots,
  k\beta^{(1)}_{p_1-1} +1 , \ldots , k \beta^{(k-1)}_{0} + k-1, \ldots,
  k\beta^{(k-1)}_{p_{k-1}-1} + k-1 \right) .\]
\end{proof}
\begin{definition}
  In light of \eqref{eq:bmuMk} we will refer to the just defined tuple
  $(\mu_0,\mu_1,\ldots, \mu_{p-1})$ as the $k$-canonical flip sequence
  of $M$ and refer to the tuple $(p_0,p_1,\ldots, p_{k-1})$ as the
  $k$-signature of $M$.
\end{definition}

By Proposition \ref{prop:Mtauw} a rational solution of the
$p$-cyclic dressing chain requires a $(p,k)$ cyclic Maya diagram, and
an additional item data, namely a fixed ordering of the canonical
flip sequence.
We will specify such ordering  as
\[ \bmu_\bpi = (\mu_{\pi_0},\ldots, \mu_{\pi_{p-1}}) \] where
$\bpi=(\pi_0,\ldots, \pi_{p-1})$ is a permutation of
$(0,1,\ldots, p-1)$. With this notation, the chain of Maya diagrams
described in Proposition \ref{prop:Mtauw} is generated as
\begin{equation}\label{eq:picycle}
 M_0 = M,\qquad M_{i+1} =  \phi_{\mu_{\pi_i}}(M_i),\qquad i=0,1,\ldots, p-1.
 \end{equation}
\begin{remark}\label{rem:normalization}
Using a translation it is possible to normalize $M$ so that
$\mu_{0} = 0$.  Using a cyclic permutation and it is possible to
normalize $\bpi$ so that $\pi_p=0$.  The net effect of these two
normalizations is to ensure  that
$M_0,M_1,\ldots, M_{p-1}$ have standard form.
\end{remark}

\begin{remark}
  In order to obtain a full classification of rational solutions, it
  will be necessary to account for degenerate chains which include
  multiple flips at the same site. For this reason, we must allow
  $\bbeta^{(i)}\in \cZ_{p_i}$ to be merely non-decreasing sequences.
\end{remark}

\section{Hermite-type rational solutions}\label{sec:A4}
In this section we will put together all the results derived above in
order to describe an effective way of labelling and constructing
Hermite-type rational solutions to the Noumi-Yamada-Painlev\'e system
using cyclic Maya diagrams. As an illustrative example, we describe
rational solutions of the \PIV and \PV systems, because these are
known to be reductions of the $A_2$ and $A_3$ systems, respectively.
We then give examples of rational solutions to the $A_4$ system.

In order to specify a Hermite-type rational solution of a $p$-cyclic
dressing chain, we require three items of data.
\begin{enumerate}
\item We begin by specifying a signature sequence
  $(p_0,\ldots, p_{k-1})$ consisting of odd positive integers that sum
  to $p$. This sequence determines the genus
  $g_i = 2p_i+1,\; i=0,1,\ldots, k-1$ of the $k$ interlaced Maya
  diagrams that give rise to a $(p,k)$-cyclic Maya diagram $M$. The
  possible values of $k$ are given by Corollary~\ref{cor:k}.
\item Once the signature is fixed, we specify an element of
  $\cZ_{p_0,\ldots, p_{k-1}}$; i.e., $k$-block coordinates
  \[(\beta_0,\dots,\beta_{p-1})=(\bbeta^{(0)}|\ldots|
    \bbeta^{(k-1)})\] which determine a $(p,k)$-cyclic Maya diagram
  $M=\Xi(\bbeta^{(0)}|\ldots| \bbeta^{(k-1)})$, and a canonical flip
  sequence $\bmu=(\mu_0,\dots,\mu_{p-1})$ as per Proposition
  \ref{prop:mufrombeta}.
\item Once the $k$-block coordinates and canonical flip sequence
  $\bmu$ are fixed, we specify a permutation
  $(\pi_0,\ldots, \pi_{p-1})$ of $(0,1,\ldots, p-1)$ that determines
  the actual flip sequence $\bmu_\bpi$, i.e. the order in which the
  flips in the canonical flip sequence are applied to build a
  $p$-cycle of Maya diagrams.
\item With the above data, we apply Proposition \ref{prop:Mtauw} with
  $M$ and $\bmu_\bpi$ to construct the rational solution.
\end{enumerate}

For any signature of a Maya $p$-cycle, we need to specify the $p$
integers in the canonical flip sequence, but following
Remark~\ref{rem:normalization}, we can get rid of translation
invariance by imposing $\mu_0=\beta^{(0)}_0=0$, leaving only $p-1$
free integers.  The remaining number of degrees of freedom is $p-1$,
which coincides with the number of generators of the symmetry group
$A^{(1)}_{p-1}$. This is a strong indication that the class described
above captures a generic orbit of a seed solution under the action of
the symmetry group.  Moreover, it is sometimes advantageous to consider
only permutations such that $\pi_p=0$ in order to remove the
invariance under cyclic permutations.

\subsection{Painlev\'e IV}

As was mentioned in the introduction, the $A_2$ Noumi-Yamada system is
equivalent the \PIV equation
\cite{adler1994nonlinear,veselov1993dressing}.  The reduction of
\eqref{eq:P4system} to the \PIV equation \eqref{eq:P4scalar} is
accomplished via the following substitutions
\begin{gather}
  \label{eq:f123y}
  \sqrt{2}f_0(x)=y(t),\quad -\sqrt{8}f_{1,2}(x) =y(t)+2t \pm
  \frac{y'(t) - \sqrt{-2b}}{y(t)} \\
  \nonumber x=-\sqrt{2}t,\qquad a = \alpha_2-\alpha_1,\; b =
  -2\alpha_0^2.
\end{gather}

It is known \cite{clarkson2003fourth} that every rational solution of
\PIV can be described in terms of either generalized Hermite (GH) or
Okamoto (O) polynomials, both of which may given as a Wronskian of
classical Hermite polynomials.  We now exhibit these solutions using
the framework described in the preceding section.

The 3-cyclic Maya diagrams fall into exactly one of two classes:
\[
  \begin{array}{cccc}
    k & (p_0,\ldots, p_{k-1})& (\beta_0,\beta_1, \beta_2)
    &  (\mu_0,\mu_1, \mu_{2})     \\
    1 &(3)& (0,n_1,n_1+n_2)&(0,n_1,n_1+n_2) \\
    3 &(1,1,1) & (0|n_1|n_2)& (0,3 n_1+1,3 n_2+2)
\end{array}
\]
The corresponding Maya diagrams are
\begin{align*}
  M_{\MGH}(n_1,n_2)
  &= \Xi(0,n_1,n_1+n_2) = (-\infty,0) \cup   [n_1,n_1+n_2) \\
  M_{\MO}(n_1,n_2)
  &= \Xi(0|n_1|n_2) \\
  &=\Theta((-\infty,0),(-\infty,n_1), (-\infty,n_2))\\
  &= (-\infty,0) \cup \{ 3j+1 \colon 0\leq j<
    n_1\} \cup \{ 3j+2 \colon 0 \leq j \leq n_2 \}
\end{align*}
The generalized Hermite and Okamoto polynomials, denoted below by
$\tau_{\text{GH}(n_1,n_2)},\; n_1,n_2\geq 0$ and
$\tau_{\text{O}(n_1,n_2)},\;n_1,n_2\geq 0$, respectively, are
two-parameter families of Hermite Wronskians that correspond to the
above diagrams.  For example (see \eqref{eq:tauWr} for definition):
\begin{align*}
  \tau_{\MGH(3,5)} &= \tau(3,4,5,6,7),\\
  \tau_{\MO(3,2)} &= \tau(1,2,4,5,7)
\end{align*}

Having chosen one of the above polynomials as $\tau_0$ there are
$6=3!$ distinct rational solutions of the $A_2$ system
\eqref{eq:P4system} corresponding the possible permutations of the
canonical flip sites $(\mu_0,\mu_1,\mu_2)$.  The translations of the
block coordinates $(\beta_0,\beta_1,\beta_2)$ engendered by these
permutations is enumerated in the table below.

\[
  \begin{array}{ccccccc}
    & (012) & (102) & (021) & (210) & (120) & (201) \\
    \tau_0&000&000&000&000&000&000\\
    \tau_1&100&010&100&001&010&001\\
    \tau_2&110&110&101&011&011&101\\
    \tau_3& 111& 111& 111& 111& 111& 111\\
  \end{array}
\]


\begin{example}
  The Maya diagram, $M_{\MGH}(3,5)$ has the canonical flip sequence
  $\mu=(0,3,8)$. Applying the permutation $\pi=(2,1,0)$ yields the
  flip sequence $\mu_\pi=(8,3,0)$ and the following sequence of
  polynomials and block coordinates:
  \begin{align*}
  \tau_0 &= \tau(3,4,5,6,7),& (0,3,8)\\
  \tau_1 &= \tau(3,4,5,6,7,8),& (0,3,9)\\
  \tau_2 &= \tau(4,5,6,7,8), & (0,4,9)\\
  \tau_3 &=  \tau(0,4,5,6,7,8) \propto \tau(3,4,5,6,7), & (1,4,9)
  \end{align*}
\begin{figure}[ht]
  \centering
\begin{tikzpicture}[scale=0.5]

  \path [fill] (0.5,6.5)
  circle (5pt) ++ (4,0)
  circle (5pt) ++ (1,0)
  circle (5pt) ++ (1,0)
  circle (5pt) ++ (1,0)
  circle (5pt) ++ (1,0)
  circle (5pt) ++ (3,0)
  node[anchor=west]  {$M_0=\Xi(0,3,8)  = M_{\MGH}(3,5)$};

  \path [fill] (0.5,5.5)
  circle (5pt) ++ (4,0)
  circle (5pt) ++ (1,0)
  circle (5pt) ++ (1,0)
  circle (5pt) ++ (1,0)
  circle (5pt) ++ (1,0)
  circle (5pt) ++ (1,0)
  circle (5pt) ++ (2,0)
  node[anchor=west]  {$M_1=\Xi(0,3,9) = M_{\MGH}(3,6)$};

  \path [fill] (0.5,4.5)
  circle (5pt) ++ (5,0)
  circle (5pt) ++ (1,0)
  circle (5pt) ++ (1,0)
  circle (5pt) ++ (1,0)
  circle (5pt) ++ (1,0)
  circle (5pt) ++ (2,0)
  node[anchor=west]  {$M_2=\Xi(0,4,9) = M_{\MGH}(4,5)$};

  \path [fill] (0.5,3.5)
  circle (5pt) ++ (1,0)
  circle (5pt) ++ (4,0)
  circle (5pt) ++ (1,0)
  circle (5pt) ++ (1,0)
  circle (5pt) ++ (1,0)
  circle (5pt) ++ (1,0)
  circle (5pt) ++ (2,0)
  node[anchor=west]  {$M_3=\Xi(1,4,9)=M_{\MGH}(3,5)+1$};

  \draw  (0,3) grid +(11 ,4);
  \draw[line width=2pt] (1,3) -- ++ (0,4);
  \draw[line width=2pt] (10,3) -- ++ (0,4);

  \foreach \x in {-1,...,9} \draw (\x+1.5,2.5)  node {$\x$};

\end{tikzpicture}
  
\caption{A cycle generated by the $(3,1)$-cyclic  Maya diagram
  $M_{\MGH}(3,5)$ and  permutation $\pi=(210)$. }
  \label{fig:31cyclic}
\end{figure}

\noindent
Hence, by \eqref{eq:HM2w}
\[ (a_0,a_1,a_2) = (10,6,-18),\qquad (\sigma_0,\sigma_1,\sigma_2) =
  (-1,1,-1)\] Applying \eqref{eq:witaui} gives the following rational
solution of the 3-cyclic dressing chain \eqref{eq:wachain}:
\[
  w_0 = z + \frac{\tau_1'}{\tau_1} - \frac{\tau_{0}'}{\tau_{0}},\quad
  w_1 = -z + \frac{\tau_2'}{\tau_2} - \frac{\tau_{1}'}{\tau_{1}}\quad
  w_2 = z + \frac{\tau_0'}{\tau_0} - \frac{\tau_{2}'}{\tau_{2}},\qquad
  w_i=w_i(z),\; \tau_i = \tau_i(z).
\]
Applying \eqref{eq:ffromw} and \eqref{eq:f123y} gives the following
rational solution of \PIV:
\[ y(t) = \frac{d}{dt} \log \frac{\tau_0(t)}{\tau_2(t)},\quad a= \frac12(a_1-a_2)=12,\;
  b=-\frac{a_0^2}{2} = -50 .\]
\end{example}

\begin{example}
  The Maya diagram $M_{\MO}(3,2)$ has the canonical flip sequence
  \[ \mu=\Theta(0|3|2)=(0,10,8).\] Using the permutation $(2,0,1)$ by
  way of example, we generate the following sequence of polynomials
  and block coordinates:
\begin{align*}
  \tau_0 &= \tau(1,2,4,5,7), & (0|3|2)\\
  \tau_1 &= \tau(1,2,4,5,7,8), & (0|3|3)\\
  \tau_2 &= \tau(0,1,2,4,5,7,8), & (1|3|3)\\
  \tau_3 &= \tau(0,1,2,4,5,7,8,10)\propto
           \tau(1,2,4,5,7) & (1|4|3)\\
\end{align*}
\begin{figure}[ht]
  \centering
\begin{tikzpicture}[scale=0.5]

  \path [fill] (0.5,6.5)  circle (5pt);
  \path [fill,color=blue] (1.5,6.5)
  circle (5pt) ++ (3,0)
  circle (5pt) ++ (3,0)
  circle (5pt) ++ (3,0)
  circle (5pt);
  \path [fill,color=red] (2.5,6.5)
  circle (5pt) ++ (3,0)
  circle (5pt) ++ (3,0)
  circle (5pt);

  \path [fill] (0.5,5.5)  circle (5pt);
  \path [fill,color=blue] (1.5,5.5)
  circle (5pt) ++ (3,0)
  circle (5pt) ++ (3,0)
  circle (5pt) ++ (3,0)
  circle (5pt);
  \path [fill,color=red] (2.5,5.5)
  circle (5pt) ++ (3,0)
  circle (5pt) ++ (3,0)
  circle (5pt) ++ (3,0)
  circle (5pt);

  \path [fill] (0.5,4.5)
  circle (5pt) ++ (3,0)
  circle (5pt);
  \path [fill,color=blue] (1.5,4.5)
  circle (5pt) ++ (3,0)
  circle (5pt) ++ (3,0)
  circle (5pt) ++ (3,0)
  circle (5pt);
  \path [fill,color=red] (2.5,4.5)
  circle (5pt) ++ (3,0)
  circle (5pt) ++ (3,0)
  circle (5pt) ++ (3,0)
  circle (5pt);

  \path [fill] (0.5,3.5)  circle (5pt) ++ (3,0)   circle (5pt) ;
  \path [fill,color=blue] (1.5,3.5)
  circle (5pt) ++ (3,0)
  circle (5pt) ++ (3,0)
  circle (5pt) ++ (3,0)
  circle (5pt) ++ (3,0)
  circle (5pt);
  \path [fill,color=red] (2.5,3.5)
  circle (5pt) ++ (3,0)
  circle (5pt) ++ (3,0)
  circle (5pt) ++ (3,0)
  circle (5pt);
  
  \draw (16.5,6.5)  node[anchor=west]  {$M_0=\Xi(0|3|2)= M_{\MO}(3,2)$};
  \draw (16.5,5.5)  node[anchor=west]  {$M_1=\Xi(0|3|3)=M_{\MO}(3,3)$};
  \draw (16.5,4.5)  node[anchor=west] {$M_2=\Xi(1|3|3)=M_{\MO}(2,2)+3$};
  \draw (16.5,3.5)  node[anchor=west]  {$M_3=\Xi(1|4|3)=M_{\MO}(3,2)+3$};

  \draw  (0,3) grid +(16 ,4);
  \draw[line width=2pt] (3,3) -- ++ (0,4);
  \draw[line width=2pt] (15,3) -- ++ (0,4);

  \foreach \x in {-3,...,12} \draw (\x+3.5,2.5)  node {$\x$};
\end{tikzpicture}
  
\caption{A cycle generated by the $(3,3)$-cyclic  Maya diagram
  $M_{\MO}(3,2)$ and  permutation $\pi=(201)$. }
  \label{fig:31cyclic}
\end{figure}

\noindent
Hence, by \eqref{eq:HM2w}
\[ (a_0,a_1,a_2) = (16,-20,-2),\qquad (\sigma_0,\sigma_1,\sigma_2) =
  (-1,-1,-1)\] Applying \eqref{eq:witaui} gives the following rational
solution of the 3-cyclic dressing chain \eqref{eq:wachain}:
\[
  w_0 = \frac{\tau_1'}{\tau_1} - \frac{\tau_{0}'}{\tau_{0}},\quad w_1
  = \frac{\tau_2'}{\tau_2} - \frac{\tau_{1}'}{\tau_{1}}\quad w_2 =
  \frac{\tau_0'}{\tau_0} - \frac{\tau_{2}'}{\tau_{2}}.
\]

Applying \eqref{eq:ffromw} and \eqref{eq:f123y} gives the following
rational solution of \PIV:
\[ y(t) = -\frac{2}{3}t + \frac{d}{dt} \log
  \frac{\tau_2(Kt)}{\tau_0(Kt)}, \; K=\frac{1}{\sqrt{3}},\quad a=
  \frac16(a_1-a_2)=-3,\; b=-\frac{a_0^2}{18} = -\frac{128}{9} .\]
\end{example}

For each of the above classes of solutions, observe that the
permutations $(\pi_0,\pi_1,\pi_2)$ and $(\pi_1,\pi_0,\pi_2)$, while
producing distinct solutions of the $A_2$ system, give the same
solution of \PIV.  This means that there are 3 distinct $\MGH$ classes
and 3 distinct $\MO$ classes of rational solution of \PIV.  These are
enumerated below.

\begin{align*}
  y &= \frac{d}{dt} \log
      \frac{\tau_{\MGH(n_1,n_2)}(t)}{\tau_{\MGH(n_1,n_2+1)}(t)},
  && a =-(1+n_1+2n_2),\; b = -2 n_1^2,\\
  y &= \frac{d}{dt} \log
      \frac{\tau_{\MGH(n_1,n_2)}(t)}{\tau_{\MGH(n_1-1,n_2)}(t)},
  &&   a=2n_1+n_2-1,\; b=-2n_2^2\\
  y &= -2t+\frac{d}{dt} \log
      \frac{\tau_{\MGH(n_1,n_2)}(t)}{\tau_{\MGH(n_1+1,n_2-1)}(t)},
  && a=n_2-n_1-1,\;  b = -2(n_1+n_2)^2,\\
  y &= -\frac23t+\frac{d}{dt} \log
      \frac{\tau_{\MO(n_1,n_2)}(z)}{\tau_{\MO(n_1-1,n_2-1)}(z)},\; z = \frac{t}{\sqrt{3}}
      ,&&   a= n_1+n_2,\;  b = -\frac29
          (1-3n_1+3n_2)^2  \\
  y &=  -\frac23t+\frac{d}{dt} \log
      \frac{ \tau_{\MO(n_1,n_2)}(z)}{ \tau_{\MO(n_1+1,n_2)}(z)},
  &&   a= -1-2n_1+n_2,\;  b = -\frac29 (2+3n_2)^2  \\
  y &=  -\frac23t+\frac{d}{dt}
      \frac{ \tau_{\MO(n_1,n_2)}(z)}{ \tau_{\MO(n_1,n_2+1)}(z)},
  &&   a= -2-2n_2+n_1,\;  b = -\frac29
     (1+3n_1)^2  
\end{align*}

\subsection{Painlev\'e V}
\label{sect:PV}

The fifth Painlev\'e equation is a second-order scalar non-autonomous,
non-linear differential equation, usually given as
\begin{equation}
  \label{eq:P5scalar}
  y'' = (y')^2\left(\frac{1}{2y}+ \frac{1}{y-1}\right) -\frac{y'}{t} +
  \frac{(y-1)^2}{t^2} \left(ay+\frac{b}{y}\right)+\frac{cy}{t} +
  \frac{dy(y+1)}{y-1},\quad y=y(t),
\end{equation}
where $a,b,c,d$ are complex-valued parameters. An equivalent form is
\begin{align*}
  \phi &= \alpha_0(y-1) + \alpha_2 \left(1-\frac1y\right) - t\,
         \frac{y'}{y},\quad\phi=\phi(t) \\
  \phi' &= \frac{1}{2t} \left(\frac{\phi(y(2-\phi)-\phi-2)}{y-1} +
          2\phi(\alpha_0+\alpha_2-1) - 2 c t\right) - d t\,
          \frac{y+1}{y-1}\intertext{where}
  a &= \frac{\alpha_0^2}{2},\; b= -\frac{\alpha_2^2}{2}          
\end{align*}

The $A_3$ Noumi-Yamada system, the specialization of \eqref{eq:Aodd}
with $n=2$, has the form
\begin{equation}
  \label{eq:A3system}
  \begin{aligned}
    x f_0'  &= 2 f_0 f_2 (f_1-f_3) + (1-2\alpha_2) f_0 + 2 \alpha_0 f_2 \\
    x f_1'  &= 2 f_1 f_3 (f_2-f_0) + (1-2\alpha_3) f_1 + 2 \alpha_1 f_3 \\
    x f_2'  &= 2 f_0 f_2 (f_3-f_1) + (1-2\alpha_0) f_2 + 2 \alpha_2 f_0 \\
    x f_3'  &= 2 f_1 f_3 (f_0-f_2) + (1-2\alpha_1) f_3 + 2 \alpha_3 f_1 
  \end{aligned}
\end{equation}
where $f_i=f_i(x),\; i=0,1,2,3$, and which is
subject to normalizations
\begin{equation}
  \label{eq:A3norm}
  f_0 + f_2 = f_1+ f_3 = \frac{x}{2} ,\qquad \alpha_0 + \alpha_1 +
  \alpha_2 + \alpha_3 = 1.
\end{equation}
The transformation of \eqref{eq:A3system} to \eqref{eq:P5scalar} is
given by the following relations:
\begin{equation}
  \label{eq:yP5fromf}
  \begin{aligned}
    y &= -\frac{f_2}{f_0},\qquad \phi=\frac12 x (f_1-f_3) ,\\\
    y&=y(t),\;\phi=\phi(t),\quad f_{i} = f_{i}(x),\quad t=
    \frac{x^2}{\Delta}\\
    a &= \frac{\alpha_0^2}{2},\; b= -\frac{\alpha_2^2}{2},\;
    c= \frac{\Delta}{4}(\alpha_3-\alpha_1) ,\; d=a
    -\frac{\Delta^2}{32}.
  \end{aligned}
\end{equation}

The 4-cyclic Maya diagrams fall into exactly one of three classes:
\[
  \begin{array}{cccc}
    k & (p_0,\ldots, p_{k-1})& (\beta_0,\beta_1, \beta_2,\beta_3)
    &  (\mu_0,\mu_1, \mu_{2},\mu_3)     \\
    2 &(3,1)& (0,n_1,n_1+n_2|n_3) & (0,2n_1, 2n_1+2n_2,2n_3+1) \\
    2 &(1,3)& (0|n_1,n_1+n_2,n_1+n_2+n_3)
    & (0,2n_1+1,2n_1+2n_2+1,2n_1+2n_2+2n_3+1)    \\ 
    3 &(1,1,1,1) & (0|n_1|n_2|n_3)
      &(0,2n_1+1, 2n_2+2,2n_2+3)
\end{array}
\]
We will denote the corresponding Maya diagrams as
\begin{align*}
  M(n_1,n_2|n_3)  &= \Xi(0,n_1,n_1+n_2|n_3)\\
  M(|n_1,n_2,n_3)&= \Xi(0|n_1,n_1+n_2,n_1+n_2+n_3)\\
  M(|n_1|n_2|n_3)&= \Xi(0|n_1|n_2|n_3)
\end{align*}
For example (see \eqref{eq:tauWr} for definition):
\begin{align*}
  \tau_{M(3,1|2)} &= \tau(1,3,6),\\
  \tau_{M(|3,1,2)} &= \tau(1,3,5,9,11),\\
  \tau_{M(|3|1|2)} &= \tau(1,2,3,5,7,9)
\end{align*}

Having chosen one of the above polynomials as $\tau_0$, there are
$24=4!$ distinct rational solutions of the $A_3$ system
\eqref{eq:P4system} corresponding the possible permutations of the
canonical flip sites $(\mu_0,\mu_1,\mu_2,\mu_4)$.  However, the
projection from the set of $A_3$ solutions to the set of \PV solutions
is not one-to-one.  The action of the permutation group $\fS_4$ on the
set of \PV solutions has non-trivial isotropy corresponding to the
Klein 4-group: $(0,1,2,3), (1,0,2,3), (0,1,3,2), (1,0,3,2)$.
Therefore each of the above $\tau$-functions generates $6=24/4$
distinct rational solutions of \PV.

\begin{example}\label{ex:42}
  We exhibit a $(4,2)$-cyclic Maya diagram in the signature class
  $(3,1)$ by taking
  \[M_0=M(3,1|2)=\Xi(0,3,4|2),\] depicted in the first row of
  Figure~\ref{fig:42cyclic}.  The canonical flip sequence is
  $\bmu=(0,6,8,{\color{blue}5})$.  By way of example, we choose the permutation
  $(0132)$, which gives the chain of Maya diagrams shown in Figure
  \ref{fig:42cyclic} and the following $\tau$ functions:
  \begin{align*}
    \tau_0 &= \tau(1,3,6) \propto z(8z^6-12z^4-6z^2-3)\\
    \tau_1 &= \tau(0,1,3,6)\propto 4z^4-4z^2-1 \\
    \tau_2 &= \tau(0,1,3)\propto z\\
    \tau_3 &= \tau(0,1,3,5)\propto z^3\\
    \tau_4 &= \tau(0,1,3,5,8)\propto \tau(1,3,6)
  \end{align*}
  By \eqref{eq:HM2w},
  \[ (a_0,\ldots, a_3) = (-12,2,-6,12),\qquad (\sigma_0,\ldots,
    \sigma_3) = (-1,1,-1,-1) \] Hence, by \eqref{eq:witaui} and
  \eqref{eq:ffromw},
  \[
    \begin{aligned}
      f_0 &= \frac{6x^5-24x^3-24x}{x^6-6x^4-12x^2-24}\\
      f_1 &= -\frac{3}{x}+ \frac{4x^3-8x}{x^4-4x^2-4}
    \end{aligned}\] with $f_2,f_3$ given by \eqref{eq:A3norm}. By
  \eqref{eq:yP5fromf} the corresponding rational solution of \PV
  \eqref{eq:P5scalar} is given by
  \[
    \begin{aligned}
      y &= \frac76-\frac{t}{3}+ \frac{4t+2}{12t^2-12t-3}\\
      a&= \frac92,\; b= -\frac98,\; c= -\frac52,\; d= -\frac12
    \end{aligned}
    \]

  \begin{figure}[ht]
    \centering
    \begin{tikzpicture}[scale=0.5]
      \path [fill,blue] (0.5,6.5)
      circle (5pt)  ++(2,0)
      circle (5pt)  ++(2,0)
      circle (5pt);
      \path [fill,black] (-0.5,6.5)
      circle (5pt)  ++(8,0)
      circle (5pt)  ++(4,0)
      node[anchor=west] {$M_0=\Xi\,(0,3,4|2)=M(3,1|2)$};
      
      \path [fill,blue] (0.5,5.5)
      circle (5pt)  ++(2,0)
      circle (5pt)  ++(2,0)
      circle (5pt);
      \path [fill,black] (-0.5,5.5)
      circle (5pt)  ++(2,0)
      circle (5pt)  ++(6,0)
      circle (5pt)  ++(4,0)
      node[anchor=west] {$M_1=\Xi\,(1,3,4|2)=M(2,1|1)+2$};
      
      \path [fill,blue] (0.5,4.5)
      circle (5pt)  ++(2,0)
      circle (5pt)  ++(2,0)
      circle (5pt);
      \path [fill,black] (-0.5,4.5)
      circle (5pt)  ++(2,0)
      circle (5pt)  ++(10,0)
      node[anchor=west] {$M_2=\Xi\,(1,4,4|2)=M(3,0|1)+2$};
      
      \path [fill,blue] (0.5,3.5)
      circle (5pt)  ++(2,0)
      circle (5pt)  ++(2,0)
      circle (5pt)  ++(2,0)
      circle (5pt);
      \path [fill,black] (-0.5,3.5)
      circle (5pt)  ++(2,0)
      circle (5pt)  ++(10,0)
      node[anchor=west] {$M_2=\Xi\,(1,4,4|3)=M(3,0|2)+2$};
      
      \path [fill,blue] (0.5,2.5)
      circle (5pt)  ++(2,0)
      circle (5pt)  ++(2,0)
      circle (5pt)  ++(2,0)
      circle (5pt);
      \path [fill,black] (-0.5,2.5)
      circle (5pt)  ++(2,0)
      circle (5pt)  ++(8,0)
      circle (5pt)  ++(2,0)
      node[anchor=west] {$M_2=\Xi\,(1,4,5|3)=M(3,1|2)+2$};

      \draw  (-1,2) grid +(12 ,5);
      \draw[line width=2pt] (1,2) -- ++ (0,5);
      \draw[line width=2pt] (10,2) -- ++ (0,5);
      
      \foreach \x in {0,...,9} \draw (\x+1.5,1.5)  node {$\x$};
    \end{tikzpicture}
    \caption{A Maya $4$-cycle with shift $k=2$ for the choice
      $(n_1,n_2|n_3)=(3,1|2)$ and permutation $\bpi=(0,1,3,2)$. }
    \label{fig:42cyclic}
  \end{figure}
\end{example}

\begin{example}\label{ex:42b}
  We exhibit a $(4,2)$-cyclic Maya diagram in the signature class
  $(1,3)$ by taking
  \[M_0=M(|3,1,2)=\Xi(0|3,4,6),\] depicted in the first row of
  Figure~\ref{fig:42bcyclic}.  The canonical flip sequence is
  $\bmu=(0,{\color{blue} 7,9,13})$.  By way of example, we choose the
  permutation  
  $(0132)$, which gives the chain of Maya diagrams shown in Figure
  \ref{fig:42bcyclic} and the following $\tau$ functions:
  \begin{align*}
    \tau_0 &= \tau(1,3,5,9,11) \propto z^{15}(4z^4-36z^2+99)\\
    \tau_1 &= \tau(0,1,3,5,9,11)\propto z^{10}(4z^4-28z^2+63) \\
    \tau_2 &= \tau(0,1,3,5,7,9,11)\propto z^{15}\\
    \tau_3 &= \tau(0,1,3,5,7,9,11,13)\propto z^{21}\\
    \tau_4 &= \tau(0,1,3,5,7,11,13)\propto \tau(1,3,5,9,11)
  \end{align*}
  By \eqref{eq:HM2w},
  \[ (a_0,\ldots, a_3) = (-12,2,-6,12),\qquad (\sigma_0,\ldots,
    \sigma_3) = (-1,-1,-1,1) \] Hence, by \eqref{eq:witaui} and
  \eqref{eq:ffromw},
  \[
    \begin{aligned}
      f_0 &= \frac{x}{2} + \frac{4x^3-72x^2}{x^4-36x^2+396}\\
      f_1 &= -\frac{11}{x} +\frac{x}{2}+ \frac{4x^3-56x}{x^4-28x^2+252}
    \end{aligned}\] with $f_2,f_3$ given by \eqref{eq:A3norm}. By
  \eqref{eq:yP5fromf} the corresponding rational solution of \PV
  \eqref{eq:P5scalar} is given by
  \[
    y = \frac{8t-36}{4t^2-28t+63},\qquad a= \frac{49}{8},\; b= -2,\;
    c= -\frac{13}{2},\; d= -\frac12
    \]

  \begin{figure}[ht]
    \centering
    \begin{tikzpicture}[scale=0.5]
      \path [fill,blue] (0.5,6.5)
      circle (5pt)  ++(2,0)
      circle (5pt)  ++(2,0)
      circle (5pt)  ++(2,0)
      circle (5pt)  ++(4,0)
      circle (5pt)  ++(2,0)
      circle (5pt);
      \path [fill,black] (-0.5,6.5)
      circle (5pt)   ++(18,0)
      node[anchor=west] {$M_0=\Xi\,(0|3,4,6)=M(|3,1,2)$};
      
      \path [fill,blue] (0.5,5.5)
      circle (5pt)  ++(2,0)
      circle (5pt)  ++(2,0)
      circle (5pt)  ++(2,0)
      circle (5pt)  ++(4,0)
      circle (5pt)  ++(2,0)
      circle (5pt);
      \path [fill,black] (-0.5,5.5)
      circle (5pt)  ++(2,0)
      circle (5pt)   ++(16,0)
      node[anchor=west] {$M_1=\Xi\,(1|3,4,6)=M(|2,1,2)+2$};
      
      \path [fill,blue] (0.5,4.5)
      circle (5pt)  ++(2,0)
      circle (5pt)  ++(2,0)
      circle (5pt)  ++(2,0)
      circle (5pt)  ++(2,0)
      circle (5pt)  ++(2,0)
      circle (5pt)  ++(2,0)
      circle (5pt);
      \path [fill,black] (-0.5,4.5)
      circle (5pt)  ++(2,0)
      circle (5pt)   ++(16,0)
      node[anchor=west] {$M_2=\Xi\,(1|4,4,6)=M(|3,0,2)+2$};
      
      \path [fill,blue] (0.5,3.5)
      circle (5pt)  ++(2,0)
      circle (5pt)  ++(2,0)
      circle (5pt)  ++(2,0)
      circle (5pt)  ++(2,0)
      circle (5pt)  ++(2,0)
      circle (5pt)  ++(2,0)
      circle (5pt)  ++(2,0)
      circle (5pt);
      \path [fill,black] (-0.5,3.5)
      circle (5pt)  ++(2,0)
      circle (5pt)   ++(16,0)
      node[anchor=west] {$M_3=\Xi\,(1|4,4,7)=M(|3,0,3)+2$};
      
      \path [fill,blue] (0.5,2.5)
      circle (5pt)  ++(2,0)
      circle (5pt)  ++(2,0)
      circle (5pt)  ++(2,0)
      circle (5pt)  ++(2,0)
      circle (5pt)  ++(4,0)
      circle (5pt)  ++(2,0)
      circle (5pt);
      \path [fill,black] (-0.5,2.5)
      circle (5pt)  ++(2,0)
      circle (5pt)   ++(16,0)
      node[anchor=west] {$M_4=\Xi\,(1|4,5,7)=M(|3,1,2)+2$};

      \draw  (-1,2) grid +(17 ,5);
      \draw[line width=2pt] (1,2) -- ++ (0,5);
      \draw[line width=2pt] (15,2) -- ++ (0,5);
      
      \foreach \x in {0,...,14} \draw (\x+1.5,1.5)  node {$\x$};
    \end{tikzpicture}
    \caption{A Maya $4$-cycle with shift $k=2$ for the choice
      $(|n_1,n_2,n_3)=(|3,1,2)$ and permutation $\bpi=(0,1,3,2)$. }
    \label{fig:42bcyclic}
  \end{figure}
\end{example}

\begin{example}\label{ex:44}
  We exhibit a $(4,4)$-cyclic Maya diagram in the signature class
  $(1,1,1,1)$ by taking
  \[M_0=M(|3|1|2)=\Xi(0|3|1|2),\] depicted in the first row of
  Figure~\ref{fig:44cyclic}.  The canonical flip sequence is
  $\bmu=(0,{\color{blue}13},{\color{red}6},{\color{green}11})$.  By
  way of example, we choose the permutation 
  $(0132)$, which gives the chain of Maya diagrams shown in Figure
  \ref{fig:44cyclic} and the following $\tau$ functions:
  \begin{align*}
    \tau_0 &= \tau(1,2,3,5,7,9) \propto z^{10}(2z^2+9)\\
    \tau_1 &= \tau(0,1,2,3,5,7,9)\propto z^{6} \\
    \tau_2 &= \tau(0,1,2,3,5,7,9,13)\propto z^{10}(2z^2-9)\\
    \tau_3 &= \tau(0,1,2,3,5,7,9,11,13)\propto z^{15}\\
    \tau_4 &= \tau(0,1,2,3,5,6,7,9,11,13)\propto \tau(1,2,3,5,7,9)
  \end{align*}
  By \eqref{eq:HM2w},
  \[ (a_0,\ldots, a_3) = (-26,4,10,4),\qquad (\sigma_0,\ldots,
    \sigma_3) = (-1,-1,-1,-1) \] Hence, by \eqref{eq:witaui} and
  \eqref{eq:ffromw},
  \[
    \begin{aligned}
      f_0 &= -\frac{1}{x-6} -\frac{1}{x+6}+\frac{x}{4}+ \frac{2x}{x^2+36}\\
      f_1 &= -\frac{9}{x} +\frac{x}{4},
    \end{aligned}\] with $f_2,f_3$ given by \eqref{eq:A3norm}. By
  \eqref{eq:yP5fromf} the corresponding rational solution of \PV
  \eqref{eq:P5scalar} is given by
  \[
    y = -1 -\frac{72}{4t^2-117},\qquad a= \frac{169}{32},\; b= -\frac{25}{32},\;
    c=0,\; d= -2
    \]

  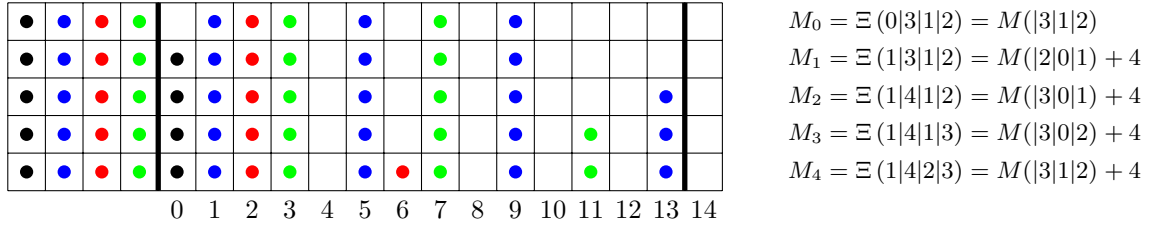
\begin{figure}[ht]
    \centering
    \begin{tikzpicture}[scale=0.5]
      \path [fill,black] (-2.5,6.5)
      circle (5pt)   ++(20,0)
      node[anchor=west] {\small$M_0=\Xi\,(0|3|1|2)=M(|3|1|2)$};
      \path [fill,blue] (-1.5,6.5)
      circle (5pt)  ++(4,0)
      circle (5pt)  ++(4,0)
      circle (5pt)  ++(4,0)
      circle (5pt);
      \path [fill,red] (-0.5,6.5)
      circle (5pt)  ++(4,0)
      circle (5pt);
      \path [fill,green] (0.5,6.5)
      circle (5pt)  ++(4,0)
      circle (5pt)  ++(4,0)
      circle (5pt);
      
      \path [fill,black] (-2.5,5.5)
      circle (5pt)   ++(4,0)
      circle (5pt)   ++(16,0)
      node[anchor=west] {\small$M_1=\Xi\,(1|3|1|2)=M(|2|0|1)+4$};
      \path [fill,blue] (-1.5,5.5)
      circle (5pt)  ++(4,0)
      circle (5pt)  ++(4,0)
      circle (5pt)  ++(4,0)
      circle (5pt);
      \path [fill,red] (-0.5,5.5)
      circle (5pt)  ++(4,0)
      circle (5pt);
      \path [fill,green] (0.5,5.5)
      circle (5pt)  ++(4,0)
      circle (5pt)  ++(4,0)
      circle (5pt);
      
      \path [fill,black] (-2.5,4.5)
      circle (5pt)   ++(4,0)
      circle (5pt)   ++(16,0)
      node[anchor=west] {\small$M_2=\Xi\,(1|4|1|2)=M(|3|0|1)+4$};
      \path [fill,blue] (-1.5,4.5)
      circle (5pt)  ++(4,0)
      circle (5pt)  ++(4,0)
      circle (5pt)  ++(4,0)
      circle (5pt)  ++(4,0)
      circle (5pt);
      \path [fill,red] (-0.5,4.5)
      circle (5pt)  ++(4,0)
      circle (5pt);
      \path [fill,green] (0.5,4.5)
      circle (5pt)  ++(4,0)
      circle (5pt)  ++(4,0)
      circle (5pt);

      \path [fill,black] (-2.5,3.5)
      circle (5pt)   ++(4,0)
      circle (5pt)   ++(16,0)
      node[anchor=west] {\small$M_3=\Xi\,(1|4|1|3)=M(|3|0|2)+4$};
      \path [fill,blue] (-1.5,3.5)
      circle (5pt)  ++(4,0)
      circle (5pt)  ++(4,0)
      circle (5pt)  ++(4,0)
      circle (5pt)  ++(4,0)
      circle (5pt);
      \path [fill,red] (-0.5,3.5)
      circle (5pt)  ++(4,0)
      circle (5pt);
      \path [fill,green] (0.5,3.5)
      circle (5pt)  ++(4,0)
      circle (5pt)  ++(4,0)
      circle (5pt)  ++(4,0)
      circle (5pt);

      \path [fill,black] (-2.5,2.5)
      circle (5pt)   ++(4,0)
      circle (5pt)   ++(16,0)
      node[anchor=west] {\small$M_4=\Xi\,(1|4|2|3)=M(|3|1|2)+4$};
      \path [fill,blue] (-1.5,2.5)
      circle (5pt)  ++(4,0)
      circle (5pt)  ++(4,0)
      circle (5pt)  ++(4,0)
      circle (5pt)  ++(4,0)
      circle (5pt);
      \path [fill,red] (-0.5,2.5)
      circle (5pt)  ++(4,0)
      circle (5pt)  ++(4,0)
      circle (5pt);
      \path [fill,green] (0.5,2.5)
      circle (5pt)  ++(4,0)
      circle (5pt)  ++(4,0)
      circle (5pt)  ++(4,0)
      circle (5pt);

      \draw  (-3,2) grid +(19 ,5);
      \draw[line width=2pt] (1,2) -- ++ (0,5);
      \draw[line width=2pt] (15,2) -- ++ (0,5);
      
      \foreach \x in {0,...,14} \draw (\x+1.5,1.5)  node {$\x$};
    \end{tikzpicture}
    \caption{A Maya $4$-cycle with shift $k=4$ for the choice
      $(|n_1|n_2|n_3)=(|3|1|2)$ and permutation $\bpi=(0,1,3,2)$. }
    \label{fig:44cyclic}
  \end{figure}
\end{example}

\subsection{Rational solutions of the $A_4$ Noumi-Yamada system}
In this section describe, the rational Hermite-type solutions to the
$A_4$-Painlev\'e system, and give examples in each signature class.

The $A_4$  Painlev\'e system consists of 5  equations in 5 unknowns $f_i=f_i(x),\; i=0,\ldots, 4$ and complex parameters $\alpha_i,\;i=0,\ldots, 5$:
\begin{equation}\label{eq:A4system}
  \begin{aligned}
    f_0' &= f_0(f_1-f_2+f_3-f_4) + \alpha_0\\
    f_1' &= f_1(f_2-f_3+f_4-f_0) + \alpha_1 \\
    f_2' &= f_2(f_3-f_4+f_0-f_1) + \alpha_2, \\
    f_3' &= f_3(f_4-f_0+f_1-f_2) + \alpha_3 \\
    f_4' &= f_4(f_0-f_1+f_2-f_3) + \alpha_4 
  \end{aligned}
\end{equation}
with normalization
\[ f_0+f_1+f_2+f_3+f_4=x.\]

Rational Hermite-type solutions of the $A_4$ system
\eqref{eq:A4system} correspond to chains of $5$-cyclic Maya diagrams
belonging to one of the following signature classes:
\[ (5), (3,1,1), (1,3,1), (1,1,3), (1,1,1,1,1).\]
With the normalizations $\mu_0=0$, each $5$-cyclic Maya diagram may be
uniquely labelled by one of the above signatures, and a 4-tuple of
non-negative integers $(n_1, n_2, n_3, n_4)$.  For each of the above
signatures, the corresponding $k$-block coordinates are then given by
\begin{align*}
  &k=1 & &(5)& M(n_1,n_2,n_3,n_4)&:=\Xi(0,n_1,n_1+n_2, n_1+n_2+n_3,
                 n_1+n_2+n_3+n_4) \\  
  &k=3 & &(3,1,1) & M(n_1,n_2|n_3|n_4)&:=\Xi(0, n_1 , n_1+ n_2 | n_3 | n_4)\\
  &k=3 & &(1,3,1)  &M(|n_1,n_2,n_3|n_4)&:=\Xi(0| n_1, n_1+n_2,
                      n_1+n_2+n_3 | n_4)\\ 
  &k=3 & &(1,1,3)  &M(|n_1|n_2,n_3,n_4)&:=\Xi(0| n_1 | n_2, n_2+n_3,
                      n_2+n_3+n_4 )\\ 
  &k=5 & &(1,1,1,1,1) & M(|n_1|n_2|n_3|n_4)&:=\Xi(0| n_1 | n_2| n_3| n_4)
\end{align*}
Below, we exhibit examples belonging to each of these classes.

\begin{example}\label{ex:51}
  We exhibit a $(5,1)$-cyclic Maya diagram in the signature class
  $(5)$ by taking
  \[M_0=M(2,3,1,1)=\Xi(0,2,5,6,7),\] depicted in the first row of
  Figure~\ref{fig:51cyclic}.  The canonical flip sequence is
  $\bmu=(0,2,5,6,7)$.  By way of example, we choose the permutation
  $(34210)$, which gives the chain of Maya diagrams shown in Figure
  \ref{fig:51cyclic}.  Note that the permutation specifies the
  sequence of block coordinates that get shifted by one at each step
  of the cycle. This type of solutions with signature $(5)$ were
  already studied in \cite{filipuk2008symmetric}, and they are based
  on a genus 2 generalization of the generalized Hermite polynomials
  that appear in the solution of \PIV ($A_2$-Painlev\'e).
\begin{figure}[ht]
  \centering
\begin{tikzpicture}[scale=0.5]

  \path [fill] (0.5,6.5) ++(3,0)
  circle (5pt) ++(1,0)
  circle (5pt) ++ (1,0) circle (5pt)++ (2,0)
  circle (5pt) ++ (3,0)  node[anchor=west] {
    $M_0=\Xi(0,2,5,6,7)=M(2,3,1,1)$};

  \path [fill] (0.5,5.5) ++(3,0) circle (5pt) ++(1,0) circle (5pt) ++
  (1,0) circle (5pt) ++ (5,0)  node[anchor=west] {
    $M_1=\Xi(0,2,5,7,7)=M(2,3,2,0)$};

  \path [fill] (0.5,4.5) ++(3,0)
  circle (5pt) ++(1,0)
  circle (5pt) ++ (1,0) circle (5pt)++ (3,0)
  circle (5pt) ++ (2,0)  node[anchor=west] {
    $M_2=\Xi\,(0,2,5,7,8)=M(2,3,2,1)$};

  \path [fill] (0.5,3.5) ++(3,0)
  circle (5pt) ++(1,0)
  circle (5pt) ++ (1,0) circle (5pt) ++(1,0) circle (5pt) ++ (2,0)
  circle (5pt) ++ (2,0)  node[anchor=west] {
    $M_3=\Xi\,(0,2,6,7,8)=M(2,4,1,1)$};

  \path [fill] (0.5,2.5)  ++(4,0)
  circle (5pt) ++ (1,0) circle (5pt) ++(1,0) circle (5pt) ++ (2,0)
  circle (5pt) ++ (2,0)  node[anchor=west] {
    $M_4=\Xi\,(0,3,6,7,8)=M(3,3,1,1)$};

  \path [fill] (0.5,1.5)  ++(1,0) circle (5pt) ++(3,0)
  circle (5pt) ++ (1,0) circle (5pt) ++(1,0) circle (5pt) ++ (2,0)
  circle (5pt) ++ (2,0)  node[anchor=west] {
    $M_5=\Xi\,(1,3,6,7,8)=M(2,3,1,1)+1$};

  \path [fill] (0.5,1.5) ++(0,0) circle (5pt)
  ++(0,1) circle (5pt)
  ++(0,1) circle (5pt)
  ++(0,1) circle (5pt)
  ++(0,1) circle (5pt)
  ++(0,1) circle (5pt);

  \draw  (0,1) grid +(10 ,6);
  \draw[line width=2pt] (1,1) -- ++ (0,6);
  \draw[line width=2pt] (9,1) -- ++ (0,6);

  \foreach \x in {-1,...,8} \draw (\x+1.5,0.5)  node {$\x$};

\end{tikzpicture}
  
\caption{A Maya $5$-cycle with shift $k=1$ for the choice $(n_1,n_2,n_3,n_4)=(2,3,1,1)$ and permutation $\bpi=(34210)$. }
  \label{fig:51cyclic}
\end{figure}
\end{example}

We shall now provide the explicit construction of the rational
solution to the $A_4$-Painlev\'e system \eqref{eq:A4system}, by using
Proposition~\ref{prop:wtof} and Proposition~\ref{prop:Mtauw}. The
permutation $\bpi=(34210)$ on the canonical sequence
$\bmu=(0,2,5,6,7)$ produces the flip sequence
$\bmu_\bpi=(6,7,5,2,0)$. The pseudo-Wronskians corresponding to each
Maya diagram in the cycle are ordinary Wronskians, which will always
be the case with the normalization imposed in
Remark~\ref{rem:normalization}. They read (see
Figure~\ref{fig:51cyclic}):
\begin{align*}
  \tau_0&=\tau(2,3,4,6)\\
  \tau_1&=\tau(2,3,4)\\
  \tau_2&=\tau(2,3,4,7)\\
  \tau_3&=\tau(2,3,4,5,7)\\
  \tau_4&=\tau(3,4,5,7)\\
  \tau_5&=\tau(0,3,4,5,7)\propto \tau(2,3,4,6)
\end{align*}
The rational solution to the $5$-cyclic dressing
chain \eqref{eq:wachain} is given by  \eqref{eq:HM2w},  with
\[ (\sigma_0,\ldots, \sigma_4) = (1,-1,-1,1,-1),\quad (a_0,\ldots,
  a_4) = (-2,4,6,4,-14).\] Finally, the corresponding rational
solution to the $A_4$-Painlev\'e system \eqref{eq:Aeven} is given by
\[
 f_i(x) = - (\sigma_i+\sigma_{i+1})\frac{x}{2} +\frac{1}{\sqrt{2}}
  \frac{d}{dx} \log \frac{\tau_{i}(z)}{\tau_{i+2}(z)} ,\quad z=
  \frac{x}{\sqrt{2}},\quad \alpha_i = -\frac{a_i}2,\quad
  i=0,1,\ldots, 4 \mod 5
\]
\begin{example}
  We construct a degenerate example belonging to the $(5)$ signature
  class, by choosing $(n_1,n_2,n_3,n_4)=(1,1,2,0)$. The presence of
  $n_4=0$ means $M_0$ and $M_4$ have genus 1 instead of the generic
  genus 2.  The degeneracy occurs because the canonical flip sequence
  $\bmu=(0,1,2,4,4)$ contains two flips at the same site. Choosing the
  permutation $(42130)$, by way of example, produces the chain of Maya
  diagrams shown in Figure \ref{fig:51cyclicdegen}.  The explicit
  construction of the rational solutions follows the same steps as in
  the previous example, and we shall omit it here. It is worth noting,
  however, that due to the degenerate character of the chain, three
  linear combinations of $f_0,\dots,f_4$ will provide a solution to
  the lower rank $A_2$-Painlev\'e. If the two flips at the same site
  are performed consecutively in the cycle, the embedding of
  $A_2$ into $A_4$ is trivial and corresponds to setting
  two consecutive $f_i$ to zero. This is not the case in this example,
  as the flip sequence is $\bmu_\bpi=(4,2,1,4,0)$, which produces a
  non-trivial embedding.
\begin{figure}[ht]
  \centering
\begin{tikzpicture}[scale=0.5]

  \path [fill] (0.5,6.5) 
  ++(2,0)circle (5pt)
  ++(5,0)  node[anchor=west] {  $M_0=\Xi\,(0,1,2,4,4)= M(1,1,2,0)$};

  \path [fill] (0.5,5.5) 
  ++(2,0) circle (5pt) 
  ++(3,0) circle (5pt)
  ++ (2,0)  node[anchor=west] {$M_1=\Xi\,(0,1,2,4,5)= M(1,1,2,1)$};

  \path [fill] (0.5,4.5) 
  ++(2,0) circle (5pt)
  ++(1,0) circle (5pt)
  ++(2,0) circle (5pt)
  ++(2,0)  node[anchor=west] {  $M_2=\Xi\,(0,1,3,4,5)=M(1,2,1,1)$};

  \path [fill] (0.5,3.5) 
  ++(3,0) circle (5pt)
  ++(2,0) circle (5pt)
  ++(2,0)  node[anchor=west] {  $M_3=\Xi\,(0,2,3,4,5)= M(2,1,1,1)$};

  \path [fill] (0.5,2.5) 
  ++(3,0) circle (5pt)
  ++(4,0)  node[anchor=west] {  $M_4=\Xi\,(0,2,3,5,5)= M(2,1,2,0)$};

  \path [fill] (0.5,1.5) 
  ++(1,0) circle (5pt)
  ++(2,0) circle (5pt)
  ++(4,0)  node[anchor=west] {  $M_5=\Xi\,(1,2,3,5,5)=M(1,1,2,0)+1$};

  \path [fill] (0.5,1.5) ++(0,0) circle (5pt)
  ++(0,1) circle (5pt)
  ++(0,1) circle (5pt)
  ++(0,1) circle (5pt)
  ++(0,1) circle (5pt)
  ++(0,1) circle (5pt);

  \draw  (0,1) grid +(7 ,6);
  \draw[line width=2pt] (1,1) -- ++ (0,6);
  \draw[line width=2pt] (6,1) -- ++ (0,6);

  \foreach \x in {-1,...,5} \draw (\x+1.5,0.5)  node {$\x$};

\end{tikzpicture}
  
  \caption{A degenerate  Maya $5$-cycle with $k=1$ for the choice  $(n_1,n_2,n_3,n_4)=(1,1,2,0)$ and permutation $\bpi=(42130)$.}
  \label{fig:51cyclicdegen}
\end{figure}
\end{example}

\begin{example}
We construct a $(5,3)$-cyclic Maya diagram in the signature class $(1,1,3)$ by choosing $(n_1,n_2,n_3,n_4)=(3,1,1,2)$, which means that the first Maya diagram has $3$-block coordinates $(0|3|1,2,4)$. The canonical flip sequence is given by $\bmu=\Theta\,(0|3|1,2,4)=(0, {\color{red}10},{\color{blue} 5,8,14})$.
The permutation $(41230)$ gives the chain of Maya diagrams shown in Figure
  \ref{fig:53cyclic}. Note that, as in Example~\ref{ex:51}, the permutation specifies the order in which the $3$-block coordinates are shifted by +1 in the subsequent steps of the cycle. This type of solutions in the signature class $(1,1,3)$ were not given in \cite{filipuk2008symmetric}, and they are new to the best of our knowledge.

\begin{figure}[ht]
  \centering
\begin{tikzpicture}[scale=0.5]
  \path [fill,red] (0.5,6.5)  
  ++(2,0) circle (5pt)
  ++(3,0) circle (5pt)
  ++(3,0) circle (5pt) ;

  \path [fill,blue] (0.5,6.5)  
  ++(3,0) circle (5pt)
  ++(6,0) circle (5pt) 
  ++(3,0) circle (5pt) ;

  \path (17.5,6.5)  node[anchor=west] {$M_0=\Xi\,(0|3|1,2,4)=M(|3|1,1,2)$};

  \path [fill,red] (0.5,5.5)  
  ++(2,0) circle (5pt)
  ++(3,0) circle (5pt)
  ++(3,0) circle (5pt) ;

  \path [fill,blue] (0.5,5.5)  
  ++(3,0) circle (5pt)
  ++(6,0) circle (5pt) 
  ++(3,0) circle (5pt) 
  ++(3,0) circle (5pt) ;
  \path (17.5,5.5)  node[anchor=west] {$M_1=\Xi\,(0|3|1,2,5)=M(|3|1,1,3)$};

  \path [fill,red] (0.5,4.5)  
  ++(2,0) circle (5pt)
  ++(3,0) circle (5pt)
  ++(3,0) circle (5pt)
  ++(3,0) circle (5pt) ;

  \path [fill,blue] (0.5,4.5)  
  ++(3,0) circle (5pt)
  ++(6,0) circle (5pt) 
  ++(3,0) circle (5pt) 
  ++(3,0) circle (5pt) ;
  \path (17.5,4.5)  node[anchor=west] {$M_2=\Xi\,(0|4|1,2,5)=M(|4|1,1,3)$};

  \path [fill,red] (0.5,3.5)  
  ++(2,0) circle (5pt)
  ++(3,0) circle (5pt)
  ++(3,0) circle (5pt)
  ++(3,0) circle (5pt) ;

  \path [fill,blue] (0.5,3.5)  
  ++(3,0) circle (5pt)
  ++(3,0) circle (5pt)
  ++(3,0) circle (5pt) 
  ++(3,0) circle (5pt) 
  ++(3,0) circle (5pt) ;
  \path (17.5,3.5)  node[anchor=west] {$M_3=\Xi\,(0|4|2,2,5)=M(|4|2,0,3)$};

  \path [fill,red] (0.5,2.5)  
  ++(2,0) circle (5pt)
  ++(3,0) circle (5pt)
  ++(3,0) circle (5pt)
  ++(3,0) circle (5pt) ;
  \path [fill,blue] (0.5,2.5)  
  ++(3,0) circle (5pt)
  ++(3,0) circle (5pt)
  ++(6,0) circle (5pt) 
  ++(3,0) circle (5pt) ;
  \path (17.5,2.5)  node[anchor=west] {$M_4=\Xi\,(0|4|2,3,5)=M(|4|2,1,2)$};

  \path [fill,black] (0.5,1.5)  
  ++(1,0) circle (5pt);  
  \path [fill,red] (0.5,1.5)  
  ++(2,0) circle (5pt)
  ++(3,0) circle (5pt)
  ++(3,0) circle (5pt)
  ++(3,0) circle (5pt) ;
  \path [fill,blue] (0.5,1.5)  
  ++(3,0) circle (5pt)
  ++(3,0) circle (5pt)
  ++(6,0) circle (5pt) 
  ++(3,0) circle (5pt) ;
  \path (17.5,1.5)  node[anchor=west] {$M_5=\Xi\,(1|4|2,3,5)=M(|3|1,1,2)+3$};

  \path [fill,blue] (0.5,1.5) ++(0,0) circle (5pt)
  ++(0,1) circle (5pt)
  ++(0,1) circle (5pt)
  ++(0,1) circle (5pt)
  ++(0,1) circle (5pt)
  ++(0,1) circle (5pt);

  \path [fill,red] (-.5,1.5) ++(0,0) circle (5pt)
  ++(0,1) circle (5pt)
  ++(0,1) circle (5pt)
  ++(0,1) circle (5pt)
  ++(0,1) circle (5pt)
  ++(0,1) circle (5pt);

  \path [fill,black] (-1.5,1.5) ++(0,0) circle (5pt)
  ++(0,1) circle (5pt)
  ++(0,1) circle (5pt)
  ++(0,1) circle (5pt)
  ++(0,1) circle (5pt)
  ++(0,1) circle (5pt);

  \draw  (-2,1) grid +(19 ,6);
  \draw[line width=2pt] (1,1) -- ++ (0,6);
  \draw[line width=2pt] (16,1) -- ++ (0,6);

  \foreach \x in {0,...,15} \draw (\x+1.5,0.5)  node {$\x$};
\end{tikzpicture}
\caption{A Maya $5$-cycle with shift $k=3$ for the choice
  $(|n_1|n_2,n_3,n_4)=(|3|1,1,2)$ and permutation $\bpi=(41230)$. }
  \label{fig:53cyclic}
\end{figure}

We proceed to build the explicit rational solution to the
$A_4$-Painlev\'e system \eqref{eq:A4system}. In this case, the
permutation $\bpi=(41230)$ on the canonical sequence
$\bmu=(0,10,5,8,14)$ produces the flip sequence
$\bmu_\bpi=(14,10,5,8,0)$, so that the values of the $a_i$
parameters given by \eqref{eq:HM2w} become
$(a_0,a_1,a_2,a_3,a_4)=(8,10,-6,16,-34)$. The
pseudo-Wronskians corresponding to each Maya diagram in the cycle are
ordinary Wronskians, which will always be the case with the
normalization imposed in Remark~\ref{rem:normalization}. They read
(see Figure~\ref{fig:53cyclic}):
\begin{align*}
\tau_0&=\tau(1,2,4,7,8,11)\\
\tau_1&=\tau(1,2,4,7,8,{11},{14})\\
\tau_2&=\tau(1,2,4,7,8,{10},{11},{14})\\
\tau_3&=\tau(1,2,4,5,7,8,{10},{11},{14})\\
\tau_4&=\tau(1,2,4,5,7,{10},{11},{14})
\end{align*}
The rational solution to the $5$-cyclic dressing
chain \eqref{eq:wachain} is given by  \eqref{eq:HM2w},  with
\[ (\sigma_0,\ldots, \sigma_4) = (-1,1,-1,-1,-1),\quad (a_0,\ldots,
  a_4) = (-6,-12,8,20,-16).\] The corresponding rational solution to
the $A_4$-Painlev\'e system \eqref{eq:Aeven} is given by
\[
  f_i(x) = - (\sigma_i+\sigma_{i+1})\frac{x}{6} +\frac{1}{\sqrt{6}}
  \frac{d}{dx} \log \frac{\tau_{i}(z)}{\tau_{i+2}(z)} ,\quad z=
  \frac{x}{\sqrt{6}},\quad \alpha_i = -\frac{a_i}6,\quad
  i=0,1,\ldots, 4 \mod 5
\]

\end{example}

\begin{example}
  We construct a $(5,5)$-cyclic Maya diagram in the signature class
  $(1,1,1,1,1)$ by choosing $(n_1n_2n_3n_4)=(2,3,0,1)$, which means
  that the first Maya diagram has $5$-block coordinates
  $(0|2|3|0|1)$. The canonical flip sequence is given by
  \[\bmu=\Theta\,(0|2|3|0|1)=(0,{\color{red}11},{\color{blue}17},{\color{brown}3},{\color{green}
      9}).\] The permutation $(32410)$ gives the chain of Maya
  diagrams shown in Figure \ref{fig:55cyclic}.  Note that, as it
  happens in the previous examples, the permutation specifies the
  order in which the $5$-block coordinates are shifted by +1 in the
  subsequent steps of the cycle. This type of solutions with signature
  $(1,1,1,1,1)$ were already studied in \cite{filipuk2008symmetric},
  and they are based on a generalization of the Okamoto polynomials
  that appear in the solution of \PIV ($A_2$-Painlev\'e).
    
\begin{figure}[ht]
  \centering
\begin{tikzpicture}[scale=0.4]
  \path [fill,red] (0.5,6.5)  
  ++(2,0) circle (5pt)
  ++(5,0) circle (5pt);
  \path [fill,blue] (0.5,6.5)  
  ++(3,0) circle (5pt)
  ++(5,0) circle (5pt)
  ++(5,0) circle (5pt) ;
  \path [fill,green] (0.5,6.5)  
  ++(5,0) circle (5pt);
  \path (20.5,6.5)  node[anchor=west] {$\small M_0=\Xi\,(0|2|3|0|1)=M(|2|3|0|1)$};

  \path [fill,red] (0.5,5.5)  
  ++(2,0) circle (5pt)
  ++(5,0) circle (5pt);
  \path [fill,blue] (0.5,5.5)  
  ++(3,0) circle (5pt)
  ++(5,0) circle (5pt)
  ++(5,0) circle (5pt) ;
  \path [fill,brown] (0.5,5.5)  
  ++(4,0) circle (5pt);
  \path [fill,green] (0.5,5.5)  
  ++(5,0) circle (5pt);
  \path (20.5,5.5)  node[anchor=west] {$\small M_1=\Xi\,(0|2|3|1|1)$};

  \path [fill,red] (0.5,4.5)  
  ++(2,0) circle (5pt)
  ++(5,0) circle (5pt);
  \path [fill,blue] (0.5,4.5)  
  ++(3,0) circle (5pt)
  ++(5,0) circle (5pt)
  ++(5,0) circle (5pt) 
  ++(5,0) circle (5pt) ;
  \path [fill,brown] (0.5,4.5)  
  ++(4,0) circle (5pt);
  \path [fill,green] (0.5,4.5)  
  ++(5,0) circle (5pt);
  \path (20.5,4.5)  node[anchor=west] {$\small M_2=\Xi\,(0|2|4|1|1)$};

  \path [fill,red] (0.5,3.5)  
  ++(2,0) circle (5pt)
  ++(5,0) circle (5pt);
  \path [fill,blue] (0.5,3.5)  
  ++(3,0) circle (5pt)
  ++(5,0) circle (5pt)
  ++(5,0) circle (5pt) 
  ++(5,0) circle (5pt) ;
  \path [fill,brown] (0.5,3.5)  
  ++(4,0) circle (5pt);
  \path [fill,green] (0.5,3.5)  
  ++(5,0) circle (5pt)
  ++(5,0) circle (5pt);
  \path (20.5,3.5)  node[anchor=west] {$\small M_3=\Xi\,(0|2|4|1|2)$};

  \path [fill,red] (0.5,2.5)  
  ++(2,0) circle (5pt)
  ++(5,0) circle (5pt)
  ++(5,0) circle (5pt);
  \path [fill,blue] (0.5,2.5)  
  ++(3,0) circle (5pt)
  ++(5,0) circle (5pt)
  ++(5,0) circle (5pt) 
  ++(5,0) circle (5pt) ;
  \path [fill,brown] (0.5,2.5)  
  ++(4,0) circle (5pt);
  \path [fill,green] (0.5,2.5)  
  ++(5,0) circle (5pt)
  ++(5,0) circle (5pt);
  \path (20.5,2.5)  node[anchor=west] {$\small M_4=\Xi\,(0|3|4|1|2)$};

  \path [fill,black] (0.5,1.5)  
  ++(1,0) circle (5pt);  
  \path [fill,red] (0.5,1.5)  
  ++(2,0) circle (5pt)
  ++(5,0) circle (5pt)
  ++(5,0) circle (5pt);
  \path [fill,blue] (0.5,1.5)  
  ++(3,0) circle (5pt)
  ++(5,0) circle (5pt)
  ++(5,0) circle (5pt) 
  ++(5,0) circle (5pt) ;
  \path [fill,brown] (0.5,1.5)  
  ++(4,0) circle (5pt);
  \path [fill,green] (0.5,1.5)  
  ++(5,0) circle (5pt)
  ++(5,0) circle (5pt);
  \path (20.5,1.5)  node[anchor=west] {$\small M_5=\Xi(1|3|4|1|2)=M(|2|3|0|1)+5$};

  \path [fill,green] (0.5,1.5) ++(0,0) circle (5pt)
  ++(0,1) circle (5pt)
  ++(0,1) circle (5pt)
  ++(0,1) circle (5pt)
  ++(0,1) circle (5pt)
  ++(0,1) circle (5pt);
  \path [fill,brown] (-0.5,1.5) ++(0,0) circle (5pt)
  ++(0,1) circle (5pt)
  ++(0,1) circle (5pt)
  ++(0,1) circle (5pt)
  ++(0,1) circle (5pt)
  ++(0,1) circle (5pt);
  \path [fill,blue] (-1.5,1.5) ++(0,0) circle (5pt)
  ++(0,1) circle (5pt)
  ++(0,1) circle (5pt)
  ++(0,1) circle (5pt)
  ++(0,1) circle (5pt)
  ++(0,1) circle (5pt);

  \path [fill,red] (-2.5,1.5) ++(0,0) circle (5pt)
  ++(0,1) circle (5pt)
  ++(0,1) circle (5pt)
  ++(0,1) circle (5pt)
  ++(0,1) circle (5pt)
  ++(0,1) circle (5pt);

  \path [fill,black] (-3.5,1.5) ++(0,0) circle (5pt)
  ++(0,1) circle (5pt)
  ++(0,1) circle (5pt)
  ++(0,1) circle (5pt)
  ++(0,1) circle (5pt)
  ++(0,1) circle (5pt);

  \draw  (-4,1) grid +(24 ,6);
  \draw[line width=2pt] (1,1) -- ++ (0,6);
  \draw[line width=2pt] (19,1) -- ++ (0,6);

  \foreach \x in {0,...,18} \draw (\x+1.5,0.5)  node {$\x$};
\end{tikzpicture}
  \caption{A Maya $5$-cycle with shift $k=5$ for the choice $(n_1,n_2,n_3,n_4)=(2,3,0,1)$ and permutation $\bpi=(32410)$.}
  \label{fig:55cyclic}
\end{figure}

We proceed to build the explicit rational solution to the
$A_4$-Painlev\'e system \eqref{eq:A4system}. In this case, the flip
sequence is given by $\bmu_\bpi=(3,17,9,11,0)$. The corresponding
Hermite Wronskians are shown below (see Figure \ref{fig:55cyclic}):
\begin{align*}
\tau&=\tau(1,2,4,6,7,{12})\\
\tau&=\tau(1,2,3,4,6,7,{12})\\
\tau&=\tau(1,2,3,4,6,7,{12},{17})\\
\tau&=\tau(1,2,3,4,6,7,9,{12},{17})\\
\tau&=\tau(1,2,3,4,6,7,9,{11},{12},{17})
\end{align*}
The rational solution to the $5$-cyclic dressing chain
\eqref{eq:wachain} is given by \eqref{eq:HM2w}, with
\[ (\sigma_0,\ldots, \sigma_4) = (-1,-1,-1,-1,-1),\quad (a_0,\ldots,
  a_4) =(-28,16,-4,22,-16).\] The corresponding rational solution to
the $A_4$-Painlev\'e system \eqref{eq:Aeven} is given by
\[
  f_i(x) =\frac{x}{5} +\frac{1}{\sqrt{10}}
  \frac{d}{dx} \log \frac{\tau_{i}(z)}{\tau_{i+2}(z)} ,\quad z=
  \frac{x}{\sqrt{10}},\quad \alpha_i = -\frac{a_i}{10},\quad
  i=0,1,\ldots, 4 \mod 5
\]

\end{example}

\section*{Acknowledgements}
The research of DGU has been supported in part by Spanish MINECO-FEDER
Grant MTM2015-65888-C4-3 and by the ICMAT-Severo Ochoa project
SEV-2015-0554. The research of RM was supported in part by NSERC grant
RGPIN-228057-2009. 

\strut\hfill


\providecommand{\bysame}{\leavevmode\hbox to3em{\hrulefill}\thinspace}
\providecommand{\MR}{\relax\ifhmode\unskip\space\fi MR }
\providecommand{\MRhref}[2]{%
  \href{http://www.ams.org/mathscinet-getitem?mr=#1}{#2}
}
\providecommand{\href}[2]{#2}

\end{document}